\newcommand\cF{{\cal F}}
\newcommand{\wh}{\widehat}
\def\bbr{{\mathbb R}}
\def\E{{\bf E}}
\def\P{{\bf P}}
\def\Chi{{\bf 1}}
\begin{document}

\title{Hedging problems for Asian options with  transactions costs 
}


\author{Serguei Pergamenchtchikov        \and
        Alena Shishkova 
}


\institute{Serguei Pergamenchtchikov  \at
              Laboratoire de Math\'ematiques Raphael Salem,
 UMR 6085 CNRS- Universit\'e de Rouen Normandie,  France
and
ILSSP\&QF, National Research Tomsk State University \\
              Tel.: 02 32 95 52 22\\
              Fax: 02 32 95 52 86\\
              \email{Serge.Pergamenchtchikov@univ-rouen.fr }           
           \and
           Alena Shishkova \at
             ILSSP\&QF, National Research Tomsk State University\\
              \email{alshishkovatomsk@gmail.com } 
}

\date{Received: date / Accepted: date}

\maketitle

\begin{abstract}
In this paper, we consider the problem of hedging Asian options in financial markets with transaction costs. For this we use the asymptotic hedging approach. The main task of asymptotic hedging in financial markets with transaction costs is to prove the probability convergence of the terminal value of the investment portfolio to the payment function when the number of portfolio revisions tends to be $n$ to infinity. In practice, this means that the investor, using such a strategy, is able to compensate payments for all financial transactions, even if their number increases unlimitedly.

\keywords{Hedging strategy \and Wiener process \and Asian option \and Stochastic differential equations \and Brownian bridge}
\end{abstract}

\section{Introduction}
\label{intro}

For a trader or an investor the main task is not only the saving but also the multiplication of its capital. Many risks can be avoided with the help of one popular and very effective technique – hedging. The option is hedged to protect its value from the risk of price movement of the underlying asset in an unfavorable direction. To solve the hedging problem stochastic calculus methods are used which became a powerful tool used in practice in the financial world. Stochastic calculus is a well-developed branch of modern mathematics with a “correct” approach to analyzing complex phenomena occurring on world stock markets. 

In the modern theory and practice of options the paper written by Black and Scholes \cite{shishkova_Bl_Sch} has an important role. In this work the authors used economic knowledge in combination with PDE arguments which are similar to deriving the heat equation from the first physical principles.
	
In our paper we use a probabilistic approach and our main tool is representation theorem for the Wiener process. This theorem was formulated by J.M.C. Clark in \cite{shishkova_clark} also it can be obtained from the representation theorem stated in the paper \cite{shishkova_Ito} by K. Ito.

It should be noted that the task of options pricing and the construction of a hedging strategy is well studied for American and European options, for such derivatives there is a so-called delta strategy. But this technique is not enough developed for Asian options.  Exotic options became more in demand in the late 1980s and early 1990s and their trade became more active in the over-the-counter market. Soon in the commodity and currency markets, Asian options were becoming popular. 

Mathematically, the value of an Asian option is reduced to calculating the conditional mathematical expectation of a payment function. Many authors have studieded in their work the Asian pricing problem. H. Geman and M. Yor (1993) were among the first to consider derivatives are based on the average prices of underlying assets \cite{Geman_1993}. Using the Bessel processes authors found the value of the Asian option. Moreover, applying simple probabilistic methods they obtained the following results about these options: calculated moments of all orders of the arithmetic average of the geometric Brownian motion; obtained simple, closed form expression of the Asian option price when the option is “in the money”. The exact pricing of fixed-strike Asian options is a difficult task, since the distribution of the average arithmetic of asset prices is unknown when its prices are distributed lognormally. 

L. C. G. Rogers  and Z. Shi (1995) in their work \cite{Rog_Shi_1995} to compute the price of an Asian option used two different ways. Firstly, exploiting a scaling property, they reduced the problem to the problem of solving a parabolic PDE in two variables. Secondly, authors provided a lower bound which is so accurate that it is essentially the true price. J. Vecer (2001) observed that the Asian option is a special case of the option on a traded account and extended work \cite{Shr_Vec_2000} to the arithmetic average Asian option \cite{Vec_2001}. Using probabilistic techniques he established that the price of the Asian option is characterized by a simple one-dimensional partial differential equation which could be applied to both continuous and discreteaverage Asian option. J.Vecer and M.Xu (2004) studied pricing Asian options in a semimartingale model \cite{Vec_Xu_2004}. They showed that the inherently path dependent problem of pricing Asian options can be transformed into a problem without path dependency in the payoff function. Authors also showed that the price satisfies a simpler integro-differential equation in the case the stock price is driven by a process with independent increments, Levy process being a special case. Pricing Asian options under Levy processes also considered in \cite{Shir_2008,Fusai_2008,Albrec_2004}. 

When explicit valuation formulas are not available, sharp lower and upper bounds intervals for option
prices can be useful in improving the quality of the approximations adopted: some results in this direction are provided in the papers \cite{Simon_2000,Vanm_2006,Zeng_2016}. In \cite{Niel_2003} authors considered geometric average Asian option and showed that the lower and upper bounds can be expressed as a portfolio of delayed payment European call options. Pricing Asian options with stochastic volatility considered in \cite{Fouq_2003,Huba_2016,Shi_2014}.

A large number of works are connected with the numerical approach. Kemna and Vorst were among the first who solved the task \cite{Kemna_1990}. In their work the pricing strategy includes Monte Carlo simulation with elements of dispersion reduction and improves the pricing method. Furthermore, the authors showed that the price of an option with an average value will always be lower than of a standard European option. Carverhill and Clewlow \cite{Carv_1990} used a fast Fourier transform to calculate the density of the sum of random variables, as convolution of individual densities. Then the payoff function is numerically integrated against the density. In this direction other authors continued to work, applying to the calculations improved methods of numerical simulation \cite{Lap_Tem_2001,Fu_Mad_1999,Segh_Lid_2008}. Unfortunately, these methods do not provide information on the hedging portfolio.

In the articles above, the authors focus on calculating the value of the option, but do not consider in detail the hedging problem, use only general existence theorems. In works \cite{Albr_2003,Jac_1996}  authors consider the problem of hedging with the payoff
$$
f=\left(\frac{S_{t_1}+...+S_{t_n}}{n}-K\right)_+
$$
and use the moment recurrence technique, i.e. get recurrence equations. We can not use this technique, as we are considering the following payoff
$$
f_1=\left(\int\limits_0^1 S_udu-K\right)_+.
$$

General equations for the hedging strategy based on the martingal representation
$$
M_t=\E(f|\cF_t)=M_0+\int\limits_0^t\alpha_sdW_s,
$$
where
$$
\alpha_s=\langle M,W\rangle_s.
$$
This theory is well developed only for options whose payoffs depend only on the price at the last moment in time $f=f(S_T)$. And further, to compute a strategy, it is necessary to study only one random variable $S_T$, which is a geometric Brownian motion, i.e. density is known. In the case of Asian options, the payoff is a functional of the whole path and it is required to study the density of the integrals $\int\limits_0^t S_udu$ in order to calculate $\langle M,W\rangle_s$, that is, it is necessary to average over an infinite-dimensional distribution.

Yor and Dufrance \cite{Geman_1993,Dufr_2005} obtained the pricing in the explicit form, but they studied not the density, they considered the functional $f$, and for it they got a representation in the form of infinite series on a special orthogonal basis, which is not possible to study analiticaly the regularity properties. There is another method to study this properties of the density one can use the Brownian bridge, which proposed by Kabanov Yu.M. and Pergamenshchikov S.M. (2016) \cite{Kab_2016}. Using this method we construct the hedging strategy. This is made in \cite{shishkova_Shi_02_2018}. 

It is worth noting that the option pricing model in work \cite{shishkova_Bl_Sch} has an ideal character, i.e. it is assumed that it is friction-free market without costs. This theory is no longer true when we need to take into account transaction costs $\kappa_nJ_n=\kappa_0 n^{-\alpha}J_n$ because there is no  unimprovable hedge. Therefore  option pricing and replication with nonzero trading costs are different from that in the Black-Scholes setting.

Models with proportional transaction costs were considered as early as the 1970s. Magill and Constantinides \cite{shishkova_Mag_Const} suggested in 1976 the consumption-investment model which is generalization of the Merton model of 1973 \cite{shishkova_Merton}. However, the article written by H. Leland \cite{shishkova_Leland} in 1985 became more important for practical application. Leland's strategy provides an easy way to effectively eliminate the risks associated with transaction costs. This method is based on the idea that transaction costs can be offset by increasing the volatility parameter in the Black-Scholes strategy, that is the delta strategy obtained from a changed Black-Scholes equation with an appropriate modified volatility ensures an approximately complete replication as expected. The major goal in Leland's algorithm is to explore the asymptotic behavior of the hedging error (difference between the terminal value of portfolio and the payoff function) as the number of transaction goes to infinity. 

Leland suggested that if transaction costs are fixed, i.e. $\alpha=0$ then the value of the portfolio converges in probability to the payoff function as $n\rightarrow\infty$. He also suggested that this result will be true in the case of $\alpha=1/2$. Later this fact has proved by K. Lott  in his thesis \cite{shishkova_Lott}.   Later Yu. Kabanov and M. Safarian \cite{shishkova_Kab_Saf} extended Lott’s work to any $\alpha\in(0,1/2]$. Also they considered the case when $\alpha=0$, i.e. constant transaction cost. The authors proved that the hedging error admits a non-zero limit. The obtained result was used by H.Ahn end others \cite{shishkova_Ahn} for the hedging problem with transaction costs in general diffusion models. 

There are a lot of studies using Leland’s algorithm and extend it to various setting. For example, S. Pergamenshchikov in \cite{shishkova_perg} studied the convergence rate of approximation in the case of constant costs. He obtained technically difficult result since used nontrivial procedure. This result is important because it not only provides asymptotic information about the hedging error but also gives a reasonable way to solve the hedging problem, namely, the investor can get a portfolio whose final value exceeds the desired profit by choosing the appropriate value of the modified volatility.

The important result had been obtained by E. Lepinette \cite{shishkova_Lepin1} in the case of time-depending volatility models. He used a non-uniform interval splitting. Moreover, to obtain the asymptotically complet replication he modified the strategy, which is called Lepinette’s strategy, and proved that for $\alpha>0$ the portfolio value of this strategy converges in probability to the payoff and if $\alpha=0$ then the portfolio value of strategy converges in probability to the payoff plus two positive functions depending on payoff. To improve a rate of convergence E. Lepinette in \cite{shishkova_Lepin2} also used a non-uniform interval splitting and proved that for strategy suggested in \cite{shishkova_Lepin1} with $\alpha=0$ the approximation error multiplied by $n^{\beta}$ weakly converges to a centered mixed Gaussian variable as $n\rightarrow\infty$. 

Another way to enlarge application of Leland’s strategy is to consider the hedging problem with transaction costs in the models where the value of volatility depends on time and on the price of the stock, so-called the local volatility models. E.Lepinette and T.Tran \cite{shishkova_Lep_Tran} extended results obtained in \cite{shishkova_Lepin2} to this models. The proof of the result is really complicated, since the existence of a solution of a non-uniform parabolic Cauchy problem is nontrivial, if we adjust the volatility as well as in work \cite{shishkova_Lepin1}.

To extend the Leland’s approach many others authors considered different situations including more general contingent claims, more general price processes and etc. see \cite{shishkova_Ngu,shishkova_Den_Kab}. Thus Leland’s strategy has great importance in option pricing and the hedging problem due to it is easily implemented in practice.

Our goal is to extend this hedging methods for the hedging problem for the financial markets with transaction costs. To this end we use the approximative hedging approach proposed Leland, Kabanov, Safarian, Pergamenshchikov, Lepinette \cite{shishkova_Leland,shishkova_Kab_Saf,shishkova_perg,shishkova_Lepin1}. Note that is all this paper the hedging strategy is based on the delta-strategy. But for Asian option one need to change basic strategy, i.e. to pass frpm delta-strategy to Asian hedging strategy constructed in \cite{shishkova_Shi_02_2018}.

In this paper we study assymptotic property for the portfolio value with transaction cost in the Black-Scholes model with risky asset without drift and risk-free asset with interest rate $r=0$. We use the modification of Leland’s strategy. Main result of our study are obtined sufficient conditions, which provide assymptotic hedging.

\section{Market model}\label{sec:Mconds}

\subsection{Main condition}\label{subsec:Condit}
We consider the continuous time classical Black-Scholes model on financial market with risk-		free asset (bond) and risky asset (stock). For simplicity we suppose that the risk-free rate 		$r=0$, i.e. the bond price is constant over time $B_t=1$ throughout this article. Let $(\Omega,		\cF_1,(\cF_t)_{0\leq t\leq 1},\P)$ be the standard filtered probability space with $\cF_t=			\sigma(W_s,0\leq s\leq t)$ and $W$ is a Wiener process. The asset price process $S_t$ given by
		\begin{equation}\label{model}
			dS_t=\sigma S_tdW_t,\;\;\;0\leq t\leq 1
		\end{equation}
	and admit the following explicit form
		$$
			S_t=S_0 e^{\sigma W_t-\sigma^2 t/2}.
		$$
	Remark that $S_t$ is a martingale under measure $\P$. The model is considered on the interval 
	$[0,1]$ where 1 is a maturity of the Asian option with payoff function
		$$
			f_1=\left(\int\limits_0^1 S_udu-K\right)_+.
		$$

\subsection{Hedging problem}
\begin{definition}
The financial strategy $(\Pi_t)_{0 \leq t \leq 1}=(\beta_t,\gamma_t)_{0\leq t\leq 1}$ is called an admissible self-financing strategy if it is $\cF_t$-adapted, integrable with
		$$
			\int\limits_0^t(|\beta_t|+\gamma_t^2)dt<\infty
		$$	
	and the portfolio value is
		$$
			V_t=\beta_t+\gamma_t S_t=V_0+\int\limits_0^t\gamma_udS_u.
		$$
\end{definition}
	Here $V_0$ is an initial capital, $\beta_t$ and $\gamma_t$ are quantity of the risk-free asset 			and risk asset respectively.
	
	Suppose an investor operating on a (B,S)-- market solves the following "investment problem": using a self-financing portfolio at some predetermined point in time $1$, in the future bring its capital to $f_1$.  Obviously, the implementation of this goal depends on the initial capital $ x $ invested in the portfolio and on the investor strategy $ (\Pi_t)_{0 \leq t \leq 1} $ of portfolio reorganization used by the investor.
\begin{definition}
For a given $ x>0 $ and $ f_1$ a self-financing strategy is called a $ (x, f_1) $ - hedge if 
$$
\forall\omega\in\Omega,\; V_0^{\Pi}=x,\; V_1^{\Pi}\geq f_1 \;\;\text{a.s.}
$$
\end{definition}

\subsection{Hedging problem with the transaction costs. Leland strategy}
Let
$$
dS_t=\sigma S_tdW_t,\;\;\;0\leq t\leq 1
$$
and interest rate  is zero. Let us explain the key idea in the Leland's algorithm in the case European call option. We suppose that for each successful trade, traders are charged by a cost that is proportional to the trading volume with the cost coefficient $\kappa$. Here $\kappa$ is a positive constant defined by market moderators. We assume that the investor plans to revise his portfolio at dates $(t_i)=i/n$, where $n$ is the number of revisions.
 
 Under the presence of proportional transaction costs, it was proposed by \cite{shishkova_Leland} and then generalized by \cite{shishkova_Kab_Saf} that the volatility should be adjusted as 
\begin{equation}\label{sec:LelStr_eq1}
\wh{\sigma}^2=\sigma^2+\sigma\kappa n^{1/2-\alpha}\sqrt{8/\pi}
\end{equation}

in order to create an artificial increase in the option price $C(t,S_t)$ to compensate possible trading fees. This form is inspired from the observation that the trading cost $\kappa_n S_{t_i}|C_x(t_i,S_{t_i})-C_x(t_{i-1},S_{t_{i-1}})|$ in the interval of time $[t_{i-1},t_i]$ can be approximate by
\begin{equation}\label{sec:LelStr_eq2}
\kappa_n S_{t_{i-1}}C_{xx}(t_{i-1},S_{t_{i-1}})|\Delta S_{t_i}|\approx
\kappa_n \sigma S^2_{t_{i-1}}C_{xx}(t_{i-1},S_{t_{i-1}})\E|\Delta W_{t_i}|.
\end{equation}
For simplicity, we assume that the portfolio is revised at uniform grig $t_i=i/n,i=1,...,n$ of the option life interval $[0,1]$. Taking into account that $\E |\Delta W_{t_i}/(\Delta t_i)^{1/2}|=\sqrt{2/\pi}$ one approximates the last term in \eqref{sec:LelStr_eq2} by 
$$
\kappa_n \sigma \sqrt{2/\pi}(\Delta t_i)^{1/2} S^2_{t_{i-1}}C_{xx}(t_{i-1},S_{t_{i-1}}),
$$
which is the cost paid for portfolio readjustment in $[t_{i-1},t_i]$. Hence, by the standard argument of Black-Scholes (BS) theory, the option price inclusive of trading cost should satisfy
$$
C_{t}(t_{i-1},S_{t_{i-1}})\Delta t_i+\frac{1}{2} \sigma^2 S^2_{t_{i-1}} C_{xx}(t_{i-1},S_{t_{i-1}})\Delta t_i+\kappa_n \sigma \sqrt{2/\pi}(\Delta t_i)^{1/2} S^2_{t_{i-1}}C_{xx}(t_{i-1},S_{t_{i-1}})=0.
$$
Since $\Delta t_i=1/n$, one deduces that
$$
C_{t}(t_{i-1},S_{t_{i-1}})+\frac{1}{2} (\sigma^2+\kappa_n \sigma \sqrt{n8/\pi}) S^2_{t_{i-1}} C_{xx}(t_{i-1},S_{t_{i-1}})=0,
$$
which implies that the option price inclusive trading cost should be evaluated by the following modified-volatility version of the Black-Scholes PDE
$$
\wh{C}_{t}(t,x)+\frac{1}{2}\wh{\sigma}^2 x^2 \wh{C}_{xx}(t,x)=0,\;\;\;\wh{C}(1,x)=\max (x-K,0),
$$
where the adjusted volatility $\wh{\sigma}$ is defined by \eqref{sec:LelStr_eq1}.

To compensate transaction costs caused by hedging activities, the option seller is suggested to follow the Leland strategy defined by the piecewise process
$$
\gamma_t^n=\sum\limits_{i=1}^n \wh{C}_x(t_{i-1},S_{t_{i-1}})\Chi_{(t_{i-1},t_i]}(t).
$$
Then the portfolio value corresponding to this strategy at time $t$ defined by
$$
V_t^n=V_0+\int_0^t\gamma_u^n dS_u-\kappa_n \sum\limits_{i=1}^n S_{t_j}
|\gamma^n_{t_i}-\gamma^n_{t_{i-1}}|.
$$
\begin{definition}
Strategy $\gamma_t^n$ is called hedging if
$$
V_1^n\xrightarrow[n\rightarrow\infty]{\mathbf{P}}f_1.
$$
\end{definition}

\section{Definition of  strategies for the Asian options}\label{sec:Tr}

\subsection{Without transaction costs}\label{subsec:StrWithoutTrCst}
The hedging problem for the Asian call option with the terminal payoff $f_1$ is to choose the 			admissible self-financing strategy $(\beta_t,\gamma_t)$ such that 
		$$
			V_1=V_0+\int\limits_0^1\gamma_udS_u\geq f_1,\;\;\;\text{a.s.}
		$$
To construct a hedging strategy in the case of model (\ref{model}) apply the representation theorem for quadratic integrated martingale to the following martingale	
\begin{equation}\label{ch1s02_eq3}
M_t=\mathbf{E}(f_1|\mathcal{F}_t).
\end{equation}
We will find the square integrable process $(\alpha_t)_{0\leq t\leq 1}$ adapted w.r.t. 
$\mathcal{F}_t$  such that for all $t\in [0,1]$ 
\begin{equation}\label{ch1s02_eq3'}
M_t=M_0+\int_0^t\alpha_sdW_s.
\end{equation}
Clearly that
\begin{equation}\label{ch1s02_eq3''}
dM_t=\alpha_tdW_t.
\end{equation}
For coefficients $\alpha_t$ we use the following formula
$$
\langle M,W\rangle_t=\int_0^t \alpha_s ds,
$$
therefore
$$
\alpha_t=\frac{d}{dt}\langle M,W\rangle_t.
$$
Also the portfolio value satisfies the equality
\begin{equation}\label{ch1s02_eq3'''}
dV_t=\gamma_t dS_t=\gamma_t\sigma S_tdW_t.
\end{equation}
Equating (\ref{ch1s02_eq3''}) and (\ref{ch1s02_eq3'''}), we obtain the formulas for strategy
 $\Pi=(\beta_t,\gamma_t)_{0\leq t\leq 1}$ 
\begin{align}
\gamma_t&=\alpha_t/\sigma S_t,\label{ch1s02_eq4_2}
\\
\beta_t&=\mathbf{E}f_1+\int_0^t\alpha_sdW_s-\gamma_tS_t,\label{ch1s02_eq4_1}
\end{align} 

In our case the martingale has the following form
\begin{equation}\label{ch1s02_eq5}
M_{t}=\mathbf{E}(f_1)|\mathcal{F}_t^W)=\mathbf{E}\left(\left(\int_0^1
S_vdv-K\right)_{+}|\mathcal{F}_t^W\right),
\end{equation}
If $v\geq t$ then 
$$
S_v=S_{t}\exp\left\{\sigma (W_v-W_t)-\sigma^2(v-t)/2\right\}.
$$
It means that we can represent the integral in the equality \eqref{ch1s02_eq5} as
$$
\int_{0}^{1}S_{v}dv=\xi_{t}+S_{t}\eta_{t},
$$
where
$$
\xi_{t}=\int\limits_{0}^{t}S_{v}dv,\quad \eta_{t}=\int\limits_{t}^{1}\exp\left\{\sigma(W_v-W_t)-\sigma^2(v-t)/2\right\}dv,
$$
Note that $\xi_{t}$ is measurable w.r.t. $\mathcal{F}_{t},$ and $\eta_{t}$ is independent on
$\mathcal{F}_{t}$. Hence
\begin{equation}\label{ch1s02_eq6}
M_{t}=G(t,\xi_{t},S_{t}),
\end{equation}
here
$$
G(t,x,y)=\mathbf{E}\left(x+y\eta_{t}-K\right)_{+}.
$$

\begin{theorem}\label{ch1s02_th1}
The function $G(t,x,y)$ has the continuous derivatives
$$
\frac{\partial}{\partial t}G(t,x,y),\;\frac{\partial}{\partial x}G(t,x,y),\; \frac{\partial}{\partial y}G(t,x,y), \frac{\partial^2}{\partial y^2}G(t,x,y).
$$ 
\end{theorem}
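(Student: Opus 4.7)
The strategy is to reduce the regularity of $G$ to that of the density of $\eta_{t}$ and then to differentiate under the expectation. By the stationarity and independence of the increments of $W$, the law of $\eta_{t}$ coincides with that of
$$
J_{\tau}=\int_{0}^{\tau}\exp\left\{\sigma B_{u}-\sigma^{2}u/2\right\}\,du,
\qquad \tau=1-t,
$$
where $B$ is a standard Brownian motion. The first step, which is the analytic core of the argument, is to show that $J_{\tau}$ admits for every $\tau\in(0,1]$ a density $p(\tau,z)$ on $(0,\infty)$ which is jointly continuous in $(\tau,z)$, continuously differentiable in $\tau$, with local uniform bounds on $p$ and $\partial_{\tau}p$ and rapid decay as $z\to 0^{+}$ and $z\to\infty$. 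I would obtain this via the Brownian-bridge decomposition used in \cite{Kab_2016,shishkova_Shi_02_2018}: writing $B_{u}=B_{\tau}u/\tau+B^{br}_{u}$ with $B^{br}$ an independent Brownian bridge and conditioning on $B^{br}$, the density $p(\tau,z)$ is represented as an integral against the Gaussian density of $B_{\tau}$, and the required smoothness follows by differentiation under the integral.

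Granted this, fix $y>0$ and set $z_{*}(x,y)=(K-x)_{+}/y$. For $x<K$ one has
$$
G(t,x,y)=(x-K)\,\bar F(1-t,z_{*})+y\int_{z_{*}}^{\infty}z\,p(1-t,z)\,dz,
\qquad \bar F(\tau,z)=\int_{z}^{\infty}p(\tau,u)\,du,
$$
while for $x\ge K$ one has simply $G(t,x,y)=x-K+y\,\mathbf{E}J_{1-t}$. Differentiating by Leibniz, and using the identity $x-K+yz_{*}=0$ at $z=z_{*}$ to cancel the boundary contributions, one obtains
$$
\partial_{x}G=\bar F(1-t,z_{*}),\qquad
\partial_{y}G=\int_{z_{*}}^{\infty}z\,p(1-t,z)\,dz,\qquad
\partial_{y}^{2}G=\frac{z_{*}^{2}}{y}\,p(1-t,z_{*}),
$$
while $\partial_{t}G$ is obtained by differentiating $p(1-t,\cdot)$ under the integral sign, justified by the local uniform bounds on $\partial_{\tau}p$ from the first step.

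Joint continuity of the four partials on $(0,1)\times\mathbf{R}\times(0,\infty)$ then follows from the joint continuity of $p$, $\partial_{\tau}p$ and $\bar F$ combined with the continuity of $(x,y)\mapsto z_{*}(x,y)$; the gluing at $x=K$ (where $z_{*}=0$) and, if one wishes to include it, at $y\to 0^{+}$ (where $z_{*}\to\infty$ for $x<K$) is handled by dominated convergence using the fast decay of $p(\tau,\cdot)$ at both endpoints.

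The main obstacle is clearly the first step: $J_{\tau}$ is a Yor-type exponential functional, whose law is known explicitly only through Bessel-process representations, so pointwise $C^{1}$ regularity in $\tau$ together with locally uniform bounds is not a triviality. The whole point of invoking the Brownian-bridge trick of \cite{Kab_2016} is to reduce this to Gaussian computations, after which the remainder of the argument is essentially mechanical Leibniz differentiation.
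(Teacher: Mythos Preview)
Your proposal is correct and follows essentially the same route as the paper (which defers the formal proof to \cite{shishkova_Shi_02_2018} but develops all the ingredients here): the reduction to $\tilde\eta_{v}$ with $v=1-t$, the Brownian-bridge representation of the density $q(v,z)$ (Proposition~\ref{ch1s03_prop1}), the explicit Leibniz formulas $G'_{y}=\int_{b}^{\infty}z\,q(v,z)\,dz$ and $G''_{yy}=\tfrac{b^{2}}{y}q(v,b)$, and the regularity/decay estimates on $q$, $q_{z}$, $q_{v}$ (Propositions~\ref{shish_density_estim_part1}--\ref{shish_density_estim_part2}) are exactly what the paper uses. Your identification of the density regularity as the analytic core, and of the Kabanov--Pergamenshchikov Brownian-bridge trick as the tool that supplies it, is on the mark.
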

The proof see in \cite{shishkova_Shi_02_2018}.

Since for any $t>0$  the process $(W_{t+u}-W_t)_{u\geq 0}$  is Wiener process then distribution of the random variable $\eta_t$ coincides with the distribution of the following random variable 
\begin{equation}\label{ch1s02_eq7}
\tilde{\eta}_v=\int_0^v\exp\{\sigma W_u-\sigma^2u/2\}du,
\end{equation}
Therefore
\begin{equation}\label{ch1s02_eq8}
G(t,x,y)=\mathbf{E}\left(x+y\tilde{\eta}_v-K\right)_{+}.
\end{equation}
Taking into account Theorem \ref{ch1s02_th1} and applying Ito's formula to the function $G(t,x,y)$ we obtain
\begin{equation}\label{ch1s02_eq9}
M_t=M_0+\int_0^t\left(G'_t(v,\xi_v,S_v)+G'_x(v,\xi_v,S_v)+\frac{\sigma^2S_v^2}{2}G^{''}_yy(v,\xi_v,S_v)\right)dv+ \tilde{M}_t,
\end{equation}
where 
$$
\tilde{M}_t=\sigma\int_0^tG'_y(v,\xi_v,S_v)S_vdW_v,
$$ 
$G'_t=\partial G/\partial t$ and other partial derivative similarly.
The quadratic characteristic is calculated by the formula
\begin{align*}
\langle M,W\rangle_t=\P-\lim_{n\rightarrow\infty}\sum_{j=1}^n
\mathbf{E}\left(\left(M_{t_j}-M_{t_{j-1}}\right)\left(W_{t_j}-W_{t_{j-1}}\right)|\mathcal{F}_{t_{j-1}}\right).
\end{align*}
We have that
$$
\int_0^t\left(G'_t(v,\xi_v,S_v)+G'_x(v,\xi_v,S_v)+\frac{\sigma^2S_v^2}{2}G^{''}_yy(v,\xi_v,S_v)\right)dv=0,
$$
since it is the continuous martingale. Then
\begin{align*}
\langle M,W\rangle_t=\sigma\langle \tilde{M},W\rangle_t=\sigma\int_0^tG'_y(v,\xi_v,S_v)S_vdv.
\end{align*}
Next, we find the formula for calculating martingale coefficients in (\ref{ch1s02_eq3'})
\begin{equation}\label{ch1s02_eq10}
\alpha_t=\sigma G'_y(t,\xi_t,S_t)S_t.
\end{equation}
Using(\ref{ch1s02_eq10}) in formulas (\ref{ch1s02_eq4_1}) and (\ref{ch1s02_eq4_2}), we obtain the hedging strategy
		$$
			\gamma_t=G'_y(t,\xi_t,S_t).
		$$	
For the obtained strategy $V_1=f_1$. Moreover $G(t,x,y)$ is the unique solution of the following 		equation
		\begin{equation}\label{shish_PDE1}
		\begin{cases}
		G'_t(t,x,y)+yG'_x(t,x,y)+\frac{\sigma^2}{2}y^2 G'_{yy}(t,x,y)=0
		\\
		G(1,x,y)=(x-K)_+.	
		\end{cases}
		\end{equation}

\subsection{With transaction costs}
Suppose that traders have to pay for a successful transaction some fee which is proportional to 		the trading volume. We assume that the cost proportion $\kappa_n=\kappa_0 n^{-\alpha}$. To 				compensate the transaction cost Leland \cite{shishkova_Leland} suggested to correct the 				volatility. The new parameter $\hat{\sigma}$ we have to put in the PDE (\ref{shish_PDE1}) and 			calculate the strategy again with a new volatility. Applying the Leland approach we modify the 			strategy as follows
		$$
		\gamma_t^n=\sum_{i=1}^n\hat{G}'_y(t_{j-1},\xi_{t_{j-1}},S_{t_{j-1}})\chi_{(t_{j-1},t_j]}(t),
		$$
	where $\hat{G}'_y(t,x,y)$ is the solution of the equation (\ref{shish_PDE1}) with parameter 
	$\hat{\sigma}$. Moreover $\hat{G}'_y(t,x,y)$ has the following form
		$$
			\hat{G}'_y(t,x,y)=\int_b^{\infty}z\hat{q}(v,z)dz,\;\;\;b=\frac{(K-x)_+}{y}
		$$
	here $\hat{q}(v,z)$ is a density of random variable $\tilde{\eta}_v$ with new parameter 
	$\hat{\sigma}$ and given by
		$$
			\hat{q}(v,z)=\E\left(\frac{\varphi_{0,1}(\hat{a}(t,z))}{\hat{K}(v,\hat{a}(t,z))}\right),
		$$
		$$
			\hat{K}(v,\hat{a}(t,z))=\hat{\sigma}\int_0^v u
			\exp\{\hat{\sigma}\tilde{W}_u-\frac{\hat{\sigma}^2 u}{2}+\hat{\sigma} u \hat{a}(t,z)\},
			\;\;\;\;\tilde{W}_u=W_u-uW_1.
		$$
	This form of density has been received in the article \cite{shishkova_Shi_02_2018}. The 
	portfolio value at $t$ with the initial capital 
	$V_0=\hat{G}(0,\xi_0,S_0)$ has the form
		\begin{equation}\label{shish_eq_capital1}
			V_t^n=\hat{G}(0,\xi_0,S_0)+\int_0^t\gamma_u^ndS_u-\kappa_nJ_n,
		\end{equation}
	where the total trading volume is given by
		$$
			J_n=\sum\limits_{j=1}^nS_{t_j}|\gamma^n_{t_j}-\gamma^n_{t_{j-1}}|.
		$$
	
	In order to keep the hedging strategy it is necessary to satisfy the following condition
		$$
			V_1^n\xrightarrow[n\rightarrow\infty]{\mathbf{P}}f_1.
		$$
    For this we need to consider a hedging error $V_1^n-f_1$. By Ito formula we have
		\begin{align*}
			G(t,\xi_t,S_t)=G(0,\xi_0,S_0)&+\int_0^t\left(G'_t(u,\xi_u,S_u)+G'_x(u,\xi_u,S_u)S_u+					\frac{\sigma^2S_u^2}{2}G''_{yy}(u,\xi_u,S_u)\right)du\\
			&+\int_0^tG'_y(u,\xi_u,S_u)\sigma S_udW_u,
		\end{align*}
	since $G'_t(t,\xi_t,S_t)+G'_x(t,\xi_t,S_t)S_t+\frac{\sigma^2S_t^2}{2}G''_{yy}(t,\xi_t,S_t)=0$  then
		$$
			G(t,\xi_t,S_t)=G(0,\xi_0,S_0)+\int_0^tG'_y(u,\xi_u,S_u)\sigma S_udW_u=G(0,\xi_0,S_0)+					\int_0^tG'_y(u,\xi_u,S_u)dS_u.
		$$
	Condition of replication is
		$$
			V_1=f_1=\left(\int_0^1S_udu-K\right)_+.
		$$
	Since by the construction of the strategy  $V_1=G(1,\xi_1,S_1)$ then
		$$
			f_1=G(0,\xi_0,S_0)+\int_0^t\gamma_udS_u,
		$$
	where $\gamma_t=G'_y(t,\xi_t,S_t)$. Thus, taking into account (\ref{shish_eq_capital1}) we have 
		\begin{align*}
			V_1^n-f_1=\hat{G}(0,\xi_0,S_0)-G(0,\xi_0,S_0)+\int_0^t(\gamma_u^n-\gamma_u)dS_u-						\kappa_nJ_n.
		\end{align*}
	
	Since $G(1,\xi_1,S_1)=\hat{G}(1,\xi_1,S_1)=(x-K)_+$, i.e. the same boundary condition we can 			write the following equality
		\begin{align*}
			\hat{G}(0,\xi_0,S_0)-G(0,\xi_0,S_0)&=(G(1,\xi_1,S_1)-G(0,\xi_0,S_0))-(\hat{G}(1,						\xi_1,S_1)-\hat{G}(0,\xi_0,S_0))\\[3mm]
			&=\int_0^1G'_y(u,\xi_u,S_u)dS_u+\frac{\hat{\sigma}^2-\sigma^2}{2}\int_0^1\hat{G}''_{yy}					(u,\xi_u,S_u)S_u^2du
			\\[3mm]
			&-\int_0^1\hat{G}'_y(u,\xi_u,S_u)dS_u,
		\end{align*}
	By Ito's formula we have
		$$
			\hat{G}(1,\xi_1,S_1)=\hat{G}(0,\xi_0,S_0)+\frac{\sigma^2-\hat{\sigma}^2}{2}								\int_0^1\hat{G}''_{yy}(u,\xi_u,S_u)S_u^2du+\int_0^1\hat{G}'_y(u,\xi_u,S_u)dS_u
		$$
	and
		$$
			G(1,\xi_1,S_1)=G(0,\xi_0,S_0)+\int_0^1G'_y(u,\xi_u,S_u)dS_u.
		$$
	Then
		$$
			V_1^n-f_1=\int_0^1(G'_y(t,\xi_t,S_t)-\hat{G}'_y(t,\xi_t,S_t))dS_t+\frac{\hat{\sigma}^2-					\sigma^2}{2}\int_0^1\hat{G}''_{yy}(t,\xi_t,S_t)S_t^2dt+\int_0^1(\gamma_t^n-\gamma_t)dS_t-				\kappa_nJ_n
		$$
	Finally we obtain
		$$
			V_1^n-f_1=\int_0^1(\gamma_t^n-\hat{\gamma}_t)dS_t+\frac{\hat{\sigma}^2-\sigma^2}{2}						\int_0^1\hat{G}''_{yy}(t,\xi_t,S_t)S_t^2dt-\kappa_nJ_n,
		$$
	because $\int_0^1G'_y(t,\xi_t,S_t)dS_t=\int_0^1\gamma_tdS_t$ and $\hat{\gamma}_t=\hat{G}'_y(t,			\xi_t,S_t)$. 

\section{Option value analysis}
The option cost is defined as
$$
C_0=G(0,0,S_0)=\int\limits_{b_0}^{+\infty}(zS_0-K)_+q(1,z)dz,\;\;\;\;b_0=K/S_0.
$$
Recall that
$$
G(t,x,y)=\E(x+y\eta_v-K)_+=\int\limits_{b}^{+\infty}(x+yz-K)_+ q(v,z)dz,
$$
where $b=(K-x)_+/y,\;\;\;v=1-t$ and $q(v,z)$ is the density of the random variable 
$$
\eta_v=\int\limits_0^v\exp\left\lbrace \sigma W_u-\frac{\sigma^2}{2}u \right\rbrace du
$$ 
and given by
$$
q(v,z)=\E\frac{\varphi_{0,1}(a(v,z))}{K(v,a)}.
$$
Here $\varphi_{0,1}(a)$ is the Gaussian density and $a(v,z)$ has an implisit form
$$
z=\int\limits_0^1\exp\left\lbrace \sigma W_u-\sigma uW_1-\frac{\sigma^2}{2}u+\sigma ua(v,z) \right\rbrace du.
$$
After we have introduced the transaction costs and changed the volatility as
$$
\hat{\sigma}^2=\sigma^2+\sigma\sqrt{\frac{8}{\pi}}\kappa_n\sqrt{n}
$$ 
we obtain that the cost of option is equal
$$
\wh{C}_0=\int\limits_{b_0}^{+\infty}(zS_0-K)_+\wh{q}(1,z)dz.
$$
There are three variants of changes in value of option. 

1) Case $\wh{\sigma}\rightarrow\sigma$ 

if $\kappa_n=\kappa_0 n^{-1/2}$ and $\kappa_0\rightarrow 0$ then  it is obvious that
$$
\wh{C}_0\rightarrow C_0.
$$

2) Case $\wh{\sigma}^2=\sigma^2+\sigma\sqrt{\frac{8}{\pi}}\kappa_n\sqrt{n}$ with $\kappa_n=1/\kappa_0$. In this case, the hedging will be, but the value of the option will increase by a constant $\sigma\sqrt{\frac{8}{\pi}}$.

3) Case $\wh{\sigma}\rightarrow +\infty$ 

if $\kappa_n=o(n^{-\alpha})$ and $\alpha\geq 2/5$. Then we obtain the strategy "buy and hold"
$$
\wh{C}_0\rightarrow S_0.
$$
It is proved in Proposition \ref{Propo_Cost_opt}.
\begin{proposition}\label{Propo_Cost_opt}
Let $\rho(u)=\exp\left\lbrace \wh{\sigma} W_u-\frac{\wh{\sigma}^2}{2}u \right\rbrace $ and 
$\rho(u)\rightarrow 0$ as $\wh{\sigma}\to\infty$. Then 
$$
\wh{\eta}_1=\int\limits_0^1\rho(u) du\xrightarrow[\wh{\sigma}\rightarrow\infty]{\mathbf{P}}0
$$
and
$$
\wh{C}_0\xrightarrow[\wh{\sigma}\to\infty]{}S_0.
$$
\end{proposition}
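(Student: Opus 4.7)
The plan is to handle the two assertions in turn. For the first, the pointwise fact that $\rho(u)\to 0$ is not by itself sufficient: one has $\E\rho(u)=1$ for every $\wh\sigma$, so neither a dominated-convergence nor an $L^1$ argument applied directly to the integrand can work. Instead, I would use Brownian scaling to convert the regime $\wh\sigma\to\infty$ into a long-time regime for a single process. Setting $\tau=\wh\sigma^2$ and using that $(\wh\sigma W_{s/\wh\sigma^2})_{s\ge 0}$ is distributed as a Wiener process $(\tilde W_s)_{s\ge 0}$, the change of variables $s=\wh\sigma^2 u$ yields
$$
\wh\eta_1\ \stackrel{d}{=}\ \frac{1}{\tau}\int_0^{\tau}\exp\{\tilde W_s - s/2\}\,ds.
$$

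Next I would show that the improper integral $I:=\int_0^{\infty}\exp\{\tilde W_s - s/2\}\,ds$ is almost surely finite. By the strong law for Brownian motion, $\tilde W_s/s\to 0$ a.s., hence for any $\varepsilon\in(0,1/2)$ there exists an a.s.\ finite $s_0(\omega)$ with $\tilde W_s-s/2\le -(1/2-\varepsilon)s$ for $s\ge s_0$, which gives exponential decay and integrability of the tail. Since in the coupled representation $\wh\eta_1\le I/\tau$ and $I$ does not depend on $\tau$, the right-hand side tends to $0$ almost surely as $\wh\sigma\to\infty$, so $\wh\eta_1\to 0$ in probability.

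For the second assertion, I would rely on the call--put identity $(x-K)_+ = (x-K) + (K-x)_+$ applied to $x=S_0\wh\eta_1$. Taking expectations and using that $\E\wh\eta_1=\int_0^1 \E\rho(u)\,du=1$ because $\rho$ is a positive exponential martingale, I get
$$
\wh C_0 \ =\ \E(S_0\wh\eta_1-K)_+ \ =\ S_0 - K + \E(K-S_0\wh\eta_1)_+.
$$
Since $0\le (K-S_0\wh\eta_1)_+\le K$ and $(K-S_0\wh\eta_1)_+\to K$ in probability by the first part, bounded convergence gives $\E(K-S_0\wh\eta_1)_+\to K$, and therefore $\wh C_0\to S_0$.

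The principal obstacle is the first step. The martingale $\rho$ has constant mean $1$ but collapses to $0$ pointwise, so the probability mass of $\wh\eta_1$ concentrates near the origin while a thin tail keeps the expectation at $1$; this rules out any direct $L^p$ control. The scaling trick is what makes the argument go through, by rewriting $\wh\eta_1$ as a Cesaro-type average of a \emph{single} geometric Brownian motion whose total integral is almost surely finite. Once convergence of $\wh\eta_1$ is established, the passage from $\wh\eta_1\to 0$ to $\wh C_0\to S_0$ is a routine application of put--call symmetry together with bounded convergence.
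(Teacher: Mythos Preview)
Your argument is correct, but the route for the first assertion differs from the paper's. The paper does not rescale: it splits the integral at $\delta=1/\sqrt{\wh{\sigma}}$, notes that $\E\int_0^{\delta}\rho(u)\,du=\delta\to 0$, and on $[\delta,1]$ bounds $\rho(u)\le \exp\{\wh{\sigma}\max_{[0,1]}W_u-\tfrac{1}{2}\wh{\sigma}^{3/2}\}\to 0$ a.s. Your Brownian-scaling reduction to the Cesaro average $\tau^{-1}\int_0^{\tau}e^{\tilde W_s-s/2}\,ds$ with the Dufresne-type fact $\int_0^\infty e^{\tilde W_s-s/2}\,ds<\infty$ a.s.\ is an elegant alternative; it gives almost-sure convergence along the coupled representation and hence convergence in distribution (so in probability, the limit being constant) of the original $\wh\eta_1$. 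The paper's splitting is slightly more elementary and self-contained, while your approach ties the statement to a classical perpetuity result and yields an explicit upper envelope $I/\tau$. For the second assertion, the paper decomposes $\wh C_0=S_0\,\E[\wh\eta_1\Chi_{\{\wh\eta_1>K/S_0\}}]-K\,\P(\wh\eta_1>K/S_0)$ and then writes $\E[\wh\eta_1\Chi_{\{\wh\eta_1>K/S_0\}}]=1-\E[\wh\eta_1\Chi_{\{\wh\eta_1\le K/S_0\}}]$; this is exactly your put--call identity in disguise, and both finish with the same bounded-convergence step on the bounded piece.
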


\begin{proof}
First of all we will prove that $\wh{\eta}_1\xrightarrow[\wh{\sigma}\rightarrow\infty]{\mathbf{P}}0
$. Represent it like
$$
\wh{\eta}_1=\wh{\eta}_{1,1}+\wh{\eta}_{1,2},
$$
where
$$
\wh{\eta}_{1,1}=\int\limits_0^{\delta}\rho(u) du,\;\;\;\wh{\eta}_{1,2}=\int\limits_{\delta}^1\rho(u) du.
$$
We choose $\delta$ so that it tends to zero not very quickly, for example 
$\delta=1/\sqrt{\wh{\sigma}}$. Then
$$
\E\wh{\eta}_{1,1}=\int\limits_0^{\delta}\E\rho(u) du=\delta\xrightarrow[\wh{\sigma}\to\infty]{}0.
$$
For the second termvwe can use estimate
\begin{align*}
\max\limits_{\delta\leq u\leq 1}\rho(u)
&\leq \exp\left\lbrace \wh{\sigma}\max\limits_{0\leq u\leq 1} W_u-\frac{\wh{\sigma}^2}{2}\delta \right\rbrace
\\[3mm]
&\leq \exp\left\lbrace \wh{\sigma}\max\limits_{0\leq u\leq 1} W_u-\frac{\wh{\sigma}^{3/2}}{2}\delta \right\rbrace\xrightarrow[\wh{\sigma}\to\infty]{}0 \;\;\;\text{a.s.}
\end{align*}
Thus $\E\wh{\eta}_{1,2}\xrightarrow[\wh{\sigma}\to\infty]{}0$. We have
\begin{align*}
\wh{C}_0&=\E(S_0\wh{\eta}_1-K)_+
\\[3mm]
&=\E(S_0\wh{\eta}_1-K)_+\Chi_{\{\wh{\eta}_1>K/S_0\}}+
\E(S_0\wh{\eta}_1-K)_+\Chi_{\{\wh{\eta}_1\leq K/S_0\}}
\\[3mm]
&=\E(S_0\wh{\eta}_1-K)\Chi_{\{\wh{\eta}_1>K/S_0\}}
=S_0\E\wh{\eta}_1\Chi_{\{\wh{\eta}_1>K/S_0\}}-K\P(\wh{\eta}_1>K/S_0)
\end{align*}
The last probabilities tends to zero therefore
\begin{align*}
\wh{C}_0=S_0\E\wh{\eta}_1\Chi_{\{\wh{\eta}_1>K/S_0\}}.
\end{align*}
If represent the indicator as $\Chi_{\{\wh{\eta}_1>K/S_0\}}=1-\Chi_{\{\wh{\eta}_1\leq K/S_0\}}$ then
$$
\E\wh{\eta}_1\Chi_{\{\wh{\eta}_1>K/S_0\}}=\E\wh{\eta}_1-\E\wh{\eta}_1\Chi_{\{\wh{\eta}_1\leq K/S_0\}}=1-\E\wh{\eta}_1\Chi_{\{\wh{\eta}_1\leq K/S_0\}}
$$
Since $\wh{\eta}_1$ is bounded and goes to zero then by Lebesgue's theorem on majorized convergence 
$\E\wh{\eta}_1\Chi_{\{\wh{\eta}_1\leq K/S_0\}}\rightarrow 0$. 
\end{proof}

\section{Properties of the density $q(v,z)$}\label{sec:DN}
To exlore the distribution of the random variable $\tilde{\eta}_v$ we introduce the notation of  Brownian bridge.
\begin{definition}
Coming from zero and coming to $a\in\mathbb{R}$ the Brownian bridge $(B_t^a)_{0\leq t\leq T}$  is the Gaussian process such that 
$$
B_t^a=W_t-t W_1+ t a,
$$
where $a$ -- some constant.
\end{definition}
Conditional distributions are calculated for a fixed finite value of the Wiener process using this process, i.e. for any function $L:C[0,1]\rightarrow\mathbb{R}$ and for any Borel set $\Gamma$
$$
\mathbf{P}(L(W_t)_{0\leq t\leq 1}\in\Gamma|W_t=a)=\mathbf{P}(L(B_t^a)_{0\leq t\leq 1}\in\Gamma).
$$
\begin{proposition}\label{ch1s03_prop1}
For any $0\leq t\leq 1$  the random variable $\tilde{\eta}_v$ has a distribution density.
\end{proposition}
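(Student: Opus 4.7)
The plan is to leverage the Brownian-bridge decomposition just introduced: writing $W_u = \tilde W_u + u W_1$ on $[0,1]$, where $\tilde W_u := W_u - u W_1$ is a Brownian bridge from $0$ to $0$ independent of $W_1 \sim N(0,1)$. Substituting this into the definition of $\tilde\eta_v$ turns the path-functional into a function of the single Gaussian $W_1$ once the bridge is frozen, and the smoothness of the Gaussian density then transfers to $\tilde\eta_v$ through a one-dimensional change of variables.

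Concretely, for $v\in(0,1]$ I would rewrite
$$
\tilde\eta_v \;=\; h(W_1),\qquad h(a) \;:=\; \int_0^v \exp\!\bigl\{\sigma\tilde W_u+\sigma u a-\sigma^2 u/2\bigr\}\,du.
$$
For almost every bridge trajectory, $h\colon\mathbb R\to(0,\infty)$ is $C^\infty$ with strictly positive derivative
$$
h'(a) \;=\; \sigma\int_0^v u\,\exp\!\bigl\{\sigma\tilde W_u+\sigma u a-\sigma^2 u/2\bigr\}\,du \;>\;0,
$$
so $h$ is a smooth strictly increasing bijection of $\mathbb R$ onto $(0,\infty)$ and admits a smooth inverse $a(v,\cdot)$. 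Conditioning on the $\sigma$-field $\mathcal G := \sigma(\tilde W_u:\,0\le u\le 1)$ and applying the one-variable change-of-variables formula to $W_1$, I obtain that, given $\mathcal G$, $\tilde\eta_v$ has conditional density $\varphi_{0,1}(a(v,z))/h'(a(v,z))$. Integrating out $\mathcal G$ by Fubini, legitimate since the conditional densities are nonnegative, yields an unconditional density
$$
q(v,z) \;=\; \mathbf{E}\!\left[\frac{\varphi_{0,1}\bigl(a(v,z)\bigr)}{K(v,a(v,z))}\right],\qquad K(v,a)=h'(a),
$$
exactly matching the explicit formula for $q$ already recorded in the excerpt.

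The only step requiring real care, and thus the main technical obstacle, is the joint measurability of $a(v,z)$ in $z$ and in the bridge trajectory, so that the expectation defining $q(v,z)$ is a well-defined Borel function of $z$; this follows pathwise from the implicit function theorem together with the smoothness and strict monotonicity of $h$, combined with the continuity of $u\mapsto\tilde W_u$. The essential analytic ingredient is simply the strict positivity of $h'$, which relies on $v>0$ and on the positivity and continuity of the integrand on $[0,v]$; no further infinite-dimensional density analysis is needed, since the bridge decomposition has reduced the question to a one-dimensional Jacobian computation.
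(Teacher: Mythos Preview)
Your proposal is correct and follows essentially the same route as the paper: both arguments use the decomposition $W_u=\tilde W_u+uW_1$ to write $\tilde\eta_v=F(v,W_1)$ with $F(v,a)=\int_0^v\exp\{\sigma\tilde W_u+\sigma ua-\sigma^2 u/2\}\,du$, observe that $F'_a(v,a)=K(v,a)>0$, perform the one-dimensional change of variables $z=F(v,a)$ against the Gaussian density of $W_1$, and then average over the bridge to obtain $q(v,z)=\E\bigl[\varphi_{0,1}(a(v,z))/K(v,a(v,z))\bigr]$. Your write-up is in fact slightly more careful than the paper's, which suppresses the outer expectation over the bridge in the intermediate steps and only restores it in the final formula; you also flag the joint-measurability issue that the paper leaves implicit.
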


\begin{proof}
Let $Q$  -– some bounded function $\mathbb{R}\rightarrow\mathbb{R}$. In our case 
$$
\mathbf{E}Q(\tilde{\eta}_v)=\mathbf{E}(\mathbf{E}(Q(\tilde{\eta}_v)|W_1))=\int_{\mathbb{R}}Q(F(v,a))\varphi(a)da,
$$
where
$$
F(v,a)=\int_0^v\exp\{\sigma W_u-\sigma u W_1-\sigma^2 u/2+\sigma ua\}du,
$$
Next we make the change of variable $z=F(v,a)$, i.e. we introduce the function $a=a(v,z)$  as
\begin{equation}\label{ch1s03_eq11}
z=F(v,a(v,z)).
\end{equation}
It means that
$$
a'_z(v,z)=\frac{1}{K(v,a(v,z))},
$$
where
\begin{equation}\label{ch1s03_eq12}
K(v,a)=F'_a(v,a)=\sigma\int_0^v u\exp\{\sigma W_u-\sigma u W_1-\sigma^2 u/2+\sigma ua\}du.
\end{equation}
Then
$$
\mathbf{E}Q(\tilde{\eta}_v)=\int_0^{\infty}Q(z)q(v,z)dz,
$$
here
$$
q(v,z)=\mathbf{E}\frac{\varphi(a)}{K(v,a)},\;\;\;\;\;\varphi(\cdot)\sim\mathcal{N}(0,T).
$$
Thus the density of the random variable $\tilde{\eta}_v$ has the form
\begin{equation}\label{ch1s03_eq13}
q(v,z)=\frac{\varphi(a)}{K(v,a)}.
\end{equation}
\end{proof}

Next we will use the following propositions.  
	\begin{proposition}\label{shish_density_estim_part1}		
		For $v_*=\min (\sigma^2 v,1)$ and some constants $\tilde{c}>0$ and $\kappa>0$
			\begin{equation*}\label{eq_estim_q}
				q(v,z)\leq\frac{\tilde{c}\sigma^3}{v_*^2}\left(
				\exp\left\lbrace-\frac{\kappa}{ \sigma^2 v}(\ln(z/v))^2\right\rbrace 
				\Chi_{\{z>v\}}+\Chi_{\{z\leq v\}}\right),
			\end{equation*}
			
			\begin{equation*}\label{eq_estim_q_z}
				|q_z(v,z)|\leq\frac{\tilde{c}\sigma^7}{v_*^4}\left(\exp\left\lbrace-\frac{\kappa}{ \sigma^2 v}
				(\ln(z/v))^2\right\rbrace \Chi_{\{z>v\}}+\Chi_{\{z\leq v\}}\right),
			\end{equation*}
	\end{proposition}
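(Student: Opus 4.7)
The plan is to start from the representation $q(v,z)=\E\bigl[\varphi(a(v,z))/K(v,a(v,z))\bigr]$ established in Proposition \ref{ch1s03_prop1}, and analyze the numerator and denominator separately. Passing to the Brownian bridge $B_u=W_u-uW_1$, we may rewrite
$$K(v,a)=\sigma\int_0^v u\exp\{\sigma B_u-\sigma^2 u/2+\sigma u a\}du,$$
so that the randomness in the denominator depends only on the bridge on $[0,1]$, while the Gaussian factor $\varphi(a(v,z))$ in the numerator controls the tail in $z$.

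The first key step is a pathwise lower bound on $K(v,a)$. On the event $\{\sup_{0\le u\le 1}|B_u|\le M\}$, which has probability approaching one as $M\to\infty$ with Gaussian tails in $M$, we can discard the $e^{\sigma B_u}$ factor at the cost of $e^{-\sigma M}$, and then directly estimate the deterministic integral $\sigma\int_0^v u\,e^{\sigma ua-\sigma^2 u/2}du$. Splitting into cases $a\le 0$ and $a>0$, and distinguishing $\sigma^2 v\le 1$ from $\sigma^2 v>1$, this integral is bounded below by $c\,v_*^2/\sigma$ when $|\sigma v a|\lesssim 1$, and by $c\,v_*^2 e^{\sigma v a}/\sigma$ for large positive $a$. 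Balancing the Gaussian cost of restricting to the good event against these lower bounds, I expect $\E[1/K(v,a)]\le \tilde c\sigma^3/v_*^2$ uniformly in the relevant range of $a$, with an additional $e^{-\sigma v a}$ gain whenever $a>0$.

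The second step is to invert the implicit relation $z=F(v,a(v,z))$. Since $F(v,\cdot)$ is strictly increasing with $F(v,0)\approx v$, for $z\le v$ the value $a(v,z)$ stays bounded and $\varphi(a(v,z))\le 1$ gives immediately the second bound in the stated estimate. For $z>v$, comparing $F(v,a)$ with $\int_0^v e^{\sigma u a}du\asymp e^{\sigma v a}/(\sigma v a)$ shows $a(v,z)\asymp \ln(z/v)/(\sigma v)$, whence
$$\varphi(a(v,z))\le\exp\Bigl\{-\tfrac{\kappa_0}{\sigma^2 v^2}(\ln(z/v))^2\Bigr\}$$
for some $\kappa_0>0$. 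Multiplying by the upper bound on $1/K$ and rescaling $\kappa_0$ to $\kappa$ (paying a factor of $v$ in the denominator of the exponent to match the stated form $\kappa/(\sigma^2 v)$) produces the first estimate. For $q_z$, I would differentiate under the expectation using $a'_z=1/K$ and $\partial_z[1/K(v,a(v,z))]=-K'_a/K^3$; each such derivative multiplies the integrand by another factor of $1/K$, which under the same lower bound yields an extra $\sigma^2/v_*^2$ and produces the $\sigma^7/v_*^4$ prefactor.

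The main obstacle is the coupling between the random lower bound on $K(v,a)$ and the Gaussian $\varphi(a(v,z))$ in the numerator: one must truncate on the high-probability event $\{\sup_{u}|B_u|\le M\}$, control the complementary event using Fernique-type estimates for the supremum of the Brownian bridge, and verify that the resulting trade-off between Brownian-bridge tails and the lower bound on the deterministic integral reproduces the exponent $\kappa/(\sigma^2 v)$ uniformly in the regime $\sigma^2 v\le 1$ versus $\sigma^2 v>1$ encoded by $v_*$. Matching the powers of $\sigma$ and $v_*$ in both the small-$v$ and large-$v$ regimes is the calculation that must be carried out carefully.
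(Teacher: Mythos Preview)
Your outline follows the same skeleton as the paper's proof (pathwise lower bound on $K$, truncation on a bridge-maximum event, inversion of $z\mapsto a$, then differentiation for $q_z$), but two points in your sketch would not go through as written.

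\textbf{First gap: the $\sigma$-dependence of the lower bound on $K$.} You propose to bound $e^{\sigma B_u}\ge e^{-\sigma M}$ on $\{\sup_{0\le u\le 1}|B_u|\le M\}$ and then estimate $\sigma\int_0^v u\,e^{\sigma u a-\sigma^2 u/2}\,du$ directly. This produces a factor $e^{\sigma M}$ in the upper bound on $1/K$, and balancing it against the Gaussian tail $\P(\sup|B_u|>M)\le ce^{-c'M^2}$ forces $M\sim\sigma$, hence a factor $e^{c\sigma^2}$ rather than $\sigma^3$. The paper removes $\sigma$ from the random part by the substitution $s=u\sigma^2$ and the Brownian scaling $\bar W_s=\sigma W_{s/\sigma^2}$, which turns the integral into $\sigma^{-3}\int_0^{\sigma^2 v}s\,e^{\bar W_s+\cdots}\,ds$; then $\max_{0\le s\le 1}|\bar W_s|$ is $\sigma$-free and the lower bound reads $K(v,a)\ge v_*^2\sigma^{-3}e^{-\beta_1-|a|}$ with $\beta_1$ independent of $\sigma$. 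This is exactly what manufactures both $v_*=\min(\sigma^2 v,1)$ and the clean $\sigma^3$ prefactor, and your sketch is missing it.

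\textbf{Second gap: you cannot decouple $\E[1/K(v,a)]$ from $\varphi(a(v,z))$.} The function $a(v,z)$ is itself random (it depends on the path of the bridge), so the product $\varphi(a)/K(v,a)$ must be bounded jointly before taking expectation. The paper absorbs the stray $e^{|a|}$ from the $K$-bound into the Gaussian via $e^{|a|}\varphi(a)\le c\,e^{-a^2/4}$, and only then runs the truncation, on a \emph{second} random variable $\beta_*=\max_{0\le u\le v}|W_u|/\sqrt v+|W_1|$ that controls the implicit relation $|a|\ge(\sigma v)^{-1}\bigl(\ln(z/v)-\sigma\sqrt v\,\beta_*\bigr)$. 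Your last paragraph identifies the need for this coupling, but the earlier step ``$\E[1/K(v,a)]\le\tilde c\sigma^3/v_*^2$ uniformly in $a$'' is not a well-posed intermediate bound. Relatedly, the exponent $\kappa/(\sigma^2 v)$ is not obtained by ``paying a factor of $v$'' from $\kappa_0/(\sigma^2 v^2)$: it is the contribution of the \emph{bad} event $\{\beta_*>L\}$ with the choice $L=(2\sigma\sqrt v)^{-1}\ln(z/v)$, which dominates the good-event exponent since $v\le 1$. Finally, your claim that $a(v,z)$ ``stays bounded'' for $z\le v$ is false ($a\to-\infty$ as $z\to 0$), though the conclusion survives because $\varphi(a)e^{|a|}\le c$ regardless.
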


	\begin{proposition}\label{shish_density_estim_part2}
	For $v_*=\min (\sigma^2 v,1)$ and some constants $\tilde{c}>0$ and $\kappa>0$
			\begin{equation*}\label{eq_estim_q_z}
				|q_v(v,z)|\leq \frac{\tilde{c}\sigma^7}{v_*^4}\left(\Chi_{\{z\leq v\}}+\exp\left\lbrace-\frac{\kappa}{ \sigma^2 v}
				(\ln(z/v))^2\right\rbrace \Chi_{\{z>v\}}\right).
			\end{equation*}
			
	\end{proposition}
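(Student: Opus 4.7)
The plan is to differentiate the integral representation of $q$ given in \eqref{ch1s03_eq13} with respect to $v$ and to bound each of the resulting terms by the same techniques that yield the $q_z$-bound in Proposition~\ref{shish_density_estim_part1}; the target bound has the same $\sigma^7/v_*^4$ prefactor as that of $q_z$, which is consistent with both being first derivatives of $q$.

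First, I would differentiate the implicit relation \eqref{ch1s03_eq11}, $z=F(v,a(v,z))$, with respect to $v$. Writing $\tilde{W}_u:=W_u-uW_1$, the derivative of $F$ in $v$ is the integrand at the upper limit,
$$
F'_v(v,a)=\exp\!\left\{\sigma\tilde{W}_v-\sigma^2 v/2+\sigma va\right\},
$$
so $a'_v(v,z)=-F'_v(v,a(v,z))/K(v,a(v,z))$. A direct calculation on \eqref{ch1s03_eq12} also gives $K'_v(v,a)=\sigma v\,F'_v(v,a)$. Differentiating $\varphi(a)/K(v,a)$ in $v$ through both its explicit $v$-dependence and the chain $a=a(v,z)$ then yields a decomposition of $q_v(v,z)$ as a sum of three terms of the schematic form
$$
\E\!\left[\varphi(a)\,\frac{\sigma v\,F'_v(v,a)}{K(v,a)^{2}}\right],\quad
\E\!\left[a\,\varphi(a)\,\frac{F'_v(v,a)}{K(v,a)^{2}}\right],\quad
\E\!\left[\varphi(a)\,\frac{K'_a(v,a)\,F'_v(v,a)}{K(v,a)^{3}}\right],
$$
all evaluated at $a=a(v,z)$. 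Relative to the integrand $\varphi(a)/K(v,a)$ of $q$, each summand carries additional multiplicative factors of the form $\sigma v\,F'_v$, $a\,F'_v$ or $K'_a F'_v/K$, together with one or two further powers of $1/K$.

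Next, I would reuse the two ingredients of the proof of Proposition~\ref{shish_density_estim_part1}. For $z>v$ the solution of the implicit equation $z=F(v,a)$ satisfies $a(v,z)\asymp \ln(z/v)/(\sigma v)$, so substituting into $\varphi(a)$ produces the Gaussian tail $\exp\{-\kappa(\ln(z/v))^2/(\sigma^2 v)\}$; for $z\leq v$ the point $a(v,z)$ stays bounded and no $z$-decay is needed. The $\sigma$- and $v_*$-scaling then follows from the same inverse-moment estimates of $K(v,a)$ used there, now applied to one or two higher powers of $1/K$, together with exponential-moment bounds for $F'_v$ and $K'_a$ integrated against the Gaussian weight. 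Combined with the $\sigma^3/v_*^2$-scaling of $q$, this produces exactly the announced prefactor $\sigma^{7}/v_*^{4}$.

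The main obstacle is controlling the inverse moments of $K(v,a)$ and the cross-ratios $F'_v/K$, $K'_a/K$ uniformly in the Brownian-bridge variable: $K$ is almost surely positive, but can be small on events of non-negligible probability, so the Laplace-type estimates must be carried out carefully to keep the exponents of $\sigma$ and $v_*$ tight. Once these ratio and inverse-moment bounds are in place, substituting term by term and splitting $\{z>v\}$ versus $\{z\leq v\}$ delivers the bound of Proposition~\ref{shish_density_estim_part2}.
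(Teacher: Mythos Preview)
Your decomposition is exactly the paper's: differentiate $z=F(v,a)$ to get $a'_v=-F'_v/K$, compute $K'_v=\sigma v F'_v$, and arrive at the same three terms, all carrying a factor $F'_v/K^2$ or $F'_v K'_a/K^3$ against $(1+|a|)\varphi(a)$. The reuse of the lower bound for $|a(v,z)|$ and the $\{z>v\}/\{z\le v\}$ split is also identical.

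The one place your sketch is thin is precisely the point you flag as the main obstacle, and here the paper has a specific trick you should know about. Bounding $F'_v$ on its own by ``exponential-moment bounds\ldots\ integrated against the Gaussian weight'' does not give the stated $\sigma^{7}/v_*^{4}$ prefactor uniformly in $\sigma$: $F'_v(v,a)$ carries $e^{\sigma v a}$, and completing the square against $\varphi(a)$ leaves a factor $e^{c(\sigma v)^2}=e^{c\sigma^2 v\cdot v}$, which is unbounded when $\sigma^2 v\to\infty$. The paper instead bounds the \emph{ratio} $F'_v/K$ directly by writing
\[
\frac{F'_v(v,a)}{K(v,a)}=\Bigl(\sigma\!\int_0^v u\,\frac{P(u,a)}{P(v,a)}\,du\Bigr)^{-1},\qquad P(u,a)=F'_v(u,a),
\]
so that the $a$-dependence in the integrand becomes $e^{-\sigma(v-u)a}$; after the change $t=v-u$ and the rescaling $s=\sigma^{2}t$ one lower-bounds the integral on $[0,v_*]$ and obtains $F'_v/K\le (\sigma^{3}/v_*^{2})\exp\{\gamma+|a|\}$ with $\gamma$ built from Brownian maxima only. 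Combined with the $1/K$ bound from Proposition~\ref{shish_density_estim_part1} this gives $F'_v/K^{2}\le(\sigma^{6}/v_*^{4})e^{\gamma^*+2|a|}$, and the extra $(1+\sigma)$ from $K'_a\le\sigma v K$ yields the $\sigma^{7}$. After that, the $L$-truncation on $\beta_*$ and the choice $L=\ln(z/v)/(2\sigma\sqrt v)$ go through verbatim as in Proposition~\ref{shish_density_estim_part1}. So your plan is right, but the ratio rewriting is the step that makes the $\sigma$-bookkeeping close.
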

See proofs in Appendix.	

\section{Asymptotic hedging}\label{sec:MR1}
Recall that an option seller should increase volatility in order to compensate for transaction costs. Choose a new volatility parameter
	\begin{equation}\label{shish_sigma_hat}
		\hat{\sigma}^2=\sigma^2+\sigma\sqrt{n}\kappa_n\sqrt{\frac{8}{\pi}},\;\;\;\kappa_n=\kappa_0 n^{-\alpha}.
	\end{equation}
Then the following theorem holds.
	\begin{theorem}
		For $\alpha=1/2$ in the equation (\ref{shish_sigma_hat}) the portfolio value $V_1^n$ converges in probability to the payout function $f_1$ as $n\rightarrow\infty$.
	\end{theorem}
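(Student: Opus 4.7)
The plan is to work from the hedging-error decomposition obtained just before the statement,
$$
V_1^n - f_1 \;=\; \int_0^1 (\gamma_t^n - \wh\gamma_t)\,dS_t \;+\; \frac{\wh\sigma^2 - \sigma^2}{2}\int_0^1 \wh G''_{yy}(t,\xi_t,S_t)\,S_t^2\,dt \;-\; \kappa_n J_n,
$$
and to show that each of the three terms is controlled. The decisive simplification when $\alpha=1/2$ is that $\sqrt n\,\kappa_n = \kappa_0$ is constant in $n$, so $\wh\sigma^2 = \sigma^2 + \sigma\kappa_0\sqrt{8/\pi}$ is a fixed adjusted volatility. In particular $\wh G$, $\wh\gamma$ and the prefactor $(\wh\sigma^2-\sigma^2)/2 = \sigma\kappa_0\sqrt{2/\pi}$ in the middle term are all independent of $n$. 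With this, the plan reduces to two tasks: make the discretization-error stochastic integral vanish, and show that $\kappa_n J_n$ converges in probability to exactly the constant-times-integral that appears as the middle term, so that those two cancel.

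For the discretization error $I_{1,n}:=\int_0^1 (\gamma_t^n - \wh\gamma_t)\,dS_t$ I would treat it as an $\cF_t$-martingale with quadratic variation $\sigma^2\int_0^1 (\gamma_t^n - \wh\gamma_t)^2\,S_t^2\,dt$, write $\gamma_t^n - \wh\gamma_t = \wh G'_y(t_{j-1},\xi_{t_{j-1}},S_{t_{j-1}}) - \wh G'_y(t,\xi_t,S_t)$ on each cell $(t_{j-1},t_j]$, and expand via It\^o's formula applied to $\wh G'_y$. The required regularity of $\wh G'_y$ and the cross-derivatives $\wh G''_{yy}$, $\wh G''_{xy}$, $\wh G''_{ty}$ comes from Theorem \ref{ch1s02_th1} and from the density bounds of Propositions \ref{shish_density_estim_part1}--\ref{shish_density_estim_part2}. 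These carry a singular factor $v_*^{-2}$ as $t\uparrow 1$, so per-cell estimates are integrable in time only up to a slowly shrinking boundary layer, which I would handle by truncating the integral at $1-\delta_n$ for $\delta_n\to 0$ chosen slowly. The boundary layer contribution is controlled directly by the pathwise bound $|\wh G'_y|\le v\le 1$, which follows from the representation $\wh G'_y = \E(\tilde\eta_v\,\Chi_{\{x+y\tilde\eta_v>K\}})$.

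For the cancellation of transaction costs I would set $\Delta_j:=\gamma^n_{t_j}-\gamma^n_{t_{j-1}}=\wh G'_y(t_{j-1},\xi_{t_{j-1}},S_{t_{j-1}}) - \wh G'_y(t_{j-2},\xi_{t_{j-2}},S_{t_{j-2}})$ and extract the dominant contribution via It\^o's formula on $(t_{j-2},t_{j-1})$:
$$
\Delta_j \;\approx\; \sigma\, S_{t_{j-2}}\,\wh G''_{yy}(t_{j-2},\xi_{t_{j-2}},S_{t_{j-2}})\,\bigl(W_{t_{j-1}}-W_{t_{j-2}}\bigr).
$$
Taking absolute values, using $\E(|\Delta W_{t_j}|\mid\cF_{t_{j-1}})= \sqrt{2/(\pi n)}$, and summing against $\kappa_n=\kappa_0 n^{-1/2}$, a Riemann-sum argument (the residuals are centred conditionally on the previous grid point, hence their partial sums are tight of order $n^{-1/2}$) yields
$$
\kappa_n J_n \;\xrightarrow[n\to\infty]{\mathbf{P}}\; \kappa_0\,\sigma\,\sqrt{\tfrac{2}{\pi}}\,\int_0^1 S_t^2\,\wh G''_{yy}(t,\xi_t,S_t)\,dt,
$$
which is precisely the middle term of the decomposition. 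Combined with $I_{1,n}\to 0$, the theorem follows.

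The main obstacle will be uniform control of $\wh G''_{yy}$ (and the other second derivatives) near the maturity $t=1$: the estimates of Propositions \ref{shish_density_estim_part1}--\ref{shish_density_estim_part2} blow up as $v_*^{-2}$, so both the $L^2$-estimate of $I_{1,n}$ and the replacement step $|\Delta_j|\mapsto \sigma S_{t_{j-2}}\wh G''_{yy}|\Delta W|$ must survive this singularity. The truncation $[0,1-\delta_n]\cup[1-\delta_n,1]$ mentioned above, combined with the boundedness $|\wh G'_y|\le v$, is the technical device that should make this work; calibrating $\delta_n$ (a small power of $1/n$) so that both the interior error and the boundary-layer contribution vanish is where the bulk of the calculation will live.
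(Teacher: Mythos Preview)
Your overall architecture is the same as the paper's: start from the three–term error decomposition, kill the discretization integral, and identify $\kappa_n J_n$ with the $\wh\sigma^2$–correction integral. The divergence is in how you propose to handle the singularity at $t\uparrow 1$, and this is where your plan has a real gap.

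You commit to a pure time truncation at $1-\delta_n$, with the boundary layer controlled by the crude bound $|\wh G'_y|\le v$. Run the numbers. On the boundary layer that bound gives only $|\Delta_j|\le 2\delta_n$, so the boundary contribution to $\kappa_n J_n$ is of order $\kappa_0 n^{-1/2}\cdot (n\delta_n)\cdot \delta_n = \kappa_0\sqrt{n}\,\delta_n^2$, forcing $\delta_n=o(n^{-1/4})$. On the interior, the density estimates of Propositions~\ref{shish_density_estim_part1}--\ref{shish_density_estim_part2} give $|q_v|\lesssim v^{-4}$, hence $|H_t|\lesssim v^{-4}$, and the drift--in--$(t,x)$ residual in your It\^o expansion sums to $\kappa_n\sum_{v_j\ge\delta_n} c/(n v_j^4)\asymp \kappa_0 n^{-1/2}\delta_n^{-3}$, forcing $\delta_n\gg n^{-1/6}$. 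These two constraints are incompatible; no small power of $1/n$ works. The bound $|\wh G'_y|\le v$ is simply too weak to absorb the transaction-cost sum in the boundary layer.

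The paper does not truncate in time. It localizes on the terminal value $\xi_1$ via the events
\[
\Gamma_{\varepsilon,M}=\{\xi_1>K\}\cap\{\tau_0\le 1-\varepsilon\}\cap\{M^{-1}\le S\le M\},\qquad
\tilde\Gamma_{\delta,M}=\{\xi_1\le K-\delta\}\cap\{M^{-1}\le S\le M\},
\]
with $\tau_0=\inf\{t:\xi_t\ge K\}\wedge 1$. On $\Gamma_{\varepsilon,M}$ one has $b_t=0$ for $t\ge\tau_0$, so $H(t,x,y)=v$ there and all the problematic derivatives vanish identically past $\tau_0$; the analysis reduces to $[0,1-\varepsilon]$ where $v\ge\varepsilon$ is bounded below by a \emph{fixed} constant. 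On $\tilde\Gamma_{\delta,M}$ one has $b_t\ge b_*:=\delta/M>0$ uniformly, and then the exponential factor $\exp\{-\kappa(\ln(b_*/v))^2/(\sigma^2 v)\}$ in the density bounds kills the polynomial blow-up $v^{-p}$ as $v\to 0$, again giving uniform-in-$t$ control. The thin strip $\{K-\delta<\xi_1\le K\}$ has small probability and is discarded. In both regimes the derivative bounds become \emph{constants} in $n$, and then the factor $\kappa_n\to 0$ alone finishes all residual terms (this is the content of Lemmas~\ref{shish_lemma1_B_n_small_volatil} and~\ref{shish_lem2_D_n_small_volatil}); the main term is handled by the Riemann-sum Lemma~\ref{shish_lem3_converg_small_volatil}.

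Two smaller remarks. First, for $I_{1,n}$ the paper does not use It\^o or quadratic variation at all: since $\wh\sigma$ is fixed for $\alpha=1/2$, the path $t\mapsto\wh G'_y(t,\xi_t,S_t)$ is a.s.\ continuous on $[0,1]$ (it tends to $0$ at $t=1$), hence uniformly continuous, and the stochastic integral of its piecewise-constant interpolant converges directly. Your approach there is correct in spirit but unnecessarily heavy and runs into the same calibration trap. Second, the paper's decomposition of $J_n$ is not a single It\^o expansion but an add--subtract in the $(t,x)$ variables followed by a one-variable Taylor in $y$; this is equivalent to what you propose but keeps the estimates elementary.
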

	
	\begin{proof}
		We have an expression for a hedging error
			$$
			V_1^n-f_1=\int_0^1(\gamma_t^n-\hat{\gamma}_t)dS_t+\frac{\hat{\sigma}^2-\sigma^2}{2}						\int_0^1\hat{G}''_{yy}(t,\xi_t,S_t)S_t^2dt-\kappa_nJ_n.
			$$
		Since
		$$
		\gamma_t^n=\sum_{i=1}^n\hat{G}'_y(t_{j-1},\xi_{t_{j-1}},S_{t_{j-1}})								\chi_{(t_{j-1},t_j]}(t)
		$$ 
		and $\hat{\gamma}_t=\hat{G}'_y(t,\xi_{t},S_{t})$ uniformly continuous on the segment $[0,1]$, it is obvious that the first term tends to zero as $n\rightarrow\infty$ and it remains only to verify that
		$$
			\frac{\hat{\sigma}^2-\sigma^2}{2}\int_0^1\hat{G}''_{yy}(t,\xi_t,S_t)S_t^2dt-\kappa_nJ_n					\xrightarrow[n\rightarrow\infty]{\mathbf{P}}0.
		$$
		First, we’ll evaluate $\kappa_nJ_n$. We introduce the notation
		$$
			H(t_{j},\xi_{t_{j}},S_{t_j})=\hat{G}_y(t_{j},\xi_{t_{j}},S_{t_j}).
		$$
		Then
		$$
			\kappa_nJ_n=\kappa_n\sum_{j=1}^n S_{t_j}|H(t_{j},\xi_{t_{j}},S_{t_j})-					H(t_{j-1},\xi_{t_{j-1}},S_{t_{j-1}})|.
		$$
		Add and subtract the term $|H(t_{j-1},\xi_{t_{j-1}},S_{t_j})-H(t_{j-1},								\xi_{t_{j-1}},S_{t_{j-1}})|$ and represent $\kappa_nJ_n$ as
		$$
			\kappa_nJ_n=A_n^{(1)}+A_n^{(2)},
		$$
		where
		$$
			A_n^{(1)}=\kappa_n\sum_{j=1}^n S_{t_j}|H(t_{j-1},\xi_{t_{j-1}},S_{t_{j}})-
			H(t_{j-1},\xi_{t_{j-1}},S_{t_{j-1}})|
		$$
		and
		$$
			A_n^{(2)}=\kappa_n\sum_{j=1}^n S_{t_j}\left(|H(t_{j},									
			\xi_{t_{j}},S_{t_j})-H(t_{j-1},\xi_{t_{j-1}},S_{t_{j-1}})|-|H(t_{j-1},									\xi_{t_{j-1}},S_{t_{j}})-H(t_{j-1},\xi_{t_{j-1}},S_{t_{j-1}})|\right).
		$$
		Using the fact $||x|-|y||\leq |x-y|$, we obtain
		$$
			|A_n^{(2)}|\leq \kappa_n\sum_{j=1}^n S_{t_j}
			|H(t_{j},\xi_{t_{j}},S_{t_j})-H(t_{j-1},\xi_{t_{j-1}},S_{t_{j}})|=B_n.
		$$
		In the section \ref{sec_dop_lem_small_volatil} we proved that  $\P-\lim\limits_{n\rightarrow\infty}B_n=0$, see Lemma  \ref{shish_lemma1_B_n_small_volatil}. Thus, further we need to consider only $A_n^{(1)}$. Recall that according to the Taylor formula we can write
		$$
			H(t_{j-1},\xi_{t_{j-1}},S_{t_{j}})=H(t_{j-1},\xi_{t_{j-1}},S_{t_{j-1}})+H_y(t_{j-1},					\xi_{t_{j-1}},S_{t_{j-1}})(S_{t_{j}}-S_{t_{j-1}})+o(n^{-2})
		$$
		and represent $A_n^{(1)}$ as
		\begin{align*}
			A_n^{(1)}&=\kappa_n\sum_{j=1}^n S_{t_j}\left(|H(t_{j-1},								
			\xi_{t_{j-1}},S_{t_{j}})-H(t_{j-1},\xi_{t_{j-1}},S_{t_{j-1}})|-|H_y(t_{j-1},							\xi_{t_{j-1}},S_{t_{j-1}})||S_{t_{j}}-S_{t_{j-1}}|\right)\\
			&+\kappa_n\sum_{j=1}^n S_{t_j}|H_y(t_{j-1},\xi_{t_{j-1}},S_{t_{j-1}})|
			|S_{t_{j}}-S_{t_{j-1}}|.
		\end{align*}
		Next we denote
		$$
			D_n^{(2)}=\kappa_n\sum_{j=1}^n S_{t_j}\left(|H(t_{j-1},								
			\xi_{t_{j-1}},S_{t_{j}})-H(t_{j-1},\xi_{t_{j-1}},S_{t_{j-1}})|-|H_y(t_{j-1},							\xi_{t_{j-1}},S_{t_{j-1}})||S_{t_{j}}-S_{t_{j-1}}|\right)
		$$
		and in Lemma \ref{shish_lem2_D_n_small_volatil} we will prove $\P-\lim\limits_{n\rightarrow\infty}D_n^{(2)}=0$. Thus, we have
		$$
			\kappa_nJ_n\approx\kappa_n\sum_{j=1}^n S_{t_j}
			|H_y(t_{j-1},\xi_{t_{j-1}},S_{t_{j-1}})||S_{t_{j}}-S_{t_{j-1}}|
		$$
		and by Lemma \ref{shish_lem3_converg_small_volatil} we obtain that
		$$
		\kappa_nJ_n\xrightarrow[n\rightarrow\infty]{\mathbf{P}}\kappa_n\sqrt{n}\sigma
		\sqrt{\frac{2}{\pi}}\int_0^1\hat{G}''_{yy}(t,\xi_t,S_t)S_t^2dt.
		$$
		Then
		$$
		\frac{\hat{\sigma}^2-\sigma^2}{2}\int_0^1\hat{G}''_{yy}(t,\xi_t,S_t)S_t^2dt-\kappa_nJ_n 				\xrightarrow[n\rightarrow\infty]{\mathbf{P}}0
		$$
	\end{proof}

\section{Simulations}

To build a hedging strategy, we need to calculate the coefficients $(\alpha_t)_{0\leq t\leq 1}$. First we compute the function $G(t,x,y)$ , for this we simulate L random variables $\eta_t^j$. We take the time step which is equal
$$
dt=1/N,
$$
where $N$ -- number of partitions. The mathematical expectation is calculated by the Monte Carlo method. We get the computational formula
\begin{equation}\label{ch1s05_eq28}
\eta_t=\frac{1-t}{N}\sum\limits_{k=1}^N\exp\left\lbrace\sigma W_k(1-t)-\sigma^2 \frac{(1-t)k}{2N}\right\rbrace.
\end{equation}
Then for the function $G(t,x,y)$ we obtain the expression
\begin{equation}\label{ch1s05_eq29}
G(t,x,y)\approx\frac{1}{L}\sum\limits_{j=1}^L(x+y\eta_t^j-K)_+.
\end{equation}
To calculate the partial derivative $G'_y(t,x,y)$ we use the following formula
\begin{equation}\label{ch1s05_eq30}
\frac{\partial}{\partial y}G(t,x,y)=\frac{G(t,x,y+\delta)-G(t,x,y)}{\delta},\;\;\;\delta=0,0001.
\end{equation}
Before proceeding to the calculation of the coefficients $(\alpha_t)_{0\leq t\leq 1}$ we write the calculation formulas for $(\xi_t)_{0\leq t\leq 1}$ and $(S_t)_{0\leq t\leq 1}$.
$$
\xi_t=\frac{S_0 t}{N}\sum\limits_{k=1}^N\exp\left\lbrace\sigma W_k(t)-\sigma^2 \frac{tk}{2N}\right\rbrace
$$
and
$$
S_t=S_0\exp\left\lbrace\sigma W_t-\sigma^2\frac{t}{2}\right\rbrace.
$$

Next, we find the coefficients $(\alpha_t)_{0\leq t\leq 1}$ and build a strategy $\Pi=(\beta_t,\gamma_t)_{0\leq t\leq 1}$.

Consider the implementation of the asset process and hedging strategies for  $\sigma=0.05$ 
 with $S_0=100$ and $N=100$, we obtain the following results.
 
 \begin{figure}[h!]
\center{\includegraphics[scale=0.5]{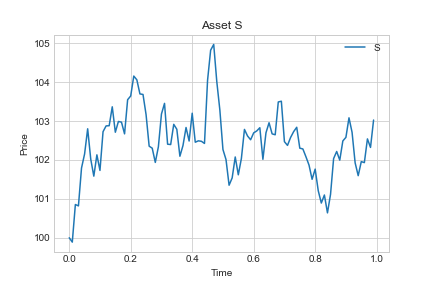}}
\caption{Asset price in a market with volatility $\sigma=0.05$.}
\label{fig1}
\end{figure}
\vspace{5mm}

\begin{figure}[h!]
\center{\includegraphics[scale=0.5]{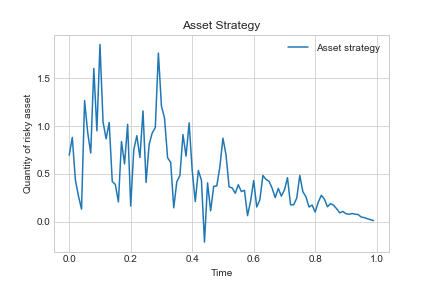}}
\caption{The quantity $(\gamma_t)_{0\leq t\leq 1}$ of risky asset $(S_t)_{0\leq t\leq 1}$ in a hedging strategy for an Asian market option with volatility $\sigma=0.05$.}
\label{fig2}
\end{figure}
\vspace{5mm}

\begin{figure}[h!]
\center{\includegraphics[scale=0.5]{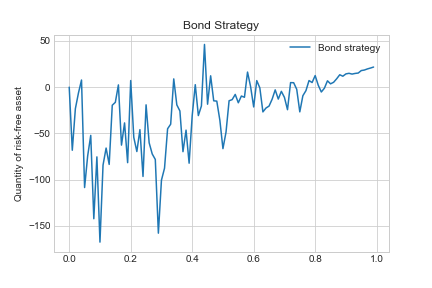}}
\caption{The quantity $(\beta_t)_{0\leq t\leq 1}$ of riskless asset  $(B_t)_{0\leq t\leq 1}$ in a hedging strategy for an Asian market option with volatility $\sigma=0.05$.}
\label{fig3}
\end{figure}
\vspace{5mm}
 
 Let us compare the value of the terminal portfolio and the payoff function for a different number of partitions $N$ with parameters $\sigma=0.1$, $S_0=100$, $K=50$.
\begin{table}[h!]
\caption{The terminal portfolio $X_1$ and the pauoff $f_1$}
\label{tabular:optpr2}
\begin{center}
\begin{tabular}{|c||*{6}{c|}}
\hline
$N$ & 20 & 50 & 100 & 200 & 500 & 1000 \\
\hline
$X_1$ & 74.1408 & 37.7123 & 22.76484379 & 184.6879 & 50.0242 &  10.220968  \\
 \hline
$f_1$ & 47.8504 & 49.5922  & 22.4617  & 61.5188 & 52.9817  &  6.37918036  \\
 \hline
\end{tabular}
\end{center}
\end{table}
\vspace{5mm}

The value of the option is calculated by the formula
$$
C_0\approx\frac{1}{L}\sum\limits_{j=1}^L\left(S_0\sum\limits_{k=1}^N dt* \exp\{\sigma W_k(1)-\sigma^2 t_k/2\}\right)_+.
$$

Consider the simulation results for $S_0=100$, $t\in[0,1]$, \linebreak ${L=500 000}$ and $n=1000$.

\begin{figure}[h!]
\center{\includegraphics[scale=0.7]{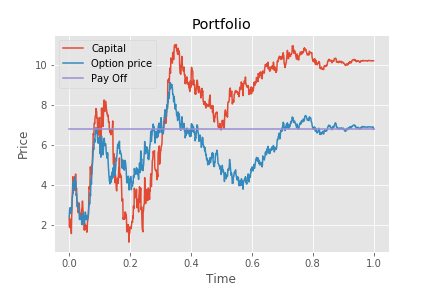}}
\caption{Graphs of option value, investor's capital and Asian option payoff function.}
\label{fig:portfolio}
\end{figure}
\vspace{5mm}

In Fig. \ref{fig:portfolio}time changes on the abscissa axis, and corresponding values on the ordinate axis. We see that at every moment in time, the trajectory of the option value almost repeats the trajectory of investor capital, which is natural for the hedging task. The size of the terminal portfolio exceeds the payoff function, which confirms that the strategy is hedging. 

Investigating the behavior of the option value depending on the initial stock price $S_0$, strike price $K$ and volatility $\sigma$, we obtain the following results.

\begin{table}[h!]
\caption{The dependence of the value of an Asian option on the volatility parameter with $K=S_0$}
\label{tabular:optpr2}
\begin{center}
\begin{tabular}{|c||*{7}{c|}}
\hline
$\sigma$ & 0.01 & 0.05 & 0.1 & 0.5 & 1 & 1.5 & 2\\
\hline
$C_0$ & 0.229 & 1.371 & 2.303 & 11.346 & 22.473 & 32.941 & 42.466 \\
 \hline
\end{tabular}
\end{center}
\end{table}

\begin{table}[h!]
\caption{The dependence of the value of an Asian option on the volatility parameter with $K=S_0/2$}
\label{tabular:optpr3}
\begin{center}
\begin{tabular}{|c||*{7}{c|}}
\hline
$\sigma$ & 0.01 & 0.05 & 0.1 & 0.5 & 1 & 1.5 & 2\\
\hline
$C_0$ & 50.115 & 50.201 & 50.107 & 50.055 & 51.669 & 55.832 & 59.443 \\
 \hline
\end{tabular}
\end{center}
\end{table}

Calculating the value of the option without costs and wiyh costs when parameters $\sigma = 0.05 $ and $S_0=100, K=70$ for a different number of portfolio revisions, we obtain the following result.
\begin{figure}[h!]
\center{\includegraphics[scale=0.6]{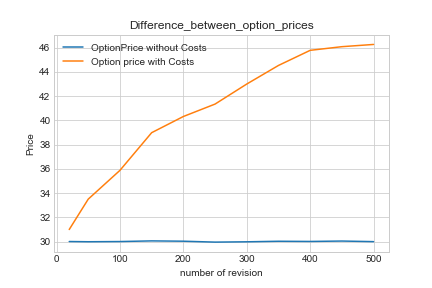}}
\caption{The value of the option in the market with transaction costs and without costs.}
\label{fig:Difference_option_prices}
\end{figure}
\vspace{5mm}

We have investigated the behavior of the hedging error $V_1^n-f_1$ with different portfolio revision numbers "n" \, and different parameters $\sigma$. Let $S_0=K=100$, $\kappa_0=0.05$.

\begin{table}[h!]
\caption{The hedging error when $\sigma=0.1$}
\label{tab_HedErr1}
\begin{center}
\begin{tabular}{|c||*{6}{c|}}
\hline
$n$ & 20 & 50 & 100 & 200 & 500 & 1000 \\
\hline
$V_1^n-f_1$ & -0.3264 & -0.1479 & -0.0693 & -0.0097 & 0.0026  & 0.0061  \\
 \hline
\end{tabular}
\end{center}
\end{table}

\begin{table}[h!]
\caption{The hedging error when $\sigma=0.9$}
\label{tab_HedErr2}
\begin{center}
\begin{tabular}{|c||*{6}{c|}}
\hline
$n$ & 20 & 50 & 100 & 200 & 500 & 1000 \\
\hline
$V_1^n-f_1$ & -0.7106 & -0.4065 & -0.3307 & -0.1938 & -0.0801  & -0.0213  \\
 \hline
\end{tabular}
\end{center}
\end{table}

An analysis of the numerical results showed that the value of the option increases if the strike price is less than the initial value of the stock. Volatility also affects the value of the option, it increases with increasing volatility, but not significantly. The portfolio revealed an inverse proportion between the number of risky and risk-free assets.
As a result of the experiment, the influence of the number of revisions of the portfolio $ n $ on the value of the option in financial markets with transaction costs was confirmed, it was revealed that with the growth of $ n $ the value of the option also increases. The cost of an option in financial markets without costs does not depend on the number of “revisions”. Also, a numerical experiment showed that in markets with transaction costs, the hedging error decreases with an increase in the number of portfolio revisions. It was also revealed that hedging error is greater with high market volatility.

\section{Appendix}
\subsection{Technical lemmas}\label{sec_dop_lem_small_volatil}

\begin{lemma}\label{shish_lemma1_B_n_small_volatil}
	Let
		\begin{equation}\label{shish_eq_B_n}
		B_n=\kappa_n\sum_{j=1}^n S_{t_j}|H(t_{j},\xi_{t_{j}},S_{t_j})-H(t_{j-1},				
		\xi_{t_{j-1}},S_{t_{j}})|,
		\end{equation}
		$$
			\kappa_n\underset{n\to\infty}{\longrightarrow}0
		$$
	then
		\begin{equation}
		\P-\lim_{n\rightarrow\infty}B_n=0
		\end{equation}
	\end{lemma}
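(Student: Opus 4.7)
The plan is to convert the discrete sum into a Riemann-type integral via the fundamental theorem of calculus, then use the density estimates of Propositions~\ref{shish_density_estim_part1}--\ref{shish_density_estim_part2} to bound the resulting integrand by a random variable that is a.s.~finite. Since the prefactor $\kappa_n\to 0$, this forces $B_n\to 0$ in probability.

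First, for each $j$ freeze the third argument of $H$ at $S_{t_j}$ and consider $g_j(s):=H(s,\xi_s,S_{t_j})$ on $[t_{j-1},t_j]$. Since $\xi_s$ is absolutely continuous with $d\xi_s/ds=S_s$, and $H=\hat G_y$ admits continuous mixed partials $\hat G_{yt},\hat G_{yx}$ (obtained by differentiating the density representation of $\hat G_y$ under the integral sign),
$$H(t_j,\xi_{t_j},S_{t_j})-H(t_{j-1},\xi_{t_{j-1}},S_{t_j})=\int_{t_{j-1}}^{t_j}\bigl(\hat G_{yt}(s,\xi_s,S_{t_j})+S_s\,\hat G_{yx}(s,\xi_s,S_{t_j})\bigr)\,ds.$$
Taking absolute values, summing over $j$, and writing $\tau_n(s):=t_j$ for $s\in(t_{j-1},t_j]$, one obtains the key bound
$$B_n\le \kappa_n\int_0^1 S_{\tau_n(s)}\Bigl(|\hat G_{yt}(s,\xi_s,S_{\tau_n(s)})|+S_s\,|\hat G_{yx}(s,\xi_s,S_{\tau_n(s)})|\Bigr)\,ds.$$

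Second, from $\hat G_y(t,x,y)=\int_b^\infty z\,\hat q(1-t,z)\,dz$ with $b=(K-x)_+/y$,
$$\hat G_{yt}(t,x,y)=-\int_b^\infty z\,\hat q_v(1-t,z)\,dz,\qquad \hat G_{yx}(t,x,y)=\frac{(K-x)_+}{y^2}\,\hat q(1-t,b),$$
the boundary term in $\hat G_{yt}$ vanishing by the choice of $b$. Propositions~\ref{shish_density_estim_part1} and~\ref{shish_density_estim_part2} give log-Gaussian tails for $\hat q$ and $|\hat q_v|$ with $v_*^{-2}$ and $v_*^{-4}$ prefactors. Since these log-Gaussian factors are integrable in $z$ with controllable polynomial moments, they produce pointwise bounds $|\hat G_{yt}|\le \Phi_1(t,\xi_t,S_t)$ and $|\hat G_{yx}|\le \Phi_2(t,\xi_t,S_t)$, where $\Phi_1,\Phi_2$ are continuous in $(\xi,S)$ and $s$-integrable on $[0,1]$ (the singularity of $v_*^{-k}$ as $t\to 1$ is compensated because it multiplies a localization to $z\le v$ or an exponentially decaying tail).

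Third, because $\tau_n(s)\to s$ uniformly and $s\mapsto S_s$ is a.s.~continuous, $S_{\tau_n(s)}\to S_s$ uniformly a.s.; combined with continuity of $\hat G_{yt},\hat G_{yx}$ in $y$ and the $s$-integrable majorant just obtained, dominated convergence yields
$$\int_0^1 S_{\tau_n(s)}\big(|\hat G_{yt}|+S_s\,|\hat G_{yx}|\big)(s,\xi_s,S_{\tau_n(s)})\,ds\xrightarrow[n\to\infty]{\text{a.s.}}I,$$
with $I:=\int_0^1 S_s\bigl(|\hat G_{yt}(s,\xi_s,S_s)|+S_s\,|\hat G_{yx}(s,\xi_s,S_s)|\bigr)ds<\infty$ a.s. Hence $B_n\le \kappa_n(I+o(1))$, and since $\kappa_n\to 0$ we conclude $B_n\xrightarrow{\P}0$. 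The main obstacle is the second step: one must transfer the log-Gaussian $z$-tails and the $v_*^{-k}$ factors from the propositions into pointwise bounds on $\hat G_{yt}$ and $\hat G_{yx}$ whose $s$-integral on $[0,1]$ is a.s.~finite, a calculation that requires careful bookkeeping near $t=1$ but uses only the stated density estimates.
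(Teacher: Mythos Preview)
Your overall plan---rewrite each summand via the chain rule as $\int_{t_{j-1}}^{t_j}(\hat G_{yt}+S_s\hat G_{yx})\,ds$, bound by a single integral over $[0,1]$, then use $\kappa_n\to 0$---is sound and is a more streamlined alternative to the paper's argument. The paper instead splits the increment into a pure $t$-piece $H(t_j,\xi_{t_j},S_{t_j})-H(t_{j-1},\xi_{t_j},S_{t_j})$ and a pure $x$-piece $H(t_{j-1},\xi_{t_j},S_{t_j})-H(t_{j-1},\xi_{t_{j-1}},S_{t_j})$, and then localizes via a stopping time $\tau_0=\inf\{t:\xi_t\ge K\}\wedge 1$ and the events $\Gamma_{\varepsilon,M}=\{\xi_1>K,\ \tau_0\le 1-\varepsilon,\ M^{-1}\le S\le M\}$ and $\tilde\Gamma_{\delta,M}=\{\xi_1\le K-\delta,\ M^{-1}\le S\le M\}$ (Propositions~\ref{shish_dop_prop1_for_lemma1_B_n_small_volatil}--\ref{shish_dop_prop2_for_lemma1_B_n_small_volatil}). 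Your chain-rule packaging avoids that explicit decomposition, which is a genuine simplification.

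However, your second step contains a gap exactly at the point you flag as the main obstacle. The sentence ``the singularity of $v_*^{-k}$ as $t\to 1$ is compensated because it multiplies a localization to $z\le v$ or an exponentially decaying tail'' is not correct: on $\{z\le v\}$ the density bounds from Propositions~\ref{shish_density_estim_part1}--\ref{shish_density_estim_part2} are just $c/v_*^{k}$, with \emph{no} compensation. Concretely, if $b=(K-\xi_s)_+/y$ is small and $v=1-s$ is small with $b<v$, then the contribution $\int_b^{v} z\,|\hat q_v(v,z)|\,dz$ to $|\hat G_{yt}|$ is of order $v^2/v_*^{4}\sim c\,v^{-2}$, which is not $s$-integrable on $[0,1]$. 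What actually rescues the bound is a pathwise dichotomy for $\xi$: since $\xi_t$ is strictly increasing and $\P(\xi_1=K)=0$, either $\xi$ crosses $K$ at some $\tau_0<1$ (after which $b\equiv 0$, $\hat G_{yx}\equiv 0$ and $\hat G_{yt}\equiv -1$), or $\xi_1<K$ so that $b$ is bounded below by some $b_*>0$ near $t=1$, forcing $b>v$ and activating the log-Gaussian tail that dominates $v_*^{-k}$. This is precisely what the paper's localization on $\Gamma_{\varepsilon,M}$ and $\tilde\Gamma_{\delta,M}$ encodes. Your argument becomes rigorous once you insert this case split (and note that on $\{M^{-1}\le S\le M\}$ the $y$-argument $S_{\tau_n(s)}$ is harmlessly trapped in $[M^{-1},M]$, giving an $n$-independent majorant); without it, the claimed $s$-integrable bounds $\Phi_1,\Phi_2$ do not follow from the density estimates alone.
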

	
	\begin{proof}
 	We can represent
 		\begin{align*}
 		H(t_{j},\xi_{t_{j}},S_{t_j})&=\underbrace{H(t_{j},\xi_{t_{j}},S_{t_j})-H(t_{j-1},	\xi_{t_{j}},S_{t_j})}_{h_j}+\underbrace{H(t_{j-1},\xi_{t_{j}},S_{t_j})-H(t_{j-1},\xi_{t_{j-1}},S_{t_j})}_{h_j^{(1)}}
 		\\
 		&+H(t_{j-1},\xi_{t_{j-1}},S_{t_j}).
 		\end{align*}
Then
 		$$
 		B_n=D_n^{(t)}+D_n^{(x)}=\kappa_n\sum_{j=1}^n d_j^{(t)}+\kappa_n\sum_{j=1}^n d_j^{(x)},
 		$$
 	where
 		$$
 		d_j^{(t)}=S_{t_j} h_j,\;\;\;\;\;\;d_j^{(x)}=S_{t_j}h_j^{(1)}.
 		$$
	It is necessary to show that $\forall\mu>0$
	\begin{equation}\label{shish_main_eq_of_lem_B_n_small_volatil}
	\lim\limits_{n\rightarrow\infty}\P(D_n^{(t)}>\mu)=0\;\;\;\text{and}\;\;\;
	\lim\limits_{n\rightarrow\infty}\P(D_n^{(x)}>\mu)=0.
	\end{equation}

	Recall that
	$$
	H(t,x,y)=\hat{G}_y(t,x,y)=\int\limits_b^{\infty}z\hat{q}(v,z)dz,
	$$
	where $\hat{q}(v,z)$ -- the density of the random density
	$$
	\hat{\eta}_v=\int\limits_0^v\exp\{\hat{\sigma}W_u-\hat{\sigma}^2u/2\}du
	$$  
	and
	$$
	b=\left(\frac{K-x}{y}\right)_+.
	$$
	We introduce the stopping time
		\begin{equation}\label{shish_stop_time}
			\tau_0=\inf\{t>0:\xi_t\geq K\}\wedge 1.
		\end{equation}	
	Clear that always $0<\tau_0\leq 1$ and starting from $\tau_0$, all coefficients $b=0$. To compensate $b$ and $S_t$ we introduce the following sets
	$$
	\Gamma_{\varepsilon,M}=\{\xi_1>K\} \cap \{\tau_0\leq 1-\varepsilon\} \cap \{ M^{-1}\leq \min\limits_{0\leq t\leq 1}S_t\leq \max\limits_{0\leq t\leq 1}S_t\leq M\}
	$$
	$$
	\tilde{\Gamma}_{\delta,M}=\{\xi_1\leq K-\delta\} \cap \{ M^{-1}\leq \min\limits_{0\leq t\leq 1}S_t\leq \max\limits_{0\leq t\leq 1}S_t\leq M\}
	$$ 
	moreover
	\begin{equation}
	\lim\limits_{M\to\infty}\lim\limits_{\varepsilon\to 0}\P(\Gamma_{\varepsilon,M}^c)=0, \;\;\;\;\;
	\lim\limits_{M\to\infty}\lim\limits_{\delta\to 0}\P(\tilde{\Gamma}_{\delta,M}^c)=0.
	\end{equation}		
	
Then we represent probabilities as
 		\begin{align}\label{shish_eq_D_n_t_equal_sum}
		\P (D_n^{(t)}>\mu)&=\P(D_n^{(t)}>\mu,\xi_1>K)+\P (D_n^{(t)}>\mu,\xi_1\leq K)\notag
		\\[3mm]
		&\leq \P(D_n^{(t)}>\mu,\Gamma_{\varepsilon,M})+\P(\Gamma^c_{\varepsilon,M})+
		\P(D_n^{(t)}>\mu,\tilde{\Gamma}_{\delta,M})+\P(\tilde{\Gamma}^c_{\delta,M})
		\end{align}
		and
		\begin{align}\label{shish_eq_D_n_x_equal_sum}
		\P (D_n^{(x)}>\mu)&=\P(D_n^{(t)}>\mu,\xi_1>K)+\P (D_n^{(x)}>\mu,\xi_1\leq K)\notag
		\\[3mm]
		&\leq \P(D_n^{(x)}>\mu,\Gamma_{\varepsilon,M})+\P(\Gamma^c_{\varepsilon,M})+
		\P(D_n^{(x)}>\mu,\tilde{\Gamma}_{\delta,M})+\P(\tilde{\Gamma}^c_{\delta,M})
		\end{align}
	Taking into account Proposition \ref{shish_dop_prop1_for_lemma1_B_n_small_volatil} and Proposition \ref{shish_dop_prop2_for_lemma1_B_n_small_volatil} we obtain the equalities (\ref{shish_main_eq_of_lem_B_n_small_volatil}). 	
 \end{proof}

 	\begin{proposition}\label{shish_dop_prop1_for_lemma1_B_n_small_volatil}
	For fixed $\varepsilon>0$ and $M>0$ 		
 		\begin{equation}
			\lim\limits_{n\rightarrow\infty}\P(D_n^{(t)}>\mu,\Gamma_{\varepsilon,M})=0;
		\end{equation}
		\begin{equation}
			\lim\limits_{n\rightarrow\infty}\P(D_n^{(x)}>\mu,\Gamma_{\varepsilon,M})=0.
		\end{equation}
	Here
	$$
	\Gamma_{\varepsilon,M}=\{\xi_1>K\} \cap \{\tau_0\leq 1-\varepsilon\} \cap \{ M^{-1}\leq \min\limits_{0\leq t\leq 1}S_t\leq \max\limits_{0\leq t\leq 1}S_t\leq M\}
	$$	
	\end{proposition}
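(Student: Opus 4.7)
The plan is to exploit the structure of $\Gamma_{\varepsilon,M}$: since $\xi_t$ is nondecreasing and $\tau_0\le 1-\varepsilon$, the free boundary $b=(K-\xi_{t_j})_+/S_{t_j}$ vanishes for every grid point with $t_j\ge\tau_0$, so $H$ drastically simplifies there. Moreover, with $\alpha=1/2$ the modified volatility $\hat\sigma$ is a constant independent of $n$, hence Propositions~\ref{shish_density_estim_part1}--\ref{shish_density_estim_part2} yield $n$-uniform bounds for $\hat q$, $\hat q_z$ and $\hat q_v$ with constants depending only on a lower bound for $v_*$. I would therefore split $\{1,\dots,n\}$ into the ``in-the-money'' indices with $\xi_{t_j}\ge K$ and the ``out-of-the-money'' indices with $\xi_{t_j}<K$; on $\Gamma_{\varepsilon,M}$ the latter all satisfy $t_{j-1}<\tau_0\le 1-\varepsilon$, so $v=1-t\ge\varepsilon/2$ throughout each cell $[t_{j-1},t_j]$ for large $n$, and in particular $v_*\ge v_\varepsilon>0$ uniformly.

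For the in-the-money indices both evaluations in $h_j$ have threshold $b=0$, so $H(t,\xi_{t_j},S_{t_j})=\int_0^\infty z\hat q(1-t,z)\,dz=\E\hat\eta_{1-t}=1-t$, a function of $t$ alone. Consequently $h_j=-1/n$ exactly, and whenever in addition $\xi_{t_{j-1}}\ge K$, also $h_j^{(1)}=0$; only a single ``transition'' index (the first $j$ with $\xi_{t_j}\ge K$) can fail the latter condition. Thus the contribution from this group to $\kappa_n\sum_j S_{t_j}|h_j|$ is at most $\kappa_n M$ and to $\kappa_n\sum_j S_{t_j}|h_j^{(1)}|$ at most $O(\kappa_n M)$, both of which vanish.

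For the out-of-the-money indices I would apply the mean value theorem,
\[
|h_j|\le n^{-1}\sup_{s\in[t_{j-1},t_j]}|\partial_t H(s,\xi_{t_j},S_{t_j})|,\qquad |h_j^{(1)}|\le|\xi_{t_j}-\xi_{t_{j-1}}|\sup_\xi|\partial_x H(t_{j-1},\xi,S_{t_j})|,
\]
with $\partial_t H=-\int_b^\infty z\hat q_v(1-t,z)\,dz$ and $\partial_x H=(b/y)\hat q(1-t,b)$. Using Propositions~\ref{shish_density_estim_part1}--\ref{shish_density_estim_part2} together with the $\Gamma_{\varepsilon,M}$ bounds $b\le KM$, $y\ge M^{-1}$ and $v_*\ge v_\varepsilon$, both derivatives are controlled by a constant $C_{\varepsilon,M}$. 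Since also $|\xi_{t_j}-\xi_{t_{j-1}}|\le M/n$, I obtain $|d_j^{(t)}|,|d_j^{(x)}|\le C'_{\varepsilon,M}/n$. Summing over at most $n$ such indices and multiplying by $\kappa_n$ yields
\[
D_n^{(t)}\Chi_{\Gamma_{\varepsilon,M}}\le \kappa_n C''_{\varepsilon,M},\qquad D_n^{(x)}\Chi_{\Gamma_{\varepsilon,M}}\le \kappa_n C''_{\varepsilon,M},
\]
so that $\kappa_n\to 0$ gives the two stated probability limits.

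The main technical obstacle is extracting those uniform derivative bounds from the pointwise density estimates: one must integrate $z\hat q_v(v,z)$ and the product $z\hat q(v,z)$ against the Gaussian-in-logarithm tail $\exp\{-\kappa(\ln(z/v))^2/(\hat\sigma^2 v)\}$, and verify that these integrals are finite and depend only on $\varepsilon$, $M$ and the fixed $\hat\sigma$. This can be handled by completing the square after the substitution $u=\ln(z/v)$, but the resulting constants do depend on $\hat\sigma$; the hypothesis $\alpha=1/2$ is crucial precisely because it makes $\hat\sigma$ a genuine constant, so these constants cannot blow up with $n$ and are absorbed by the prefactor $\kappa_n=O(n^{-1/2})$.
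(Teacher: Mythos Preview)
Your argument is correct and follows essentially the same route as the paper: both exploit that on $\Gamma_{\varepsilon,M}$ the threshold $b$ vanishes once $\xi\ge K$ (so $H=1-t$ there), and for the remaining indices use the density bounds of Propositions~\ref{shish_density_estim_part1}--\ref{shish_density_estim_part2} together with $v\ge\varepsilon$ to get a uniform constant $C_{\varepsilon,M}$, after which $\kappa_n\to 0$ kills both sums. The only cosmetic difference is that the paper splits at the deterministic index $n_1=[(1-\varepsilon)n]$ rather than at the random transition index where $\xi_{t_j}$ first exceeds $K$; on $\Gamma_{\varepsilon,M}$ these two splits agree up to finitely many terms, so the bounds are the same.
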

 	\begin{proof}
	First we divide  $D_n^{(t)}$ and $D_n^{(x)}$ into two amounts
	\begin{align*}
	D^{(t)}_n&=\kappa_n\sum_{j=1}^{n_1} d^{(t)}_j+\kappa_n\sum_{j=n_1+1}^{n} d^{(t)}_j,
	\\[3mm]
	D^{(x)}_n&=\kappa_n\sum_{j=1}^{n_1} d^{(x)}_j+\kappa_n\sum_{j=n_1+1}^{n} d^{(x)}_j,
	\;\;\;\;n_1=[(1-\varepsilon)n].
	\end{align*}	
	We choose $n_1$ from the condition $t_{n_1}\geq 1-\varepsilon$. In this case, we have 
	$$
	(t_j)_{j=n_1+1}^n>\tau_0 \;\;\;\text{and}\;\;\;(\xi_{t_j})_{j=n_1+1}^n\geq K.
	$$
	Therefore
	$$
	b_{j}=0,\;\;\;j=\overline{n_1+1,n}
	$$
	and since
	\begin{align*}
	H(t,x,y)&=\int\limits_b^{\infty} z\hat{q}(v,z)dz=\int\limits_0^{\infty} z\hat{q}(v,z)dz
	=\E\hat{\eta}_v
	\\[3mm]
	&=\E\left(\int\limits_0^v\exp\{\hat{\sigma}W_u-\hat{\sigma}^2u/2\}du\right)
	=\int\limits_0^v \E\exp\{\hat{\sigma}W_u-\hat{\sigma}^2u/2\}du=v
	\end{align*}
	we obtain
	$$
	d^{(t)}_j=S_{t_j}|H(t_{j},\xi_{t_{j}},S_{t_j})-H(t_{j-1},\xi_{t_{j}},S_{t_j})|\leq M \Delta v_j\leq \frac{M}{n},\;\;\;j=\overline{n_1+1,1}
	$$
	$$
	\kappa_n \sum_{j=n_1+1}^{n} d^{(t)}_j\underset{n\to\infty}{\longrightarrow} 0
	$$
	и
	$$
	d^{(x)}_j=S_{t_j}|H(t_{j-1},\xi_{t_{j}},S_{t_j})-H(t_{j-1},\xi_{t_{j-1}},S_{t_j})|=0,\;\;\;\;j=\overline{n_1+1,1}.
	$$
	Thus, on the set $\Gamma_{\varepsilon,M}$ we have
	$$
	D^{(t)}_n= \kappa_n\sum_{j=1}^{n_1} d^{(t)}_j;\;\;\;D^{(x)}_n= \kappa_n\sum_{j=1}^{n_1} d^{(x)}_j.
	$$
	
	Next we evaluate $d^{(t)}_j$ on the interval $[t_1,t_{n_1}]$, on which all $t_j<1-\varepsilon$.
	\begin{align*}
	d^{(t)}_j&=S_{t_j}\bigg| \int\limits_{b_{t_j}}^{\infty} z\hat{q}(v_j,z)dz - 
	\int\limits_{b_{t_{j-1}}}^{\infty} z\hat{q}(v_{j-1},z)dz\bigg|
	\\[3mm]
	&\leq M \int\limits_{0}^{\infty} z  |\hat{q}(v_j,z)-\hat{q}(v_{j-1},z)|  dz
	\leq M \int\limits_{0}^{\infty} z  \bigg| \int\limits_{v_{j-1}}^{v_j} \hat{q}_v(u,z) du \bigg|  dz	
	\\[3mm]
	&\leq  \int\limits_{0}^{1} z  \bigg| \int\limits_{v_{j-1}}^{v_j} \hat{q}_v(u,z) du \bigg|  dz
	+ \int\limits_{1}^{\infty} z  \bigg| \int\limits_{v_{j-1}}^{v_j} \hat{q}_v(u,z) du \bigg|  dz.
	\end{align*}
	Naking into account Proposition \ref{shish_density_estim_part2}, we can estimate
	$$
		\bigg|\int_{v_{j-1}}^{v_j}\hat{q}_v(u,z)du\bigg|\leq\frac{c}{\varepsilon^4}\int_{v_{j}}^{v_{j-1}}\exp\{-\theta(\ln(z/v))^2\}du,
		$$   
	because we can evaluate $1/v\leq 1$. If $0<z\leq 1$, then we can use a uniform estimate for $\hat{q}_v(v,z)$	
	$$
	\hat{q}_v(v,z)\leq \frac{c}{v^4} 
	\leq \frac{c}{\varepsilon^4}
	$$
	and if $z>1$, we use inequality $\ln(z/v)\geq \ln(z/\varepsilon)$. Then
	$$
	\hat{q}_v(v,z)\leq \frac{c}{\varepsilon^4}
	\exp\left\lbrace-\theta(\ln(z/\varepsilon))^2\right\rbrace.
	$$
	Therefore,
	$$
	d^{(t)}_j\leq \frac{M c}{2 n \varepsilon^4}+
	\frac{M c}{n\varepsilon^4}
	\int\limits_{1}^{\infty} z \exp\left\lbrace -\theta
	(\ln(z/\varepsilon))^2\right\rbrace   dz,
	$$
	and the integral
	$$ I_1=\int\limits_{1}^{\infty} z \exp\left\lbrace -\theta
	(\ln(z/\varepsilon))^2\right\rbrace   dz
	$$ 
	is converge. Then
	$$
	D^{(t)}_n\leq M\kappa_n\sum_{j=1}^{n_1} d^{(t)}_j\leq \kappa_n\sum_{j=1}^{n_1}\frac{M c}{n \varepsilon^4} (1/2+I_1)
	=\kappa_n\frac{M c n_1}{n \varepsilon^4} (1/2+I_1)\xrightarrow[n\rightarrow\infty]{\mathbf{P}}0
	$$
	that is 
	$$
	\lim\limits_{n\rightarrow\infty}\P(D_n^{(t)}>\mu,\Gamma_{\varepsilon,M})=0.
	$$
	
	Let's estimate $d^{(x)}_j$ on the interval $[t_1,t_{n_1}]$. Since 
	$H(t,x,y)=\int\limits_{b}^{\infty}z\hat{q}(v,z)dz$,		
	then
	$$
		H'_x(t,x,y)(t,x,y)=-\frac{b}{y}\hat{q}(v,b)
	$$
	and we can evaluate this derivative using Proposition \ref{shish_density_estim_part1}.
		$$
			|H_{x}(t,x,y)|\leq\frac{bq(v,b)}{y}\leq 
			\frac{K M^2 c}{\varepsilon^2} 
		$$
	We used a uniform estimate for $\hat{q}(v,b)$. Thus the derivative $H_{x}(t,x,y)$ is uniformly bounded. This means that the function $H(t,x,y)$ satisfies the conditions of Lipschitz and moreover
		\begin{align*}
		|H(t_{j-1},\xi_{t_{j}},S_{t_j})-H(t_{j-1},\xi_{t_{j-1}},S_{t_j})|&\leq |H_{x}(t,x,y)|
		|\xi_{t_j}-\xi_{t_{j-1}}|\leq \frac{K M^2 с}{\varepsilon^2}|\xi_{t_j}-\xi_{t_{j-1}}|	
		\\[3mm]
		&\leq\frac{K M^2 с}{\varepsilon^2}\int_{t_{j-1}}^{t_j}S_udu\leq 
		\frac{K M^3 с}{\varepsilon^2} \Delta t_j=\frac{K M^3 с}{n\varepsilon^2}.
		\end{align*}
	Then
		$$
			D^{(x)}_n\leq\kappa_n M\sum_{j=1}^{n_1}
			\frac{K M^3 с}{n\varepsilon^2}
			\leq
			\kappa_n\frac{ K M^4 c n_1}{n\varepsilon^2}
			\xrightarrow[n\rightarrow\infty]{\mathbf{P}}0.
		$$
	that is
	$$
	\lim\limits_{n\rightarrow\infty}\P(D_n^{(x)}>\mu,\Gamma_{\varepsilon,M})=0.
	$$
	
	\end{proof}

 	\begin{proposition}\label{shish_dop_prop2_for_lemma1_B_n_small_volatil}
 		For fixed $\delta>0$ and $M>0$ 		
 		\begin{equation}
			\lim\limits_{n\rightarrow\infty}\P(D_n^{(t)}>\mu,\tilde{\Gamma}_{\delta,M})=0;
		\end{equation}
		\begin{equation}
			\lim\limits_{n\rightarrow\infty}\P(D_n^{(x)}>\mu,\tilde{\Gamma}_{\delta,M})=0.
		\end{equation}
	Here
	$$
	\tilde{\Gamma}_{\delta,M}=\{\xi_1\leq K-\delta\} \cap \{ M^{-1}\leq \min\limits_{0\leq t\leq 1}S_t\leq \max\limits_{0\leq t\leq 1}S_t\leq M\}
	$$ 	
	\end{proposition}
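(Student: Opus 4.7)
The plan is to show that on the event $\tilde{\Gamma}_{\delta,M}$ the quantities $D_n^{(t)}$ and $D_n^{(x)}$ are deterministically bounded by $\kappa_n$ times a constant depending only on $\delta,M,\sigma$, so that for any fixed $\mu>0$ the probabilities $\P(D_n^{(t)}>\mu,\tilde{\Gamma}_{\delta,M})$ and $\P(D_n^{(x)}>\mu,\tilde{\Gamma}_{\delta,M})$ vanish once $n$ is large enough. The key geometric observation is that on $\tilde{\Gamma}_{\delta,M}$ one has $\xi_{t_j}\leq \xi_1\leq K-\delta$ and $M^{-1}\leq S_{t_j}\leq M$ for every $j$, hence the lower endpoint $b_{t_j}=(K-\xi_{t_j})/S_{t_j}$ in the definition of $H$ satisfies the uniform bound $b_{t_j}\geq \delta/M$. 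This is precisely what compensates the singularity at $v=0$ in the density estimates of Propositions \ref{shish_density_estim_part1}--\ref{shish_density_estim_part2}: their exponential factor $\exp\{-\kappa(\ln(z/v))^2/(\sigma^2v)\}$ stays active whenever $z$ is bounded below and dominates the polynomial blow-ups $v_*^{-2}$ and $v_*^{-4}$.

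For $D_n^{(t)}$, since only the time variable changes inside $H$ in the definition of $d_j^{(t)}$, the fundamental theorem of calculus combined with Fubini gives
\begin{equation*}
d_j^{(t)}\leq M\int_{v_j}^{v_{j-1}}\Phi(u)\,du,\qquad \Phi(u):=\int_{\delta/M}^\infty z\,|\hat{q}_v(u,z)|\,dz,
\end{equation*}
and telescoping in $j$ reduces the task to proving $\int_0^1\Phi(u)\,du\leq C_1(\delta,M,\sigma)$. Invoking Proposition \ref{shish_density_estim_part2} and splitting at $u_0=\delta/(2M)$: for $u\geq u_0$ the factor $u_*$ is bounded below and the bound is immediate, while for $u<u_0$ one has $z\geq \delta/M>2u$ so the exponential factor is active. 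The substitution $w=\ln(z/u)$ converts the inner integral into $u^{-2}\int_{\ln(\delta/(Mu))}^\infty\exp\{2w-\kappa w^2/(\sigma^2u)\}\,dw$, and completing the square places the Gaussian mode $\sigma^2u/\kappa$ far below the lower limit $\ln(\delta/(Mu))$ when $u$ is small, so the tail decays super-polynomially in $u$ and easily absorbs the $u^{-4}$ factor.

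For $D_n^{(x)}$, I will use the explicit derivative $H_x(t,x,y)=b\,\hat{q}(v,b)/y$ valid for $x<K$. On $\tilde{\Gamma}_{\delta,M}$ we have $b\leq KM$ and $1/y\leq M$, while Proposition \ref{shish_density_estim_part1} applied at $z=b\geq\delta/M$ yields $\hat{q}(v,b)\leq C'(\delta,M,\sigma)$ uniformly in $v\in(0,1]$ by the same exponential-versus-singularity trade-off. Writing $\xi_{t_j}-\xi_{t_{j-1}}=\int_{t_{j-1}}^{t_j}S_u\,du\leq M/n$ and applying the mean value theorem gives $d_j^{(x)}\leq M^2C_2/n$ with $C_2=C_2(\delta,M,\sigma)$, hence $D_n^{(x)}\chi_{\tilde{\Gamma}_{\delta,M}}\leq \kappa_nM^2C_2\to 0$. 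The main obstacle throughout will be the integrability argument sketched in the previous paragraph: rigorously carrying out the logarithmic substitution and verifying that the lower limit dominates the Gaussian mode, thereby absorbing the $v_*^{-4}$ singularity of $\hat{q}_v$ to produce an integrable $\Phi(u)$ on $(0,1)$.
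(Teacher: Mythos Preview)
Your proposal is correct and follows essentially the same approach as the paper: both exploit the uniform lower bound $b_{t_j}\geq b_*:=\delta/M$ on $\tilde{\Gamma}_{\delta,M}$, appeal to the density estimates of Propositions~\ref{shish_density_estim_part1}--\ref{shish_density_estim_part2}, split according to whether $v$ lies above or below a threshold of order $b_*$, and use the exponential factor $\exp\{-\kappa(\ln(b_*/v))^2/(\hat\sigma^2 v)\}$ to absorb the polynomial singularity $v_*^{-2}$ (resp.\ $v_*^{-4}$); for $D_n^{(t)}$ the paper likewise telescopes $\sum_j\int_{v_j}^{v_{j-1}}\Phi(u)\,du$ into $\int_0^1\Phi(u)\,du$ and performs the same logarithmic change of variable. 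Your treatment of $D_n^{(x)}$ is marginally more direct---you observe that $\hat{q}(v,b)$ is uniformly bounded for $b\geq b_*$, whereas the paper keeps the two regimes separate by splitting the sum at $n_1=[(1-b_*)n]$---but the substance is identical.
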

	
 	\begin{proof}
	We can use the following estimate
	$$
	D_n^{(x)}\leq\kappa_n M\sum_{j=1}^{n}|H_{x}(t,x,y)||\xi_{t_j}-\xi_{t_{j-1}}|,
	$$
	since
	$$
	|H_{x}(t,x,y)|\leq\frac{b\hat{q}(v,b)}{y}\leq KM^2\frac{c}{v^2}
			\left(\exp\left\lbrace -\theta(\ln(b/v))^2\right\rbrace\Chi_{\{z>v\}}+\Chi_{\{z\leq v\}}\right).
	$$
	It's obvious that $\xi_{t_j}\leq \xi_1\leq K-\delta$ on the set $\tilde{\Gamma}_{\delta,M}$, so
	$$
	b_*=\frac{\delta}{M}\leq b_{t_j}\leq KM.
	$$
	Similar to the proof of the Proposition \ref{shish_dop_prop2_for_lemma1_B_n_small_volatil} we split $D_n^{(x)}$ into two amounts 
	$$
	D^{(x)}_n=\kappa_n\sum_{j=1}^{n_1} d^{(x)}_j+\kappa_n\sum_{j=n_1+1}^{n} d^{(x)}_j.
	$$
	If $0<v<b_*$, i.e. $1-b_*<t<1$, then we use estimate
	$$
	|H_{x}(t,x,y)|\leq\frac{KM^2 c}{v^2}\exp\left\lbrace -\theta(\ln(b_*/v))^2\right\rbrace.
	$$
	If $b_*<v<1$, i.e. $0\leq t\leq 1-b_*$, then
	$$
	|H_{x}(t,x,y)|\leq\frac{KM^2 c}{b_*^2}.
	$$
	Thus,
		\begin{align*}
			D^{(x)}_n&\leq \frac{\kappa_n c M^3K}{b_*^2}\sum_{j=1}^{n_1}|\xi_{t_j}-\xi_{t_{j-1}}|+
			M^3K\kappa_n\sum_{j=n_1+1}^{n}\frac{c}{v_j^2}
			\exp\{-\theta(\ln(b_*/v_j))^2\}|\xi_{t_j}-\xi_{t_{j-1}}|
			\\[3mm]
			&\leq \frac{c M^4 K n_1\kappa_n}{n b_*^2}+M^4 K\kappa_n
			\sum_{j=n_1+1}^{n}\frac{c}{v_j^2}\exp\{-\theta(\ln(b_*/v_j))^2\}\Delta t_j
		\end{align*}
	Note that
		$$
			\sum_{j=n_1+1}^{n}\frac{c}{v_j^2}\exp\{-\theta(\ln(b_*/v_j))^2\}\Delta t_j					\xrightarrow[n\rightarrow\infty]{}\int\limits_{1-b_*}^1\frac{c}{v^2}\exp\{-\theta(\ln(b_*/v))^2\}dt.
		$$
	Consider
	$$
	J=\int\limits_{1-b_*}^1\frac{c}{v^2}\exp\{-\theta(\ln(b_*/v))^2\}dt.
	$$ 
	Given that $v=1-t$, make a variable change
	$$
	\tilde{a}=\frac{1}{1-t},
	$$ 
	then $dt=d\tilde{a}/\tilde{a}^2$ and we obtain
		\begin{align*}
			J&=\int_{1/b_*}^{\infty}\tilde{a}^2\exp\{-\theta(\ln(b_*\tilde{a}))^2\}d						\tilde{a}
			=\frac{1}{b_*}\int_{1}^{\infty}\exp\{-\theta(\ln\hat{a})^2\}d\hat{a}
			\\[3mm]
			&=\frac{1}{b_*}\int_{0}^{\infty}\exp\{-\theta(\ln y)^2+y\}dy
		\end{align*}
	The last integral $I_2=\frac{1}{b_*}\int_{0}^{\infty}\exp\{-\theta(\ln y)^2+y\}dy$ converges. Thus,
		$$
			D_n^{(x)}\leq \kappa_n c KM^4
			\left(\frac{n_1}{nb_*^2}+I_2\right) 
			\xrightarrow[n\rightarrow\infty]{\P}0
		$$
	and
	$$
	\lim\limits_{n\rightarrow\infty}\P(D_n^{(x)}>\mu,\tilde{\Gamma}_{\delta,M})=0.
	$$
	
	Consider
	$$
	D_n^{(t)}=\kappa_n\sum_{j=1}^{n}d_j^{(t)}.
	$$
		
	Recall that on the set $\tilde{\Gamma}_{\delta,M}$ all  $\xi_{t_j}\leq\xi_1\leq K-\delta$, then
	$$
	b_{t_j}=\frac{K-\xi_{t_j}}{S_{t_j}}\geq b_*,\;\;\;\delta_1=\frac{\delta}{M}.
	$$
	Similarly, we evaluate $d_j^{(t)}$.
		\begin{align*}
		d_j^{(t)}&=S_{t_j}\bigg|\int_{b_{t_j}}^{\infty}z\hat{q}(v_j,z)dz-\int_{b_{t_{j-1}}}^{\infty}
		z\hat{q}(v_{j-1},z)dz\bigg|\leq M \int_{b_*}^{\infty}z \bigg|\int_{v_{j-1}}^{v_j}q_v(u,z)du	\bigg|dz
		\\[3mm]
		&\leq M\int_{b_*}^{\infty}z\left(\int_{v_{j}}^{v_{j-1}}|\hat{q}_v(u,z)|du\right)dz=
		M\int_{v_{j}}^{v_{j-1}}\left(\int_{b_*}^{\infty}z|\hat{q}_v(u,z)|dz\right)du
		\end{align*}
	Next, we use the estimate for the derivative of the density $\hat{q}_v(u,z)$, obtained in Proposition \ref{shish_density_estim_part2},
		\begin{align*}
			\int_{v_{j}}^{v_{j-1}}\left(\int_{b_*}^{\infty}z|\hat{q}_v(u,z)|dz\right)du
			\leq\int_{v_{j}}^{v_{j-1}}\frac{c}{u^4}\left(\int_{b_*}^{\infty}z
			\exp\{-	\theta(\ln(z/u))^2\}dz\right)du
		\end{align*}
	Make the change of variable $y=z/u$
	\begin{align*}
			\int_{v_{j}}^{v_{j-1}}\left(\int_{b_*}^{\infty}z|\hat{q}_v(u,z)|dz\right)du
			&\leq
			\int_{v_{j}}^{v_{j-1}}\frac{c}{u^2}\left(\int_{b_*/u}^{\infty}y
			\exp\{-\theta(\ln y)^2\}dy\right)du
			\\[3mm]
			&=\int_{v_{j}}^{v_{j-1}}\frac{c}{u^2} I(u)du
		\end{align*}
	We consider separately the integral
	$$
	I(u)=\int_{b_*/u}^{\infty}y	\exp\{-\theta(\ln y)^2\}dy.
	$$
  We use the change of variable $x=\ln y$ and obtain
		\begin{align*}
		I(u)=\int_{\ln(b_*/u)}^{\infty}\exp\{-\theta x^2+2x\}dx.
		\end{align*}
Select a parabola so that $c_*=\sup\limits_{x}(\exp\{-\theta x^2/2+2x\})$, then we can write an estimate
\begin{align*}
I(u)&\leq c_*\int_{\ln(b_*/u)}^{\infty}\exp\{-\theta x^2/2\}dx
= c_*\int_{\ln(b_*/u)}^{\infty} \exp\left\lbrace - \theta x^2/4 \right\rbrace 
\exp\left\lbrace - \theta x^2/4 \right\rbrace dx
\\[3mm]
&\leq c_*\exp\left\lbrace - \theta (\ln (b_*/u))^2/4 \right\rbrace \int_{-\infty}^{\infty} \exp\left\lbrace - \theta x^2/4 \right\rbrace dx
\end{align*}		
	If $u\geq b_*$, then
	$$
	I(v)\leq c_*\int_{-\infty}^{+\infty}\exp\left\lbrace - \theta x^2/4 \right\rbrace dx.
	$$
	If $0<u<b_*$, then $\ln(b_*/u)>0$ and we can evaluate the integral as
		$$
			I(u)\leq c_*\exp\left\lbrace - \theta (\ln (b_*/u))^2/4 \right\rbrace \int_{0}^{\infty} \exp\left\lbrace - \theta x^2/4 \right\rbrace dx.
		$$
Then
		\begin{align*}
			d^{(t)}_j&\leq M\int_{v_j}^{v_{j-1}}\frac{c}{u^2} I(u)du
			\leq M\int_{0}^{1}\frac{c}{u^2} I(u)du
			\\[3mm]
			&\leq MJ_1\int_0^{b_*}\exp\left\lbrace - \theta (\ln (b_*/u))^2/4 \right\rbrace
			\frac{c}{u^2}du+MJ_2 \int_{b_*}	^1\frac{c}{u^2}du,
		\end{align*}
	with constants
		$$
			J_1=c_*\int_0^{+\infty}\exp\left\lbrace - \theta x^2/4 \right\rbrace dx;\;\;\;\;\;
			J_2=c_*\int_{-\infty}^{+\infty}\exp\left\lbrace - \theta x^2/4 \right\rbrace dx.
		$$
	It is clear that the integral 
	$$
	I_1=\int_{b_*}^1\frac{c}{u^2}du
	$$ 
	converges. Consider the integral
	$$
	I_2=\int_0^{b_*}\exp\left\lbrace - \theta (\ln (b_*/u))^2/4 \right\rbrace
			\frac{c}{u^2}du .
	$$ 
	Let's make a change $y=b_*/u$, then a change $z=\ln y$, so
		$$
		I_2=\frac{c}{b_*^2}\int_{1}^{\infty} \exp\left\lbrace - \theta (\ln y)^2/4 \right\rbrace
		dy
		=\frac{c}{b_*^2}\int_{1}^{\infty} \exp\left\lbrace - \theta z^2/4+z \right\rbrace
		dz<+\infty
		$$
	Thus, it was shown that $d^{(t)}_j$ is limited by some constant
	$$
	d^{(t)}_j\leq l(M,J_1,J_2,I_1,I_2).
	$$
	Therefore,
		$$
			D^{(t)}_n=\kappa_n\sum_{j=1}^{n} d^{(t)}_j\leq \frac{\kappa_n l(M,J_1,J_2,I_1,I_2) }{n}\xrightarrow[n\rightarrow\infty]{\P}0
		$$
	and we get that
	$$
	\lim\limits_{n\rightarrow\infty}\P(D_n^{(t)}>\mu,\tilde{\Gamma}_{\delta,M})=0.
	$$.

	\end{proof}

	\begin{lemma}\label{shish_lem2_D_n_small_volatil}
		Let	
		$$
			D_n^{(y)}=\kappa_n\sum_{j=1}^n d_j^{(y)},
		$$
		where
		$$
			d_j^{(y)}=S_{t_j}|H(t_{j-1},\xi_{t_{j-1}},S_{t_{j}})-H(t_{j-1},											\xi_{t_{j-1}},S_{t_{j-1}})-H_y(t_{j-1},\xi_{t_{j-1}},S_{t_{j-1}})(S_{t_{j}}-							S_{t_{j-1}})|.
		$$
		Then
		\begin{equation}\label{shish_eq_main_prob_D_n^y_small_volatil}
			\P-\lim_{n\rightarrow\infty}D_n^{(y)}=0.
		\end{equation}
	\end{lemma}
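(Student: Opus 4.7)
The plan is to mimic the two-region splitting used in Propositions \ref{shish_dop_prop1_for_lemma1_B_n_small_volatil}--\ref{shish_dop_prop2_for_lemma1_B_n_small_volatil}, but now exploiting a second-order Taylor expansion in the spatial variable $y$ together with the quadratic variation of $S$. Specifically, by Taylor's formula applied to $y\mapsto H(t_{j-1},\xi_{t_{j-1}},y)$ there exists $\theta_j\in(0,1)$ such that
$$
 d_j^{(y)}\le \tfrac12\, S_{t_j}\,\bigl|H_{yy}\bigl(t_{j-1},\xi_{t_{j-1}},S_{t_{j-1}}+\theta_j(S_{t_j}-S_{t_{j-1}})\bigr)\bigr|\,(S_{t_j}-S_{t_{j-1}})^2 .
$$
So the task reduces to bounding $|H_{yy}|$ on the relevant events and then controlling $\kappa_n\sum_j S_{t_j}|H_{yy}|\,(\Delta S_{t_j})^2$ by means of the quadratic variation $\sum_j(\Delta S_{t_j})^2\to\int_0^1\sigma^2 S_t^2\,dt$ in probability.

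Exactly as in the proof of Lemma \ref{shish_lemma1_B_n_small_volatil}, fix $\mu>0$ and decompose
$$
 \P(D_n^{(y)}>\mu)\le \P(D_n^{(y)}>\mu,\Gamma_{\varepsilon,M})+\P(D_n^{(y)}>\mu,\tilde\Gamma_{\delta,M})+\P(\Gamma_{\varepsilon,M}^c)+\P(\tilde\Gamma_{\delta,M}^c),
$$
the last two terms being made arbitrarily small by choosing $\varepsilon,\delta$ small and $M$ large. On $\Gamma_{\varepsilon,M}$ I split the sum at $n_1=[(1-\varepsilon)n]$: for $j>n_1$ one has $t_{j-1}>\tau_0$, so $b_{t_{j-1}}=0$ and $H(t_{j-1},\xi_{t_{j-1}},y)=v_{j-1}$ is independent of $y$; consequently $d_j^{(y)}=0$ for $j>n_1$. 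For $j\le n_1$ I use $v_{j-1}\ge\varepsilon$ and the identity $H_{yy}(t,x,y)=\partial_y\bigl(\tfrac{b^2}{y}\hat q(v,b)\bigr)$, which after differentiation is controlled by $\hat q(v,b)$, $\hat q_z(v,b)$ on the compact range $b\in[0,KM]$; Proposition \ref{shish_density_estim_part1} then yields a uniform bound $|H_{yy}|\le C(\varepsilon,M)$ on this event. Analogously, on $\tilde\Gamma_{\delta,M}$ one has $b_{t_{j-1}}\ge b_*=\delta/M>0$ for every $j$, and the exponential factor in Proposition \ref{shish_density_estim_part1} furnishes the same kind of uniform control of $|H_{yy}|$ (splitting the sum as in Proposition \ref{shish_dop_prop2_for_lemma1_B_n_small_volatil} into $v\ge b_*$ and $v<b_*$, the latter producing a convergent integral of the type $I_2$ appearing there).

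With such a uniform bound $|H_{yy}|\le C$ on each event, Taylor's inequality gives
$$
 D_n^{(y)}\;\le\; \tfrac{C\,M}{2}\,\kappa_n\sum_{j=1}^n(S_{t_j}-S_{t_{j-1}})^2
$$
on $\Gamma_{\varepsilon,M}$ (and an analogous bound on $\tilde\Gamma_{\delta,M}$). Since $\sum_{j=1}^n(\Delta S_{t_j})^2\to\int_0^1\sigma^2 S_t^2\,dt$ in probability, the right-hand side is $O_{\P}(\kappa_n)=o_{\P}(1)$ as $n\to\infty$ because $\kappa_n=\kappa_0 n^{-1/2}\to0$. Letting $\varepsilon,\delta\to 0$ and $M\to\infty$ then yields \eqref{shish_eq_main_prob_D_n^y_small_volatil}.

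The main obstacle is the derivation of the uniform estimate on $H_{yy}$. Because $H_{yy}$ involves differentiating both $b$ and the density $\hat q(v,b)$ with respect to the lower integration limit and the asset variable, one needs both bounds of Proposition \ref{shish_density_estim_part1} (on $\hat q$ and $\hat q_z$), and one must carefully separate the two regimes $v\ge b_*$ and $v<b_*$ to absorb the singular factor $1/v^{2}$ through the gaussian-type exponential in $(\ln(b_*/v))^2$, exactly as in the analysis of $I(u)$ in Proposition \ref{shish_dop_prop2_for_lemma1_B_n_small_volatil}.
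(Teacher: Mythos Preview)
Your proposal is correct and follows essentially the same route as the paper: the same event decomposition into $\Gamma_{\varepsilon,M}$ and $\tilde\Gamma_{\delta,M}$, the same $n_1$-splitting on $\Gamma_{\varepsilon,M}$ (with $d_j^{(y)}=0$ for $j>n_1$), the same uniform bound on $H_{yy}$ obtained from Proposition~\ref{shish_density_estim_part1}, and the same reduction to $\kappa_n\sum_j(\Delta S_{t_j})^2$. Two cosmetic differences: the paper writes the Taylor remainder in integral form rather than Lagrange form, and it concludes via Chebyshev's inequality using $\E(\Delta S_{t_j})^2\le c/n$ instead of invoking convergence of the realised quadratic variation; also, on $\tilde\Gamma_{\delta,M}$ the paper obtains the uniform bound $|H_{yy}|\le \tilde c/b_*^4$ directly from $\sup_z(-\theta z^2+4z)<\infty$, without the $v\gtrless b_*$ split you describe, but your version works just as well.
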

	
\begin{proof}
	
	Represent $d_j^{(y)}$ as
		\begin{align}\label{shish_eq_dj2}	
			d_j^{(y)}&\leq \bigg|S_{t_j}\int\limits_{S_{t_{j-1}}}^{S_{t_j}}
			(H_y(t_{j-1},\xi_{t_{j-1}},u)-H_y(t_{j-1},\xi_{t_{j-1}},S_{t_{j-1}}))du\bigg|\notag\\
			&\leq
			\bigg|S_{t_j}\int\limits_{S_{t_{j-1}}}^{S_{t_j}}\left(\int\limits_{S_{t_{j-1}}}^{u}						H_{yy}(t_{j-1},\xi_{t_{j-1}},a)da\right)du\bigg|.
		\end{align}
	Let's find the derivative $H_{yy}(t,x,y)$. Recall that
		$$
			H(t,x,y)=\int\limits_{b}^{\infty} z\hat{q}(v,z)dz \;\;\;\text{и} \;\;\;
			b=\frac{(K-x)_+}{y}.
		$$		
	If $x\geq K$, then $H(t,x,y)=1-t$ and $H_y(t,x,y)=H_{yy}(t,x,y)=0$. If $x<K$, then
		$$
			H_y(t,x,y)=-b\hat{q}(v,b)b'_y=\frac{b^2}{y}\hat{q}(v,b)=\frac{(K-x)^2}{y^3}
			\hat{q}(v,b)
		$$
	and
		\begin{align*}
			H_{yy}(t,x,y)&=-\frac{3(K-x)^2}{y^4}\hat{q}(v,b)+\frac{(K-x)^2}{y^3}\hat{q}_z(v,b)b'_y\\
			&=-\frac{3b^2}{y^2}\hat{q}(v,b)-\frac{b^3}{y^2}\hat{q}_z(v,b)
		\end{align*}
It is necessary to show that for $\forall \mu>0$	
		$$
			\lim\limits_{n\rightarrow\infty}\P(D_n^{(y)}>\mu)=0.
		$$		
	As before in the proof of Lemma \ref{shish_lemma1_B_n_small_volatil} we introduce the stopping time $\tau_0$ and sets
	$$
	\Gamma_{\varepsilon,M}=\{\xi_1>K\} \cap \{\tau_0\leq 1-\varepsilon\} \cap \{ M^{-1}\leq \min\limits_{0\leq t\leq 1}S_t\leq \max\limits_{0\leq t\leq 1}S_t\leq M\}
	$$
	$$
	\Gamma_{\delta,M}=\{\xi_1\leq K-\delta\} \cap \{ M^{-1}\leq \min\limits_{0\leq t\leq 1}S_t\leq \max\limits_{0\leq t\leq 1}S_t\leq M\},
	$$ 
	with
	\begin{equation}\label{shish_eq_probab_zero}
	\lim\limits_{M\to\infty}\lim\limits_{\varepsilon\to 0}\P(\Gamma_{\varepsilon,M}^c)=0, \;\;\;\;\;
	\lim\limits_{M\to\infty}\lim\limits_{\delta\to 0}\P(\Gamma_{\delta,M}^c)=0.
	\end{equation}
	Represent the probability $\P(D_n^{(y)}>\mu)$ as
		\begin{align*}
		\P (D_n^{(y)}>\mu)&=\P(D_n^{(y)}>\mu,\xi_1>K)+\P (D_n^{(y)}>\mu,\xi_1\leq K)
		\\[3mm]
		&\leq \P(D_n^{(y)}>\mu,\Gamma_{\varepsilon,M})+\P(\Gamma_{\varepsilon,M}^c)+\P(D_n^{(y)}>\mu,\Gamma_{\delta,M})+\P(\Gamma_{\delta,M}^c).
		\end{align*}
	Consider $\P(D_n^{(y)}>\mu,\Gamma_{\varepsilon,M})$. Analogically we split $D_n^{(y)}$ into two sums
		$$
			D_n^{(y)}=\kappa_n\sum_{j=1}^{n_1} d^{(y)}_j+
			\kappa_n\sum_{j=n_1+1}^n d^{(y)}_j,
		$$
	where $n_1=[(1-\varepsilon)n]$, i.e. $t_{n_1}\leq 1-\varepsilon$. Starting from $j=n_1+1$ all moments $t_j>\tau_0$ and $\xi_{t_j}\geq K$. So, $b_{t_j}=0$ and $d^{(y)}_j=0$.	Thus,
		\begin{align*}
			D_n^{(y)}&=\kappa_n\sum_{j=1}^{n_1} d^{(2)}_j
			\\
			&\leq 
			M \kappa_n\sum_{j=1}^{n_1}|H(t_{j-1},\xi_{t_{j-1}},S_{t_{j}})-
			H(t_{j-1},\xi_{t_{j-1}},S_{t_{j-1}})-
			H_y(t_{j-1},\xi_{t_{j-1}},S_{t_{j-1}})(S_{t_{j}}-S_{t_{j-1}})|\\
			&\leq M \kappa_n\sum_{j=1}^{n_1}
			\bigg|\int\limits_{S_{t_{j-1}}}^{S_{t_j}}\left(\int\limits_{S_{t_{j-1}}}^{u} 
			H_{yy}(t_{j-1},\xi_{t_{j-1}},a)da\right)du\bigg|.
		\end{align*}
	Using Proposition  \ref{shish_density_estim_part1}, we can estimate $1/v\leq 1$ and
		\begin{align*}
			|H_{yy}(t,x,y)|&=\bigg|\frac{3b^2}{y^2}\hat{q}(v,b)+
			\frac{b^3}{y^2}\hat{q}_z(v,b)\bigg|
			\\[3mm]
			&\leq \bigg|\frac{3b^2}{y^2}\frac{c}{v^2}\exp\{-\theta(\ln (b/v))^2\}+
			\frac{b^3}{y^2}\frac{c}{v^4}\exp\{-\theta(\ln (b/v))^2\}\bigg|.
		\end{align*}
		On the set $\Gamma_{\varepsilon,M}$ we change the exponent by 1 and $v=1-t$ by $\varepsilon$, also take into account that $b\leq K/y\leq KM$. For some constant $c_*$ we have
		$$
			|H_{yy}(t,x,y)|\leq \frac{c}{y^4\varepsilon^2}+\frac{c}{y^5\varepsilon^4}\leq 
			\frac{c_* M^5}{\varepsilon^4}.
		$$	
	Then
		\begin{align*}
			d^{(y)}_j&\leq M \bigg|\int\limits_{S_{t_{j-1}}}^{S_{t_j}}
			\left(\int\limits_{S_{t_{j-1}}}^{u}H_{yy}(t_{j-1},\xi_{t_{j-1}},a)da\right)du\bigg|\\[3mm]
			&\leq M^6\frac{c_*}{\varepsilon^4}
			\bigg|\int\limits_{S_{t_{j-1}}}^{S_{t_j}} (u-S_{t_{j-1}})du\bigg|=
			\frac{c M^6}{\varepsilon^4}|S_{t_{j}}-S_{t_{j-1}}|^2
		\end{align*}
	Note that
		\begin{align*}
		\E(S_{t_j}-S_{t_{j-1}})^2=\sigma^2\E\int\limits_{t_{j-1}}^{t_j}S_u^2du\leq \sigma^2e^{\sigma^2}(t_j-t_{j-1})=\frac{c}{n},
		\end{align*}
		Since
		$$
		\E S_u^2=\E\exp\{2\sigma W_u-\sigma^2 u\}=e^{\sigma^2 u}\leq e^{\sigma^2}.
		$$		
	Therefore,
		$$
			D^{(y)}_n\leq M \kappa_n\sum_{j=1}^{n_1} \frac{c M^6}{\varepsilon^4}|S_{t_{j}}-S_{t_{j-1}}|^2		
		$$		
	Let $\tilde{c}=c M^7/\varepsilon^4$, then
	\begin{align*}
	\P\left(D_n^{(y)}>\mu,\Gamma_{\varepsilon,M}\right)&\leq\P\left(\kappa_n \tilde{c}\sum_{j=1}^{n_1} |S_{t_{j}}-S_{t_{j-1}}|^2>\mu,\Gamma_{\varepsilon,M}\right)
	\\[3mm]
	&\leq\P\left( \kappa_n \tilde{c} \sum_{j=1}^{n_1}|S_{t_{j}}-S_{t_{j-1}}|^2>\mu\right)
	\end{align*}
	Further, by Chebyshev's inequality, we obtain
	\begin{equation*}
	\P( \kappa_n \tilde{c}\sum_{j=1}^{n_1}|S_{t_{j}}-S_{t_{j-1}}|^2>\mu)
	\leq \frac{ \kappa_n \tilde{c}\sum_{j=1}^{n_1}\E|S_{t_{j}}-S_{t_{j-1}}|^2}{\mu}\xrightarrow[n\rightarrow\infty]{\P}0.
	\end{equation*}
	Thus,
	\begin{equation}\label{shish_eq_D^y_prob_on_Gamma_eps}
			\lim\limits_{n\rightarrow\infty}\P(D_n^{(y)}>\mu,\Gamma_{\varepsilon,M})=0.
		\end{equation}
	
	Consider $\P(D_n^{(y)}>\mu,\Gamma_{\delta,M})$. On the set $\Gamma_{\delta,M}$	
		$$
			b_*=\frac{\delta}{M}\leq b\leq KM,
		$$
	then
		\begin{align*}
			|H_{yy}(t,x,y)|&=\bigg|\frac{3b^2}{y^2}\hat{q}(v,b)+
			\frac{b^3}{y^2}\hat{q}_z(v,b)\bigg|\leq K^3M^5|\hat{q}(v,b)+\hat{q}_z(v,b)|
			\\[3mm]
			&\leq K^3M^5 \bigg|\frac{c}{v^4}\exp\left\lbrace -\theta(\ln (b/v))^2\right\rbrace\bigg|
			\leq K^3M^5 \bigg|\frac{c}{v^4}\exp\left\lbrace -\theta(\ln (b_*/v))^2\right\rbrace\bigg|.
		\end{align*}	
	Let $\tilde{c}=K^3M^5 c$. By making a variable change $z=\ln(b_*/v)$, we obtain
		\begin{align*}			 
			|H_{yy}(t,x,y)|&=\frac{\tilde{c}}{b_*^4}\left(\frac{b_*}{v}\right)^4\exp\left\lbrace -\theta(\ln (b_*/v))^2\right\rbrace
			\leq  \frac{\tilde{c}}{b_*^4}\exp\{-\theta z^2+4z\}
			 \\[3mm]
			 &\leq
			  \frac{\tilde{c}}{b_*^4}\exp\{\sup\limits_{z\in\bbr}(-\theta z^2+4z)\}
			 = \frac{\tilde{c}}{b_*^4}
		\end{align*}			
	Then
		$$
			d_j^{(y)}\leq\frac{\tilde{c}}{b_*^4}|S_{t_j}-S_{t_{j-1}}|^2
		$$
	So, 
		$$
			D_n^{(y)}\leq\kappa_n \frac{\tilde{c}}{b_*^4}\sum_{j=1}^{n} 
			|S_{t_j}-S_{t_{j-1}}|^2
		$$	
		Similarly to the first part of the proof by Chebyshev inequality, we obtain
		$$
			\P(D_n^{(y)}>\mu,\Gamma_{\delta,M})\leq \frac{ \kappa_n c \sum_{j=1}^{n}\E|S_{t_{j}}-S_{t_{j-1}}|^2}{\mu}\xrightarrow[n\rightarrow\infty]{}0.
		$$
	Thus,
		\begin{equation}\label{shish_eq_prob_on_Gamma_delt}
		\lim\limits_{n\rightarrow\infty}\P(D_n^{(y)}>\mu,\Gamma_{\delta,M})=0.
		\end{equation}
	As a result, from the expressions (\ref{shish_eq_probab_zero}),(\ref{shish_eq_D^y_prob_on_Gamma_eps}) and (\ref{shish_eq_prob_on_Gamma_delt}) we obtain (\ref{shish_eq_main_prob_D_n^y_small_volatil}).

	\end{proof}

	\begin{lemma}\label{shish_lem3_converg_small_volatil}
	Let $\beta(t)$ -- continuous  consistent function almost surely. Then
		$$
		\frac{1}{\sqrt{n}}\sum_{j=1}^n\beta(t_{j-1})|S_{t_{j}}-S_{t_{j-1}}|\xrightarrow[n\rightarrow			\infty]{\mathbf{P}}\sqrt{\frac{2}{\pi}}\sigma\int_0^1S_t\beta(t)dt.
		$$
	\end{lemma}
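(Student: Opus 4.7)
The strategy is to linearize the increment $|S_{t_j}-S_{t_{j-1}}|$ as $\sigma S_{t_{j-1}}|W_{t_j}-W_{t_{j-1}}|$ using Itô, and then split the absolute Brownian increment into its conditional mean $\sqrt{2/(\pi n)}$ plus a centered remainder. The deterministic mean produces a Riemann sum converging to $\sigma\sqrt{2/\pi}\int_0^1\beta(t)S_t\,dt$, while the centered part is a sum of orthogonal martingale differences that vanishes in $L^2$.

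For the linearization, since $dS_t=\sigma S_t\,dW_t$, one writes
$S_{t_j}-S_{t_{j-1}}=\sigma S_{t_{j-1}}(W_{t_j}-W_{t_{j-1}})+R_j$ with $R_j=\sigma\int_{t_{j-1}}^{t_j}(S_u-S_{t_{j-1}})\,dW_u$. By Itô isometry and the bound $\E(S_u-S_{t_{j-1}})^2\leq C(u-t_{j-1})$ on $[0,1]$, we get $\E R_j^2\leq C/n^2$. Using the elementary $\bigl||a|-|b|\bigr|\leq |a-b|$, the resulting discrepancy $\frac{1}{\sqrt n}\sum_j|\beta(t_{j-1})||R_j|$ has $L^1$-norm $O(1/\sqrt n)$ whenever $|\beta|$ is bounded, hence tends to $0$ in probability.

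Next, set $Z_j=\sqrt n\,(W_{t_j}-W_{t_{j-1}})\sim\mathcal N(0,1)$ and decompose $|W_{t_j}-W_{t_{j-1}}|=\sqrt{2/(\pi n)}+\varepsilon_j/\sqrt n$ with $\varepsilon_j=|Z_j|-\sqrt{2/\pi}$ centered, independent of $\mathcal F_{t_{j-1}}$, and $\E\varepsilon_j^2=1-2/\pi$. Substituting,
\begin{align*}
\frac{\sigma}{\sqrt n}\sum_{j=1}^n\beta(t_{j-1})S_{t_{j-1}}|W_{t_j}-W_{t_{j-1}}|
&=\sigma\sqrt{\tfrac{2}{\pi}}\cdot\frac{1}{n}\sum_{j=1}^n\beta(t_{j-1})S_{t_{j-1}}\\
&\quad+\frac{\sigma}{n}\sum_{j=1}^n\beta(t_{j-1})S_{t_{j-1}}\varepsilon_j.
\end{align*}
The first Riemann sum converges pathwise to $\sigma\sqrt{2/\pi}\int_0^1\beta(t)S_t\,dt$ by continuity of $\beta(t)S_t$. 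The second term has, after localization, second moment bounded by $\frac{\sigma^2(1-2/\pi)}{n^2}\sum_j\E(\beta(t_{j-1})S_{t_{j-1}})^2=O(1/n)$, so it vanishes in $L^2$ and hence in probability.

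Finally, because $\beta$ is only assumed continuous and adapted (not uniformly bounded) and the trajectories of $S$ are random, introduce the localizing events $\Omega_M=\{\sup_{[0,1]}(|\beta|+S+S^{-1})\leq M\}$, with $\P(\Omega_M)\to 1$ as $M\to\infty$. All of the $L^1$ and $L^2$ estimates above become uniform on $\Omega_M$, giving convergence in probability on each $\Omega_M$, and passing $M\to\infty$ yields the claim. The delicate step is the linearization: taking absolute values destroys the martingale structure of $R_j$, so we must settle for an $L^1$ estimate rather than an Itô isometry — this only works because $\|R_j\|_2=O(1/n)$ beats the $1/\sqrt n$ normalizing weight uniformly in $j$.
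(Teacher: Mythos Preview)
Your proof is correct and follows essentially the same route as the paper: linearize $S_{t_j}-S_{t_{j-1}}$ via It\^o to $\sigma S_{t_{j-1}}\Delta W_{t_j}+R_j$ with $\|R_j\|_2=O(1/n)$, extract the mean $\sqrt{2/(\pi n)}$ from $|\Delta W_{t_j}|$ to obtain the Riemann sum, control the centered piece by orthogonality of martingale differences, and localize to handle unbounded $\beta$. The only cosmetic difference is the order of operations---the paper first subtracts the conditional expectation of $|S_{t_j}-S_{t_{j-1}}|$ (its martingale term $M_n$ corresponds to your $\varepsilon_j$-sum) and then linearizes inside that conditional expectation, whereas you linearize first and center afterwards; the estimates and the localization argument (the paper uses the stopping time $\tau_L=\inf\{t:|\beta(t)|\ge L\}\wedge 1$, which plays the same role as your $\Omega_M$) are the same.
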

	
	\begin{proof}
	
	All auxiliary constants will be denoted by the letter $c$. We single out the martingale term
\begin{equation}\label{shish_eq_with_martingale_term}
\frac{1}{\sqrt{n}}\sum_{j=1}^n\beta(t_{j-1})|S_{t_{j}}-S_{t_{j-1}}|=\frac{1}{\sqrt{n}}\sum_{j=1}^n\beta(t_{j-1})\mathbf{E}\left(|S_{t_{j}}-S_{t_{j-1}}|\bigg|\mathcal{F}_{t_{j-1}}\right)+\frac{1}{\sqrt{n}}M_n,
\end{equation}
where
$$
M_n=\sum_{j=1}^n\eta_j,
$$
$$
\eta_j=\beta(t_{j-1})|S_{t_{j}}-S_{t_{j-1}}|-\mathbf{E}\left(\beta(t_{j-1})|S_{t_{j}}-S_{t_{j-1}}|\bigg|\mathcal{F}_{t_{j-1}}\right).
$$
In Proposition \ref{shish_prop_dop_small_volatil} we have established that 
$$
\frac{1}{\sqrt{n}}M_n\xrightarrow[n\rightarrow\infty]{\mathbf{P}}0.
$$

Now we consider the first term of the equality (\ref{shish_eq_with_martingale_term}). It's clear that
$$
S_{t_j}-S_{t_{j-1}}=\sigma\int_{t_{j-1}}^{t_j}S_udW_u,
$$ 
so
\begin{align*}
\beta(t_{j-1})\mathbf{E}\left(|S_{t_{j}}-S_{t_{j-1}}|\bigg|\mathcal{F}_{t_{j-1}}\right)&=\sigma\beta(t_{j-1})\mathbf{E}\left(\bigg|\int_{t_{j-1}}^{t_j}S_udW_u\bigg|\bigg|\mathcal{F}_{t_{j-1}}\right)\\
&=\sigma\beta(t_{j-1})\mathbf{E}\left(\bigg|\int_{t_{j-1}}^{t_j}\left[S_{t_{j-1}}+(S_u-S_{t_{j-1}})\right]dW_u\bigg|\bigg|\mathcal{F}_{t_{j-1}}\right)\\
&=\sigma\beta(t_{j-1})\mathbf{E}\left(\bigg|S_{t_{j-1}}(W_{t_{j}}-W_{t_{j-1}})+\int_{t_{j-1}}^{t_j}(S_u-S_{t_{j-1}})dW_u\bigg|\bigg|\mathcal{F}_{t_{j-1}}\right).
\end{align*}
Introduse the notation
\begin{equation}\label{eq_notation_g_j}
g_j=\int_{t_{j-1}}^{t_j}(S_u-S_{t_{j-1}})dW_u.
\end{equation}
Then
\begin{align}\label{shish_eq_3}
\beta(t_{j-1})\mathbf{E}\left(|S_{t_{j}}-S_{t_{j-1}}|\bigg|\mathcal{F}_{t_{j-1}}\right)&=\sigma\beta(t_{j-1})\mathbf{E}\left(\bigg|S_{t_{j-1}}\Delta W_{t_j}+g_j\bigg|\bigg|\mathcal{F}_{t_{j-1}}\right)\notag\\
&=\sigma\beta(t_{j-1})\mathbf{E}\left(\bigg|S_{t_{j-1}}\Delta W_{t_j}\bigg|\bigg|\mathcal{F}_{t_{j-1}}\right)+\sigma\beta(t_{j-1})\mathbf{E}\left(\nu_j\bigg|\mathcal{F}_{t_{j-1}}\right)\notag\\
&=\sigma\sqrt{\frac{2}{\pi}}\beta(t_{j-1})|S_{t_{j-1}}|\frac{1}{\sqrt{n}}+\sigma\beta(t_{j-1})\mathbf{E}\left(\nu_j\bigg|\mathcal{F}_{t_{j-1}}\right),
\end{align}
where
$$
\nu_j=\bigg|S_{t_{j-1}}\Delta W_{t_j}+g_j\bigg|-\bigg|S_{t_{j-1}}\Delta W_{t_j}\bigg|.
$$
By module property $\big||a|-|b|\big|\leq |a-b|$ we have
\begin{equation}\label{eq_est_nu_by_g_j}
|\nu_j|=\bigg|\bigg|S_{t_{j-1}}\Delta W_{t_j}+g_j\bigg|-\bigg|S_{t_{j-1}}\Delta W_{t_j}\bigg|\bigg|\leq |g_j|.
\end{equation}
By the equality (\ref{eq_notation_g_j}) and Jensen's inequalities we get
\begin{align*}
\mathbf{E}\left(|g_j|\big|\mathcal{F}_{t_{j-1}}\right)&=\mathbf{E}\left(\bigg|\int_{t_{j-1}}^{t_j}(S_u-S_{t_{j-1}})dW_u\bigg|\bigg|\mathcal{F}_{t_{j-1}}\right)=\sqrt{\left(\mathbf{E}\left(\bigg|\int_{t_{j-1}}^{t_j}(S_u-S_{t_{j-1}})dW_u\bigg|\bigg|\mathcal{F}_{t_{j-1}}\right)\right)^2}
\\[2mm]
&\leq \sqrt{\mathbf{E}\left(\left(\int_{t_{j-1}}^{t_j}(S_u-S_{t_{j-1}})dW_u\right)^2\bigg|\mathcal{F}_{t_{j-1}}\right)}=\sqrt{\mathbf{E}\left(\int_{t_{j-1}}^{t_j}(S_u-S_{t_{j-1}})^2du\bigg|\mathcal{F}_{t_{j-1}}\right)}
\\[2mm]
&=\sqrt{\int_{t_{j-1}}^{t_j}\mathbf{E}\left((S_u-S_{t_{j-1}})^2\bigg|\mathcal{F}_{t_{j-1}}\right)du}.
\end{align*}
For the integrand we can write the estimate
\begin{equation}\label{estimate integrand}
\mathbf{E}\left((S_u-S_{t_{j-1}})^2\bigg|\mathcal{F}_{t_{j-1}}\right)\leq c^2S^2_{t_{j-1}}(u-t_{j-1})
\end{equation}
This estimate is valid by virtue of the following reasoning.
\begin{align}\label{eq_int_exp_S_v_kvadr}
\mathbf{E}\left((S_u-S_{t_{j-1}})^2\bigg|\mathcal{F}_{t_{j-1}}\right)&=\mathbf{E}\left(\left(\sigma\int_{t_{j-1}}^uS_vdW_v\right)^2\bigg|\mathcal{F}_{t_{j-1}}\right)=\sigma^2\mathbf{E}\left(\int_{t_{j-1}}^uS^2_vdv\bigg|\mathcal{F}_{t_{j-1}}\right)\notag
\\[2mm]
&=\sigma^2\int_{t_{j-1}}^u\mathbf{E}\left(S^2_v\bigg|\mathcal{F}_{t_{j-1}}\right)dv.
\end{align}
By definition of a risky asset in this model, we have
$$
S_v=S_0\exp\{\sigma W_v-v\sigma^2/2\}.
$$
Obviously, the following equality holds.
$$
S^2_v=S_{t_{j-1}}\exp\{2\sigma (W_v-W_{t_{j-1}})-\sigma^2(v-t_{j-1})\}.
$$
Then
\begin{align*}
\mathbf{E}\left(S^2_v\bigg|\mathcal{F}_{t_{j-1}}\right)&=S^2_{t_{j-1}}\mathbf{E}\left(\exp\{2\sigma (W_v-W_{t_{j-1}})-\sigma^2(v-t_{j-1})\}\bigg|\mathcal{F}_{t_{j-1}}\right)
\\[2mm]
&=S^2_{t_{j-1}}\mathbf{E}\left(\exp\{2\sigma\sqrt{v-t_{j-1}}\eta-\sigma^2(v-t_{j-1})\}\right)=S^2_{t_{j-1}}\exp\{\sigma^2(v-t_{j-1})\}
\\[2mm]
&\leq S^2_{t_{j-1}}\exp\{\sigma(t_{j}-t_{j-1})\}\leq S^2_{t_{j-1}}e,
\end{align*}
where $\eta\sim N(0,1)$. Substituting the last inequality into(\ref{eq_int_exp_S_v_kvadr}), we get an estimate (\ref{estimate integrand}). Then
\begin{equation}\label{eq_est_expir_g_j}
\mathbf{E}\left(|g_j|\big|\mathcal{F}_{t_{j-1}}\right)\leq c\sqrt{S^2_{t_{j-1}}\int_{t_{j-1}}^{t_j}(u-t_{j-1})du}=\frac{c}{n}S_{t_{j-1}}.
\end{equation}
Taking into account the equality (\ref{shish_eq_3}), rewrite (\ref{shish_eq_with_martingale_term}) without martingale term
$$
\frac{1}{\sqrt{n}}\sum_{j=1}^n\beta(t_{j-1})|S_{t_{j}}-S_{t_{j-1}}|=\sigma\sqrt{\frac{2}{\pi}}\sum_{j=1}^n\beta(t_{j-1})\frac{1}{n}S_{t_{j-1}}+\frac{\sigma}{\sqrt{n}}\sum_{j=1}^n\beta(t_{j-1})\mathbf{E}\left(|\nu_j|\bigg|\mathcal{F}_{t_{j-1}}\right).
$$
We note here that due to inequalities \eqref{eq_est_nu_by_g_j} and \eqref{eq_est_expir_g_j}
\begin{align*}
\frac{\sigma}{\sqrt{n}}\sum_{j=1}^n\beta(t_{j-1})\mathbf{E}\left(|\nu_j|\bigg|\mathcal{F}_{t_{j-1}}\right)\leq\frac{\sigma}{\sqrt{n}}\sum_{j=1}^n|\beta(t_{j-1})|\mathbf{E}\left(|g_j|\bigg|\mathcal{F}_{t_{j-1}}\right)\leq\frac{c\sigma}{n\sqrt{n}}\sum_{j=1}^n|\beta(t_{j-1})|S_{t_{j-1}}.
\end{align*}
Since
$$
\frac{1}{n}\sum_{j=1}^n|\beta(t_{j-1})|S_{t_{j-1}}\xrightarrow[n\rightarrow\infty]{}\int\limits_0^1|\beta(t)|S_tdt \text{\;\;\;\;\;п.н.},
$$
then
$$
\frac{c\sigma}{\sqrt{n}}\int\limits_0^1|\beta(t)|S_tdt \xrightarrow[n\rightarrow\infty]{\mathbf{P}}0.
$$
Hence, Следовательно, in respect Proposition \ref{shish_prop_dop_small_volatil}, we obtain
$$
\frac{1}{\sqrt{n}}\sum_{j=1}^n\beta(t_{j-1})|S_{t_{j}}-S_{t_{j-1}}|\xrightarrow[n\rightarrow\infty]{\mathbf{P}}\sigma\sqrt{\frac{2}{\pi}}\int_0^1\beta(t)S_tdt.
$$
The lemma is proved.
\end{proof}

	\begin{proposition}\label{shish_prop_dop_small_volatil}
	Let $	M_n=\sum_{j=1}^n\eta_j$, where
	$$
	\eta_j=\beta(t_{j-1})|S_{t_{j}}-S_{t_{j-1}}|-\mathbf{E}	\left(\beta(t_{j-1})|S_{t_{j}}-S_{t_{j-1}}|\bigg|\mathcal{F}_{t_{j-1}}\right).
	$$ 
	For an arbitrary continuous consistent function $\beta(t), 0\leq t\leq 1$ 
		\begin{equation}\label{shish_eq_main_converg}
			\frac{1}{\sqrt{n}}M_n\xrightarrow[n\rightarrow\infty]{\mathbf{P}}0.
		\end{equation}
	\end{proposition}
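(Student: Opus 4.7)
The plan is to exploit that $(M_n)$ is a discrete-time martingale and apply Chebyshev after a uniform $L^2$ estimate, combined with a localization in $\beta$.

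\textbf{Martingale structure.} By the very definition of $\eta_j$, we have $\mathbf{E}(\eta_j \mid \mathcal{F}_{t_{j-1}}) = 0$ and $\eta_j$ is $\mathcal{F}_{t_j}$-measurable, so the $(\eta_j)$ are orthogonal in $L^2$ and hence $\mathbf{E} M_n^2 = \sum_{j=1}^n \mathbf{E}\eta_j^2$. Using the elementary inequality $\mathrm{Var}(X\mid\mathcal{F}) \leq \mathbf{E}(X^2\mid\mathcal{F})$ together with the fact that $\beta(t_{j-1})$ is $\mathcal{F}_{t_{j-1}}$-measurable, I obtain
$$
\mathbf{E}(\eta_j^2 \mid \mathcal{F}_{t_{j-1}}) \leq \beta^2(t_{j-1}) \mathbf{E}\bigl((S_{t_j}-S_{t_{j-1}})^2 \mid \mathcal{F}_{t_{j-1}}\bigr).
$$
The conditional second moment of the increment has already been estimated in the proof of the parent lemma: via It\^o's isometry and $\mathbf{E}(S_u^2\mid\mathcal{F}_{t_{j-1}})= S^2_{t_{j-1}} e^{\sigma^2(u-t_{j-1})}$ one gets $\mathbf{E}((S_{t_j}-S_{t_{j-1}})^2 \mid \mathcal{F}_{t_{j-1}}) \leq c\sigma^2 S^2_{t_{j-1}}/n$ for some absolute $c$.

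\textbf{Localization in $\beta$.} Since $\beta$ is continuous and adapted on $[0,1]$ almost surely, $\sup_{0\le t\le 1}|\beta(t)|$ is a.s. finite but not necessarily integrable, so a direct $L^2$ bound on $M_n$ is not available. I therefore fix $M>0$ and truncate by setting $\beta_M(t) := \beta(t)\Chi_{\{\sup_{s\le t}|\beta(s)|\le M\}}$, which is adapted and bounded by $M$. Let $\eta_j^{(M)}$ and $M_n^{(M)}$ be defined as in the statement but with $\beta_M$ in place of $\beta$; on the event $\Omega_M := \{\sup_{0\le t\le 1}|\beta(t)|\le M\}$ the two processes coincide, and $\mathbf{P}(\Omega_M^c)\to 0$ as $M\to\infty$.

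\textbf{$L^2$ bound and conclusion.} Combining the previous two steps with $\mathbf{E} S_t^2 \leq S_0^2 e^{\sigma^2}$ gives
$$
\mathbf{E}\bigl(M_n^{(M)}\bigr)^2 \leq \sum_{j=1}^n \frac{c\sigma^2 M^2}{n}\mathbf{E} S^2_{t_{j-1}} \leq C\,M^2
$$
uniformly in $n$. Chebyshev's inequality then yields, for any $\delta>0$,
$$
\mathbf{P}\!\left(\tfrac{1}{\sqrt{n}}|M_n^{(M)}|>\delta\right) \leq \frac{CM^2}{n\delta^2}\xrightarrow[n\to\infty]{}0.
$$
Splitting $\mathbf{P}(|M_n|/\sqrt{n}>\delta)\le \mathbf{P}(\Omega_M^c)+\mathbf{P}(|M_n^{(M)}|/\sqrt{n}>\delta)$, sending first $n\to\infty$ and then $M\to\infty$ gives \eqref{shish_eq_main_converg}.

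\textbf{Main obstacle.} The martingale $L^2$ estimate is entirely routine once the orthogonality and the increment estimate are in place; the only delicate point is the localization in $\beta$, which is forced on us because the lemma assumes only continuity and adaptedness of $\beta$, not any integrability. Care has to be taken that the truncated process $\beta_M$ remains adapted (this is why the truncation uses the running supremum rather than the pointwise modulus) so that $(\eta_j^{(M)})$ is still a martingale difference sequence.
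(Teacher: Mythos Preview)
Your proof is correct and follows essentially the same route as the paper: martingale-difference orthogonality to get $\mathbf{E}M_n^2=\sum_j\mathbf{E}\eta_j^2$, the conditional bound $\mathbf{E}((S_{t_j}-S_{t_{j-1}})^2\mid\mathcal{F}_{t_{j-1}})\le cS_{t_{j-1}}^2/n$, and a localization in $\beta$ followed by a limit in the truncation level. The only cosmetic differences are that the paper truncates via the stopping time $\tau_L=\inf\{t:|\beta(t)|\ge L\}\wedge1$ and the pointwise indicator $\tilde\beta_L(t)=\beta(t)\Chi_{\{|\beta(t)|\le L\}}$ (which is already adapted, so your running-supremum precaution, while harmless, is not strictly needed), and that the paper bounds $\mathbf{E}|M_n|/\sqrt{n}$ via $(\mathbf{E}M_n^2)^{1/2}$ rather than invoking Chebyshev directly.
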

	
	\begin{proof}
We consider two cases. In the first case, suppose that for some constant $L$
$$
\sup_{0\leq t\leq 1}|\beta(t)|\leq L.
$$
By virtue of the martingality property, we obtain
\begin{equation}\label{shish_eq_2}
\E(M_n)^2=\mathbf{E}\sum_{j=1}^n\eta^2_j=\sum_{j=1}^n\mathbf{E}\mathbf{E}(\eta^2_j|\mathcal{F}_{t_{j-1}})=\mathbf{E}\sum_{j=1}^n\mathbf{E}(\eta^2_j|\mathcal{F}_{t_{j-1}}).
\end{equation}
Introduce the notation $\eta_j=\beta(t_{j-1})\varsigma_j$, where
$$
\varsigma_j=|S_{t_{j}}-S_{t_{j-1}}|-\mathbf{E}\left(|S_{t_{j}}-S_{t_{j-1}}|\big|\mathcal{F}_{t_{j-1}}\right).
$$
Then
$$
\mathbf{E}(\eta^2_j|\mathcal{F}_{t_{j-1}})=\beta^2(t_{j-1})\mathbf{E}(\varsigma^2_j|\mathcal{F}_{t_{j-1}})\leq L^2 \mathbf{E}(|S_{t_{j}}-S_{t_{j-1}}|^2\big|\mathcal{F}_{t_{j-1}})\leq L^2 S^2_{t_{j-1}}\Delta\frac{2}{\pi}.
$$
Then we can evaluate the equality (\ref{shish_eq_2}) 
$$
\mathbf{E}(M_n)^2\leq L^2\mathbf{E}\sum_{j=1}^nS^2_{t_{j-1}}\Delta\frac{2}{\pi} =L^2\frac{2}{\pi n}\sum_{j=1}^n\mathbf{E}S^2_{t_{j-1}}\leq c^2.
$$
Therefore, we obtain
\begin{equation}\label{shish_eq_5}
\mathbf{E}\frac{1}{\sqrt{n}}|M_n|=\frac{1}{\sqrt{n}}\mathbf{E}|M_n|\leq \frac{1}{\sqrt{n}}(\mathbf{E}M_n^2)^{1/2}\leq \frac{c}{\sqrt{n}}\xrightarrow[n\rightarrow\infty]{\mathbf{P}}0
\end{equation}
In the second case, we assume that $\beta(t)$ -- continuous function. We introduce the stopping time
$$
\tau_L=\inf\{t\geq 0: |\beta(t)|\geq L\}\wedge 1.
$$ 
For continuous function $\beta(t)$ the following equality holds
$$
\mathbf{P}(\max_{0\leq t\leq 1}|\beta(t)|<\infty)=1.
$$
It means that $\beta(t)$ is limited and therefore,
\begin{equation}\label{eq_dop_zvezda}
\mathbf{P}(\tau_L<1)\xrightarrow[L\rightarrow\infty]{}0.
\end{equation}
To prove convergence (\ref{shish_eq_main_converg}), it must be shown that the next probability is zero.
\begin{align*}
\mathbf{P}\left(\frac{1}{\sqrt{n}}|M_n|>\delta\right)=\mathbf{P}\left(\frac{1}{\sqrt{n}}|M_n|>\delta,\tau_L=1\right)+\mathbf{P}\left(\frac{1}{\sqrt{n}}|M_n|>\delta,\tau_L<1\right).
\end{align*}
On the set $\{\tau_L=1\}$ 
$$
\beta(t)=\tilde{\beta}_L(t),
$$
where
$$
\tilde{\beta}_L(t)=\beta(t)\Chi_{\{|\beta(t)|\leq L\}}
$$
and
$$
M_n=\tilde{M}_n^{(L)}=\sum_{j=1}^n\tilde{\beta}_L(t_{j-1})\varsigma_j.
$$
Thus, we found ourselves in the conditions of the first case considered above. Therefore,
\begin{align*}
\mathbf{P}\left(\frac{1}{\sqrt{n}}|M_n|>\delta,\tau_L=1\right)&=\mathbf{P}\left(\frac{1}{\sqrt{n}}|\tilde{M}_n^{(L)}|>\delta,\tau_L=1\right)
\\
&\leq \mathbf{P}\left(\frac{1}{\sqrt{n}}|\tilde{M}_n^{(L)}|>\delta\right)\leq \frac{1}{\delta \sqrt{n}}\mathbf{E}|\tilde{M}_n^{(L)}|.
\end{align*}
By (\ref{shish_eq_5}), last expectation tends to zero. Then $\forall L>1$
\begin{align*}
\lim_{n\rightarrow\infty}\mathbf{P}\left(\frac{1}{\sqrt{n}}|M_n|>\delta\right)&=\lim_{n\rightarrow\infty}\left(\mathbf{P}\left(\frac{1}{\sqrt{n}}|\tilde{M}_n^{(L)}|>\delta\right)+\mathbf{P}(\tau_L<1)\right)
\\[2mm]
&\leq \mathbf{P}(\tau_L<1).
\end{align*}
Pass to the limit by $L\to\infty$, with \eqref{eq_dop_zvezda} we obtain \eqref{shish_eq_main_converg}.

\end{proof}

\subsection{Proof of the density properties}
\begin{proof} (Proposition \ref{shish_density_estim_part1})
	
	We need to look at the asymptotic behavior $q(v,z)$ when $v\rightarrow 0$ and $z>0$ is fixed. 
	To obtain an upper estimate for the density $q(v,z)$, it is necessary to estimate the function 			$K(v,a)$ from below. 
		\[
			K(v,a)=\sigma\int_0^v u \exp\{\sigma W_u-\sigma uW_1-\sigma^2 u/2+\sigma ua(v,z)\}du=
			\sigma\int_0^v u \exp\{\sigma W_u+\gamma u\}du,
		\]
	where
		\[
			\gamma=\sigma a(v,z)-\sigma W_1-\sigma^2/2.
		\]		
	Next we make the change of variables $s=u\sigma^2$ to use the scale invariance property of Wiener process. 
	Then
		\[
			\frac{1}{\sigma^3}\int_0^{\sigma^2 v} s\exp\{\sigma W_{s/\sigma^2}+\gamma s/\sigma^2\}ds=
			\frac{1}{\sigma^3}\int_0^{\sigma^2 v} s\exp\{\bar{W}_{s}+\gamma s/\sigma^2\}ds,
		\]	
	here $\bar{W}_{s}=\sigma W_{s/\sigma^2}$ is also the Wiener process. Further we suppose that $\sigma\geq 1$ and estimate $K(v,a)$ as follow
		\begin{align*}
			K(v,a)&\geq\frac{1}{\sigma^3}\int_0^{v_*} s\exp\{\bar{W}_{s}-W_1s/\sigma-s/2+
			s a(v,z)/\sigma\}ds\\
			&\geq\frac{1}{\sigma^3}\exp\{-\max_{0\leq s\leq 1} |\bar{W}_{s}|-|W_1|-|a(v,z)|\}v_*^2
		\end{align*}
	where $v_*=\min (\sigma^2 v,1)$. Let $\beta_1=\max\limits_{0\leq s\leq 1}|\bar{W}_{s}|+|W_1|$
	then
	$$
		K(v,a)\geq\frac{v_*^2}{\sigma^3}\exp\{-\beta_1-|a(v,z)|\}.
	$$
	Substituting this estimate in $q(v,z)$, we have
		\begin{align}\label{shish_prop_dens_eq_1_part1}
			q(v,z)&\leq\frac{c\sigma^3}{v_*^2}\E\exp\{\beta_1-a^2(v,z)/4\},
		\end{align}
	where $c=e/\sqrt{2\pi}$ and $\exp\{\sup_{a}(|a|-a^2/4)\}=e$. 
	Next obtain the lower bound for $a(v,z)$.The function $a(v,z)$ is specified implicitly as follows
		\[
			z=\int_0^v\exp\{\sigma W_u-\sigma uW_1-\sigma^2 u/2+\sigma ua(v,z)\}du, \;\;\; v=1-t.	
		\]
	Consider 
	\begin{align*}
	\beta_*^1=\max\limits_{0\leq u\leq v}|W_u-uW_1|&\leq \max\limits_{0\leq u\leq v}|W_u|+v|W_1|
	\leq\sqrt{v}(\max\limits_{0\leq u\leq v}|W_u|/\sqrt{v}+\sqrt{v} |W_1|)\\
	&\leq\sqrt{v}(\max\limits_{0\leq u\leq v}|W_u|/\sqrt{v}+|W_1|)=\sqrt{v}\beta_*,
	\end{align*} 
	$$
			\beta_*=\max\limits_{0\leq u\leq v}|W_u|/\sqrt{v}+|W_1|\;\;\;\text{and}\;\;\;
			\E e^{N\beta_*}<\infty.
	$$
	Therefore for $z$ the following estimate is hold
		\[
			z\leq v\exp\{\sigma\sqrt{v}\beta_*\}\exp\{\sigma v|a(v,z)|\},
		\]
	Clearly that 
	$$
	\ln\left(\frac{z}{v}\right)\leq \sigma\sqrt{v}\beta_*+\sigma v |a(v,z)|
	$$ 
	from which it follows that
		\begin{equation}\label{estimate_of a_part1}
			|a(v,z)|\geq \frac{1}{\sigma v}\left(\ln(z/v)-\sigma\sqrt{v}\beta_*\right).
		\end{equation}
	Clearly that $\ln (z / v)$ is large for $v\rightarrow 0$ , consider the expectation in (\ref{shish_prop_dens_eq_1_part1})
	 	\begin{align*}\label{ineq with indicators_part1}
	 		\E\exp\{\beta_1-a^2(v,z)/4\}&=\E\exp\{\beta_1-a^2(v,z)/4\}
	 		(\Chi_{\{\beta_*\leq L\}}+\Chi_{\{\beta_*> L\}})\\
	 		&\leq \E\exp\{\beta_1-a^2(v,z)/4\}\Chi_{\{\beta_*\leq L\}}+
	 		\E\exp\{\beta_1\}\Chi_{\{\beta_*> L\}}\\
	 		&\leq \E e^{\beta_1}\exp\{-a^2(v,z)/4\}\Chi_{\{\beta_*\leq L\}}+
	 		(\E e^{2\beta_1})^{1/2}(\P(\beta_*> L))^{1/2}
	 	\end{align*}
	Let $c_1=\max (\E e^{\beta_1},(\E e^{2\beta_1})^{1/2})$. Using Markov's inequality we obtain
		\[
			\P(\beta_*> L)=\P(e^{\delta_*\beta^2_*}>e^{\delta_* L^2})\leq 
			\exp\{-\delta_* L^2\}\E\exp\{\delta_*\beta^2_*\}=c_2^2\exp\{-\delta_* L^2\},
		\]
	Consider 
	$$
	c_2^2=\E\exp\{\delta_*\beta^2_*\}=\sum\limits_{m=0}^{\infty}\frac{\delta_8^m}{m!}\E\beta_*^{2m}.
	$$
	and besides
	\begin{align*}
	\E\beta_*^{2m} &\leq 2^{2m}\E\left( |W_1|^{2m}+(\max\limits_{0\leq u\leq v}|W_u|/\sqrt{v})^{2m} \right)=2^{2m}\left((2m-1)!!+\frac{1}{v^m}\E\max\limits_{0\leq u\leq v}|W_u|^{2m}\right)
	\\[3mm]
	&\leq 2^{2m}\left(2^m m!+\frac{1}{v^m}\left(\frac{2m}{2m-1}\right)^{2m}\E|W_v|^{2m}\right)
	\\[3mm]
	&\leq c 2^{3m} m!+\frac{2^m (2m-1)!! v^m}{v^m}\leq c 2^{3m} m!.
	\end{align*}
	Hence
	$$
	\E\exp\{\delta_*\beta^2_*\} \leq 1+\sum\limits_{m=0}^{\infty} c \delta_*^m 2^{3m}
	$$
	and this series will converge if we choose $\delta_*<1/8$.
	
	Then the expectation take the form
	\begin{align*}
	 		\E\exp\{\beta_1-a^2(v,z)/4\}\leq c_1(\exp\{-a^2(v,z)/4\}\Chi_{\{\beta_*\leq L\}}+
	 		c_2\exp\{-\delta_* L^2/2\})
	 \end{align*}
	If $\beta_*\leq L$ then inequality 	(\ref{estimate_of a_part1}) will take the form
		\[
			|a(v,z)|\geq \frac{1}{\sigma v}\left(\ln(z/v)-\sigma\sqrt{v}\beta_*\right)
			\geq \frac{1}{\sigma v}\left(\ln(z/v)-\sigma\sqrt{v} L\right).
		\]
	The constant $L$ must be chosen so that $\ln(z/v)-\sigma L$>0. Let
		\[
			L=\frac{1}{2\sigma\sqrt{v}}\ln\left( \frac{z}{v} \right).
		\]
	Then 
	$$
	|a(v,z)|>\frac{1}{2\sigma v}\ln(z/v)
	$$ 
	and 
		\begin{align*}
		 \E\exp\{\beta_1-a^2(v,z)/4\}&\leq c_1\left(\exp\left\lbrace -\frac{1}{16\sigma^2 v^2}(\ln(z/v))^2\right\rbrace+c_2\exp\left\lbrace -\frac{\delta_*}{8\sigma^2 v} (\ln(z/v))^2\right\rbrace\right)
		 \\
		 &\leq c_1(1+c_2)\exp\left\lbrace -\frac{\delta_*}{8\sigma^2 v} (\ln(z/v))^2\right\rbrace .
		\end{align*}
	Thus for the constants $c_*=c(1+c_2)c_1$ and $\kappa=\delta_*/8$ which not depend on $\sigma$ we have the following estimate for the density
	$$
		q(v,z)\leq \frac{c_*\sigma^3}{v_*^2}\exp\left\lbrace -\frac{\kappa}{\sigma^2 v} (\ln(z/v))^2\right\rbrace \;\;\;\forall z>v.
	$$

	Next, consider the derivative  of $q(v,z)$ w.r.t. $z$. Let
		\[
			L(v,a(v,z))=\frac{\varphi_{0,1}(a)}{K(v,a)}
		\]
	First of all, we prove that
		$$
		q_z(v,z)=\frac{\partial}{\partial z}\E(L(v,a(v,z)))=\E\frac{\partial}{\partial z}L(v,a(v,z)).
		$$
	Let 
		$$
			\tilde{L}(v,z)=L(v,a(v,z))
		$$
	and
		$$
		\xi_{\Delta}(z)=\frac{\tilde{L}(v,z+\Delta)-\tilde{L}(v,z)}{\Delta}.
		$$
	Then
		$$
		q_z(v,z)=\frac{q(v,z+\Delta)-q(v,z)}{\Delta}=\E\xi_{\Delta}(z).
		$$
	Be the derivative definition we obtain 
		$$
		\xi_{\Delta}(z)\xrightarrow[\Delta\rightarrow 0]{}\frac{\partial}{\partial z}\tilde{L}(v,z).
		$$
	Also we can write $\forall\Delta>0$
	\begin{align*}
	|\xi_{\Delta}(z)|&=\big|\frac{1}{\Delta}\int_{z}^{z+\Delta}\frac{\partial}{\partial u}
	\tilde{L}(v,u)du\big|
	\leq\sup\limits_{z\in\bbr^+}\sup\limits_{0\leq v\leq 1}
	\big|\frac{\partial}{\partial z}\tilde{L}(v,z)\big|<\infty
	\end{align*}
	Hence we can use Lebesgue's theorem
	$$
	\lim\limits_{\Delta\rightarrow 0}\E\xi_{\Delta}(z)=
	\E\lim\limits_{\Delta\rightarrow 0}\xi_{\Delta}=\E \frac{\partial}{\partial z}\tilde{L}(v,z).
	$$
	Thus we obtain that
		\[
			q_z(v,z)=\E\left(\frac{\partial L(v,a(v,z))}{\partial z}\right)=\E(L'_a(v,a(v,z))a'_z.		
		\]
	It's clear that
	\begin{align*}
	L'_a(v,a)&=\frac{\varphi'_{0,1}(a)K(v,a)-\varphi_{0,1}(a)K'_a(v,a)}{K^2(v,a)}
	\\	
	K'_a(v,a)&=\sigma^2\int_0^{v}u^2\exp\left\lbrace\sigma W_u+\sigma uW_1+\sigma ua-\sigma^2u/2\right\rbrace du\leq \sigma K(v,a) 
	\\
	a'_z&=\frac{1}{K(v,a)}.
	\end{align*}
	Therefore we obtain that
	$$
	|q'_z(v,z)|\leq\E\left(\frac{|\varphi'_{0,1}(a)|+\sigma\varphi_{0,1}(a)}{K^2(v,a)}\right)
	\leq (1+\sigma)\E\left(\frac{|\varphi'_{0,1}(a)|+\varphi_{0,1}(a)}{K^2(v,a)}\right).
	$$
	Since
	$$
		K^2(v,a)\geq \frac{v_*^4}{\sigma^6}\exp\{-2\beta_1-2|a|\}
	$$
	then
	$$
		|q'_z(v,z)|\leq\frac{(1+\sigma)\sigma^6}{v_*^4\sqrt{2\pi}}
		\E\exp\{2\beta_1+2|a|-a^2(v,z)/2\}(1+|a|)
	$$
	We can write the follows
		\begin{align*}
			(1+|a|)\exp\{2|a|-a^2/2\}&=(1+|a|)\exp\{-a^2/4\}\exp\{a^2/8+2|a|-a^2/8\}
			\\
			&\leq e^8\exp\{-a^2/4\}(1+|a|)\exp\{-a^2/8\}
			\\
			&\leq c\exp\{-a^2/4\},\;\;\; c=e^8\sup\limits_{x}(1+x)e^{-x^2/8}.
		\end{align*}
	Hence
	\begin{equation*}
			|q'_z(v,z)|\leq\frac{c^1\sigma^7}{v_*^4}\E\exp\{2\beta_1-a^2(v,z)/4\},\;\; 
			c^1=2c/\sqrt{2\pi}.
	\end{equation*}
	Thus we have obtained the estimate of $q'_z(v,z)$ similar to the estimate of $q(v,z)$. Taking into account that in this case the constant $c_1=\max (\E e^{2\beta_1},(\E e^{4\beta_1})^{1/2})$ also carries no information about $\sigma$ we can analogically write the estimate for $c_*^1=c^1(1+c_2)c_1$ and $\kappa=\delta/8$
	$$
		|q'_z(v,z)|\leq\frac{c_*^1\sigma^7}{v_*^4}\exp\left\lbrace -\frac{\kappa}{\sigma^2 v} (\ln(z/v))^2\right\rbrace \;\;\;\forall z>v.
	$$
	Let $\tilde{c}=\max (c_*,c_*^1)$ then we obtain the desired estimates.		
	\end{proof}

	\begin{proof} (Proposition \ref{shish_density_estim_part2})
	
	We need to look at the asymptotic behavior $q_v(v,z)$ when $v\rightarrow 0$ and $z>0$ is fixed. 
	Let
		\[
			L(v,a(t,z))=\frac{\varphi_{0,1}(a)}{K(v,a)}
		\]
	First of all, we prove that
		$$
		q_v(v,z)=\frac{\partial}{\partial v}\E(L(v,a(v,z)))=\E\frac{\partial}{\partial v}L(v,a(v,z)).
		$$
	Let 
		$$
			\tilde{L}(v,z)=L(v,a(v,z))
		$$
	and
		$$
		\xi_{\Delta}(z)=\frac{\tilde{L}(v+\Delta,z)-\tilde{L}(v,z)}{\Delta}.
		$$
	Then
		$$
		q_v(v,z)=\frac{q(v+\Delta,z)-q(v,z)}{\Delta}=\E\xi_{\Delta}(z).
		$$
	Be the derivative definition we obtain 
		$$
		\xi_{\Delta}(z)\xrightarrow[\Delta\rightarrow 0]{}\frac{\partial}{\partial v}\tilde{L}(v,z).
		$$
	Moreover $\forall\Delta>0$
	\begin{align*}
	|\xi_{\Delta}(z)|&=\big|\frac{1}{\Delta}\int_{v}^{v+\Delta}\frac{\partial}{\partial u}
	\tilde{L}(u,z)du\big|
	\leq\sup\limits_{z\in\bbr^+}\sup\limits_{0\leq v\leq 1}
	\big|\frac{\partial}{\partial v}\tilde{L}(v,z)\big|:=\xi^*(z)
	\end{align*}
	and
	$$
	\E\xi^*(z)<\infty.
	$$
	Hence we can use Lebesgue's theorem
	$$
	\lim\limits_{\Delta\rightarrow 0}\E\xi_{\Delta}(z)=
	\E\lim\limits_{\Delta\rightarrow 0}\xi_{\Delta}=\E \frac{\partial}{\partial v}\tilde{L}(v,z).
	$$
	Thus we obtain that
		\[
			q_v(v,z)=\E\left(\frac{\partial L(v,a(v,z))}{\partial v}\right).		
		\]
	
	Next we need to calculate
	$$
	\frac{\partial L(v,a(v,z))}{\partial v}=L'_v(v,a)+L'_a(v,a)a'_v.
	$$
	Introduce the notation	
		$$
			F(v,a)=\int_0^v\exp\{\sigma W_u-\sigma uW_1-\sigma^2 u/2+\sigma ua\}du
		$$
	and
		$$
			P(v,a)=\exp\{\sigma W_v-\sigma vW_1-\sigma^2 v/2+\sigma va\}.
		$$
	Find $a'_v$ from the equality $z=F(v,a(v,z))$. Differentiating by $v$ we obtain
	$$
	0=F_v(v,a)+F_a(v,a)a'_v
	$$
	hence
	$$
	a'_v=-\frac{F_v(v,a)}{F_a(v,a)}=-\frac{P(v,a)}{K(v,a)}
	$$
	It's clear that
	\begin{align*}
	L'_a(v,a)&=\frac{\varphi'_{0,1}(a)K(v,a)-\varphi_{0,1}(a)K'_a(v,a)}{K^2(v,a)}
	\\	
	L'_v(v,a)&=-\frac{\varphi_{0,1}(a)K'_v(v,a)}{K^2(v,a)},
	\end{align*}
	at that $K'_v(v,a)=\sigma vP(v,a)$.
	
	Then
	\begin{align*}
	q'_v(v,z)&=\E\left[-\frac{\varphi_{0,1}(a)K'_v(v,a)}{K^2(v,a)}-\frac{P(v,a)}{K(v,a)}
	\left( \frac{\varphi'_{0,1}(a)K(v,a)-\varphi_{0,1}(a)K'_a(v,a)}{K^2(v,a)}\right)\right]
	\\[10mm]
	&=\E\left[\frac{-\varphi_{0,1}(a)\sigma v P(v,a)K(v,a)-P(v,a)\varphi_{0,1}(a)K(v,a)+
	P(v,a)\varphi_{0,1}(a)K'_a(v,a)}{K^3(v,a)}\right]
	\\[10mm]
	&=\E\left[P(v,a)\varphi_{0,1}(a)\left(\frac{-\sigma v K(v,a)+a K(v,a)+K'_a(v,a)}{K^3(v,a)}\right)\right]
	\end{align*}
	We can estimate
	\begin{align*}
	|q_v(v,z)| & \leq \E\left( (1+|a|)\varphi_{0,1}(a)P(v,a)\frac{(\sigma v+1)K(v,a)+|K'_a(v,a)|}{K^3(v,a)}\right)
	\end{align*}
	Seeing that $K(v,a)=\sigma\int\limits_0^v u P(u,a)du$ and
	$$
		K'_a(v,a)=\sigma^2\int\limits_0^v u^2 P(u,a)du \leq \sigma v K(v,a)
	$$
	we have
	$$
		|q_v(v,z)| \leq \E\left( (1+|a|)\varphi_{0,1}(a)P(v,a)\frac{(\sigma v+1)}{K^2(v,a)}\right).
	$$
	We can do the following transformation
	$$
	(1+|a|)\varphi_{0,1}(a)=(1+|a|)\frac{1}{\sqrt{2\pi}}\exp\{-a^2/2\}\leq c e^{-a^2/4},
	$$
	where $c=\frac{1}{\sqrt{2\pi}}\sup\limits_{x}(1+x)\exp\{-x^2/4\}$. Then
	\begin{equation}\label{chern_q_v_eq1_part2}
	|q_v(v,z)| \leq (1+\sigma)\E\left( \frac{P(v,a)}{K(v,a)}\cdot\frac{c e^{-a^2/4}}{K(v,a)}\right).
	\end{equation}
	Next we will estimate separately $P(v,a)/K(v,a)$ and $1/K(v,a)$. So,
	$$
		\frac{P(v,a)}{K(v,a)}=\frac{P(v,a)}{\sigma\int\limits_0^v u P(u,a)du}=
		\frac{1}{\sigma\int\limits_0^v u \frac{P(u,a)}{P(v,a)}du}.	
	$$
	Consider
	\begin{align*}
		\frac{P(u,a)}{P(v,a)}&=\exp\{\sigma W_u-\sigma uW_1-\sigma^2 u/2+\sigma ua 
		-\sigma W_v+\sigma vW_1+\sigma^2 v/2-\sigma va\}
		\\[5mm]
		&=\exp\{-\sigma (W_v-W_u)+\sigma(v-u)W_1+\sigma^2(v-u)/2-\sigma(v-u)a\}.
	\end{align*}
	Then
	$$
	\sigma\int\limits_0^v u \frac{P(u,a)}{P(v,a)}du=
	\sigma\int\limits_0^v u \exp\{-\sigma (W_v-W_u)+\sigma(v-u)W_1+\sigma^2(v-u)/2-\sigma(v-u)a\}du	
	$$
	Make the change of variable $t=v-u$ 
	$$
	\sigma\int\limits_0^v u \frac{P(u,a)}{P(v,a)}du=
	\sigma\int\limits_0^v (v-t) \exp\{-\sigma \bar{W_t}+\sigma t W_1+\sigma^2 t/2-\sigma t a\}dt,
	$$
	here $\bar{W_t}=W_v-W_{v-t}$. Next we make the change of variable $s=t\sigma^2$ to use the scale invariance property of Wiener process.
	\begin{align*}
	\sigma\int\limits_0^v u \frac{P(u,a)}{P(v,a)}du=\frac{1}{\sigma^3}
	\int\limits_0^{\sigma^2 v} (\sigma^2v-s)\exp\{- W^*_s+sW_1/\sigma + s/2- s a/\sigma \} ds,
	\end{align*}
	where $W^*_s=\sigma\bar{W}_{s/\sigma^2}$. Find the lower bound for the last expression.
	\begin{align*}
	\sigma\int\limits_0^v u \frac{P(u,a)}{P(v,a)}du &\geq \frac{1}{\sigma^3}
	\int\limits_0^{v_*} (v_*-s)ds \exp\{-\max\limits_{0\leq s\leq 1} W^*_s-|W_1| -|a|\}
	\\[5mm]
	&\geq \frac{v_*^2}{\sigma^3} \exp\{-\max\limits_{0\leq s\leq 1} W^*_s-|W_1| -|a|\}.
	\end{align*}
	Thus
	$$
		\frac{P(v,a)}{K(v,a)}\leq \frac{\sigma^3}{v_*^2}\exp\{-\max\limits_{0\leq s\leq 1} W^*_s-|W_1| -|a|\}
	$$
	
	Consider analogically
	\begin{align*}
	K(v,a)&=\sigma\int\limits_0^v u \exp\{\sigma W_u-\sigma u W_1-\sigma^2 u/2+\sigma u a\}du
	\\[3mm]
	&=\frac{1}{\sigma^3}\int\limits_0^{\sigma^2 v} s\exp\{\hat{W}_s-sW_1/\sigma - s/2+s a/\sigma \} ds
	\\[3mm]
	&\geq \frac{v_*^2}{\sigma^3} \exp\{-\max\limits_{0\leq s\leq 1} \hat{W}_s-|W_1| -|a|\},
	\end{align*}
	where $\hat{W}_s=\sigma W_{s/\sigma^2}$. Finally we obatin
	
	\begin{align*}
	|q_v(v,z)| &\leq (1+\sigma)\E\left( \frac{P(v,a)}{K(v,a)}\cdot\frac{c e^{-a^2/4}}{K(v,a)}\right)
	\\[3mm]
	&\leq \frac{\sigma^7}{v_*^4} \E \exp\{\max\limits_{0\leq s\leq 1}W^*_s+2|W_1|+2|a|+
	\max\limits_{0\leq s\leq 1} \hat{W}_s-a^2/4\}
	\end{align*}
	Introduce notation $\gamma^*=\gamma_1+\gamma_2+\gamma_3$ with conponents
	$$
	\gamma_1=\max\limits_{0\leq s\leq 1}W^*_s\;\;\;\;
	\gamma_2=\max\limits_{0\leq s\leq 1} \hat{W}_s\;\;\;\;
	\gamma_3=2|W_1|
	$$
	It's clear that 
	$$
	\E\exp\{N\gamma^*\}<+\infty,\;\;\;\forall N
	$$ 
	since
	\begin{align*}
	\E e^{\gamma_1}e^{\gamma_2}e^{\gamma_3}&\leq \left(\E e^{2\gamma_1}\right)^{1/2}
	\left(\E e^{2\gamma_2}e^{2\gamma_3}\right)^{1/2}
	\\[3mm]
	&\leq \left(\E e^{2\gamma_1}\right)^{1/2}\left(\E e^{4\gamma_2}\right)^{1/4}
	\left(\E e^{4\gamma_3}\right)^{1/4}.
	\end{align*}
	and  $\E\exp\{N\max |W_t|\}<+\infty$. Then for $c^*=c_1c\left(\E e^{2\gamma_1}\right)^{1/2}\left(\E e^{4\gamma_2}\right)^{1/4}\left(\E e^{4\gamma_3}\right)^{1/4}$
	\begin{align*}
	|q_v(v,z)| &\leq \frac{\sigma^7 c^*}{v_*^4} \E \exp\{\gamma^*+2|a|-a^2/8-a^2/8\}
	\\[3mm]
	&\leq \frac{\sigma^7 \hat{c}}{v_*^4} \E \exp\{\gamma^*-a^2/8\},
	\end{align*}
	here $\hat{c}=c^*\exp\{\sup\limits_{x}(2x-x^2/8)\}$.
	
	Next we obtain the lower bound for $a(v,z)$ similar to the proof of Proposition\ref{shish_density_estim_part1}
	\begin{equation}\label{estimate_of a_part2}
			|a(v,z)|\geq \frac{1}{\sigma v}\left(\ln(z/v)-\sigma\sqrt{v}\beta_*\right),
		\end{equation}
	where $\beta_*=\max\limits_{0\leq u\leq v}|W_u|/\sqrt{v}+|W_1|$ and $\E\exp\{N\beta_*<+\infty\}$.
	Also we represent the expectation as	
	 	\begin{align*}\label{ineq with indicators_part2}
	 		\E\exp\{\gamma^*-a^2(v,z)/8\}&=\E\exp\{\gamma^*-a^2(v,z)/8\}
	 		(\Chi_{\{\beta_*\leq L\}}+\Chi_{\{\beta_*> L\}})\\
	 		&\leq \E\exp\{\gamma^*-a^2(v,z)/8\}\Chi_{\{\beta_*\leq L\}}+
	 		\E\exp\{\gamma^*\}\Chi_{\{\beta_*> L\}}\\
	 		&\leq \E e^{\gamma^*}\exp\{-a^2(v,z)/8\}\Chi_{\{\beta_*\leq L\}}+
	 		(\E e^{2\gamma^*})^{1/2}(\P(\beta_*> L))^{1/2}
	 	\end{align*}
	Let $c_2=\max (\E e^{\gamma^*},(\E e^{2\gamma^*})^{1/2})$. Using Markov's inequality we obtain
		\[
			\P(\beta_*> L)=\P(e^{\delta_*\beta^2_*}>e^{\delta_* L^2})\leq 
			\exp\{-\delta_* L^2\}\E\exp\{\delta_*\beta^2_*\}=c_3^2\exp\{-\delta_* L^2\}.
		\]
	Then
	\begin{align*}
	 		\E\exp\{\gamma^*-a^2(v,z)/8\}\leq c_2(\exp\{-a^2(v,z)/8\}\Chi_{\{\beta_*\leq L\}}+
	 		c_3\exp\{-\delta_* L^2/2\})
	 \end{align*}
	If $\beta_*\leq L$ then inequality 	(\ref{estimate_of a_part2}) will take the form
		\[
			|a(v,z)|\geq \frac{1}{\sigma v}\left(\ln(z/v)-\sigma\sqrt{v}\beta_*\right)
			\geq \frac{1}{\sigma v}\left(\ln(z/v)-\sigma\sqrt{v} L\right).
		\]
	The constant $L$ must be chosen so that $\ln(z/v)-\sigma L$>0. Let
		\[
			L=\frac{1}{2\sigma\sqrt{v}}\ln\left( \frac{z}{v} \right).
		\]
	Then 
	$$
	|a(v,z)|>\frac{1}{2\sigma v}\ln(z/v)
	$$ 
	and 
		\begin{align*}
		 \E\exp\{\gamma^*-a^2(v,z)/8\}&\leq c_2\left(\exp\left\lbrace -\frac{1}{32\sigma^2 v^2}(\ln(z/v))^2\right\rbrace+c_3\exp\left\lbrace -\frac{\delta_*}{8\sigma^2 v} (\ln(z/v))^2\right\rbrace\right)
		 \\
		 &\leq c_2(1+c_3)\exp\left\lbrace -\frac{\delta_*}{8\sigma^2 v} (\ln(z/v))^2\right\rbrace .
		\end{align*}
	Thus for the constants $\tilde{c}=c_2(1+c_3)\hat{c}$ and $\kappa=\delta_*/8$ which not depend on $\sigma$ we have the following estimate for the derivative of density w.r.t. $v$
	$$
		q_v(v,z)\leq \frac{\tilde{c}\sigma^7}{v_*^4}\exp\left\lbrace -\frac{\kappa}{\sigma^2 v} (\ln(z/v))^2\right\rbrace.
	$$
	
	\end{proof}


%
%



\end{document}